\newcommand{\lovbf}{\ensuremath{\textbf{\textup{LO}}_\textbf{\textup{v}}}}
\newcommand{\pplovbf}{\ensuremath{\textbf{\textup{LO}}_\textbf{\textup{PP}}}}
\newcommand{\lov}{\ensuremath{\textup{LO}_\textup{v}}}
\newcommand{\lopp}{\ensuremath{\textup{LO}_\textup{PP}}}
\title{\texorpdfstring{\lov}{LOv}-Calculus: A Graphical Language for Linear Optical Quantum Circuits}
\author{~}{~}{}{}{}
\author{Alexandre Cl\'ement}{Universit\'e de Lorraine, CNRS, Inria, LORIA, F-54000 Nancy, France \and \url{https://members.loria.fr/AClement} }{alexandre.clement@loria.fr}{https://orcid.org/0000-0002-7958-5712}{}
\author{Nicolas Heurtel}{Quandela, 7 Rue Léonard de Vinci, 91300 Massy, France}{nicolas.heurtel@quandela.com}{}{}
\author{Shane Mansfield}{Quandela, 7 Rue Léonard de Vinci, 91300 Massy, France}{shane.mansfield@quandela.com}{}{}
\author{Simon Perdrix}{Universit\'e de Lorraine, CNRS, Inria, LORIA, F-54000 Nancy, France \and \url{https://members.loria.fr/SPerdrix} }{simon.perdrix@loria.fr}{https://orcid.org/0000-0002-1808-2409}{}
\author{Benoît Valiron}{Université Paris-Saclay, Inria, CNRS, ENS
  Paris-Saclay, CentraleSupélec, LMF, 91190,
Gif-sur-Yvette, France.}{benoit.valiron@universite-paris-saclay.fr}{}{}
\authorrunning{Cl\'ement, Heurtel, Mansfield, Perdrix and Valiron} 
\keywords{Quantum Computing, Graphical Language, Linear Optical Circuits, Linear Optical Quantum Computing, Completeness.} 
\tikzstyle{diamant}=[diamond, fill=couleurdefond, draw=black]
\tikzstyle{newe}=[rectangle, fill={gray!15}, draw=black, tikzit shape=rectangle, inner sep=0.2em]
\tikzstyle{cercle}=[circle, fill=couleurdefond, draw=black]
\tikzstyle{scercle}=[circle, fill=couleurdefond, draw=black, tikzit fill=white, inner sep=0.1em]
\tikzstyle{cartouche}=[rounded rectangle, fill=couleurdefond, draw=black]
\tikzstyle{neg}=[rounded rectangle, fill=couleurdefond, draw=black, execute at end node={$\neg$}]
\tikzstyle{sneg}=[rounded rectangle, fill=couleurdefond, draw=black, execute at end node={$\neg$}, scale=0.8]
\tikzstyle{negserie}=[rounded rectangle, fill=couleurdefond, draw=black, execute at end node={\footnotesize$\star\star$}]
\tikzstyle{diagrammevide}=[rectangle, fill=couleurdefond, draw=black, inner sep=1.25em, borddiagrammevide, tikzit shape=rectangle]
\tikzstyle{mdiagrammevide}=[rectangle, fill=couleurdefond, draw=black, inner sep=0.75em, sborddiagrammevide, tikzit shape=rectangle]
\tikzstyle{msdiagrammevide}=[rectangle, fill=couleurdefond, draw=black, inner sep=0.7em, msborddiagrammevide, tikzit shape=rectangle]
\tikzstyle{sdiagrammevide}=[rectangle, fill=couleurdefond, draw=black, inner sep=0.5em, sborddiagrammevide, tikzit shape=rectangle]
\tikzstyle{xsdiagrammevide}=[rectangle, fill=couleurdefond, draw=black, inner sep=0.4em, xsborddiagrammevide, tikzit shape=rectangle]
\tikzstyle{bs}=[shape=beam, fill=couleurdefond, draw, inner sep=0.25em, thick, tikzit fill=white]
\tikzstyle{sbs}=[shape=beam, fill=couleurdefond, draw, inner sep=0.2em, thick, tikzit fill=white]
\tikzstyle{npbs}=[shape=beam, horizontal fill={{npbsmoitiebasse}{npbsmoitiehaute}}, draw, inner sep=0.25em, thick, tikzit fill={rgb,255: red,128; green,128; blue,128}]
\tikzstyle{npbsalenvers}=[shape=beam, horizontal fill={{npbsmoitiehaute}{npbsmoitiebasse}}, draw, inner sep=0.25em, thick, tikzit fill={rgb,255: red,128; green,128; blue,128}]
\tikzstyle{snpbs}=[shape=beam, horizontal fill={{npbsmoitiebasse}{npbsmoitiehaute}}, draw, inner sep=0.2em, thick, tikzit fill={rgb,255: red,128; green,128; blue,128}]
\tikzstyle{snpbsalenvers}=[shape=beam, horizontal fill={{npbsmoitiehaute}{npbsmoitiebasse}}, draw, inner sep=0.2em, thick, tikzit fill={rgb,255: red,128; green,128; blue,128}]
\tikzstyle{cnot}=[shape=circle, draw, path picture={ 
\tikzstyle{thickcnot}=[shape=circle, draw, thick, path picture={ 
\tikzstyle{boite22}=[fill=white, draw=black, shape=rectangle, minimum height=1cm, minimum width=0.5cm]
\tikzstyle{boite15}=[fill=white, draw=black, shape=rectangle, minimum height=0.7cm, minimum width=0.5cm]
\tikzstyle{boite2}=[fill=white, draw=black, shape=rectangle, minimum height=0cm, minimum width=0cm]
\tikzstyle{snegpotentiel}=[fill=couleurdefond, draw=black, shape=rounded rectangle, inner sep=0.25em, tikzit fill={rgb,255: red,191; green,191; blue,191}, execute at end node={\footnotesize$\star$}]
\tikzstyle{negpotentiel}=[fill=couleurdefond, draw=black, shape=rounded rectangle, tikzit fill={rgb,255: red,191; green,191; blue,191}, execute at end node={$\star$}]
\tikzstyle{token}=[fill=black, draw=black, shape=circle, inner sep=0.1em]
\tikzstyle{whitetoken}=[fill=white, draw=black, shape=circle, inner sep=0.1em]
\tikzstyle{boitePBS}=[fill=white, draw=gray, thick, shape=rectangle, rounded corners=3pt, minimum height=0.6cm, inner sep=0.1em, minimum width=0.5cm]
\tikzstyle{boitePBS2}=[fill=white, draw=gray, thick, shape=rectangle, rounded corners=3pt, minimum height=0.55cm, inner sep=0.1em, minimum width=0.5cm]
\tikzstyle{sgene}=[fill={gray!30}, draw=black, shape=rounded rectangle, rounded rectangle east arc=0pt, minimum height=0.5cm, inner sep=0em, minimum width=0cm, scale=0.8]
\tikzstyle{sdetector}=[fill={gray!30}, draw=black, shape=rounded rectangle, rounded rectangle west arc=0pt, minimum height=0.5cm, inner sep=0em, minimum width=0cm, scale=0.8]
\tikzstyle{xsgene}=[fill={gray!30}, draw=black, shape=rounded rectangle, rounded rectangle east arc=0pt, minimum height=0.5cm, inner sep=0em, minimum width=0cm, scale=0.67]
\tikzstyle{xsdetector}=[fill={gray!30}, draw=black, shape=rounded rectangle, rounded rectangle west arc=0pt, minimum height=0.5cm, inner sep=0em, minimum width=0cm, scale=0.67]
\tikzstyle{PolRot}=[fill={gray!30}, draw=black, shape=rectangle, minimum height=0.5cm, inner sep=0.1em, minimum width=0.1cm]
\tikzstyle{PolRotrouge}=[fill={red!3}, draw=red, shape=rectangle, minimum height=0.5cm, inner sep=0.1em, minimum width=0.1cm, tikzit fill=red, execute at begin node={\textcolor{red}\bgroup}, execute at end node={\egroup}]
\tikzstyle{PhS}=[fill=white, draw=black, shape=rectangle, minimum height=0.5cm, inner sep=0.1em, minimum width=0.1cm]
\tikzstyle{gene}=[fill={gray!30}, draw=black, shape=rounded rectangle, rounded rectangle east arc=0pt, minimum height=0.5cm, inner sep=0em, minimum width=0cm]
\tikzstyle{detector}=[fill={gray!30}, draw=black, shape=rounded rectangle, rounded rectangle west arc=0pt, minimum height=0.5cm, inner sep=0em, minimum width=0cm]
\tikzstyle{cartoucherouge}=[rounded rectangle, fill={red!55!white}, draw=black, tikzit fill=red]
\tikzstyle{cartouchebleu}=[rounded rectangle, fill={blue!33!white}, draw=black, tikzit fill=blue]
\tikzstyle{diamantrouge}=[diamond, fill={rgb,255: red,255; green,115; blue,115}, draw=black]
\tikzstyle{diamantbleu}=[diamond, fill={rgb,255: red,171; green,171; blue,255}, draw=black]
\tikzstyle{new}=[-]
\tikzstyle{tirets}=[-, draw=black, dashed]
\tikzstyle{noire}=[-, draw=black]
\tikzstyle{ep}=[-, draw=black]
\tikzstyle{longdashed}=[-, dash pattern=on 5pt off 5pt]
\tikzstyle{pointilles}=[-, draw=black, dotted]
\tikzstyle{grise}=[-, draw={rgb,255: red,191; green,191; blue,191}]
\tikzstyle{rouge}=[-, draw=red]
\tikzstyle{bleue}=[-, draw=bleu, tikzit draw=blue]
\tikzstyle{verte}=[-, draw={rgb,255: red,0; green,230; blue,0}]
\tikzstyle{borddiagrammevide}=[-, dash pattern=on 0.5em off 0.5em on 0.5em off 0.5em on 0.5em off 0em]
\tikzstyle{msborddiagrammevide}=[-, dash pattern=on 0.28em off 0.28em on 0.28em off 0.28em on 0.28em off 0em]
\tikzstyle{sborddiagrammevide}=[-, dash pattern=on 0.2em off 0.2em on 0.2em off 0.2em on 0.2em off 0em]
\tikzstyle{xsborddiagrammevide}=[-, dash pattern=on 0.1em off 0.1em on 0.15em off 0.1em on 0.1em off 0em]
\tikzstyle{mediumdash}=[-, dash pattern=on 2pt off 2pt]
\tikzstyle{rougefonce}=[-, draw={red!50!black}, tikzit draw={rgb,255: red,136; green,0; blue,0}]
\renewcommand{\H}{\textup H}
\newcommand{\V}{\textup V}
\newcolumntype{C}{>{$}c<{$}}  
\newcolumntype{R}{>{$}r<{$}}  
\newcolumntype{L}{>{$}l<{$}}  
\DeclareFontShape{OMX}{cmex}{m}{n}{
  <-7.5> cmex7
  <7.5-8.5> cmex8
  <8.5-9.5> cmex9
  <9.5-> cmex10
}{}
\newcommand{\interp}[1]{\left\llbracket #1 \right\rrbracket}
\newcommand{\interps}[1]{\left\llbracket #1 \right\rrbracket}
\newcommand\interph[1]{\interp{#1}_{\mathrm{pp}}}
\newcommand\Hi{\mathcal H}
\newcommand\R{\mathbb R}
\newcommand\CC{\mathbb C}
\newcommand\ii{i}
\newcommand{\scalprod}[2]{\langle #1|#2\rangle}
\newlength{\xlutmvcyp}
\newlength{\negph@wd}
\DeclareRobustCommand{\negphantom}[1]{%
  \ifmmode
    \mathpalette\negph@math{#1}%
  \else
    \negph@do{#1}%
  \fi
}
\newcommand{\negph@math}[2]{\negph@do{$\m@th#1#2$}}
\newcommand{\negph@do}[1]{%
  \settowidth{\negph@wd}{#1}%
  \hspace*{-\negph@wd}%
}
\newlength{\halfnegph@wd}
\DeclareRobustCommand{\halfnegphantom}[1]{%
  \ifmmode
    \mathpalette\halfnegph@math{#1}%
  \else
    \halfnegph@do{#1}%
  \fi
}
\newcommand{\halfnegph@math}[2]{\halfnegph@do{$\m@th#1#2$}}
\newcommand{\halfnegph@do}[1]{%
  \settowidth{\halfnegph@wd}{#1}%
  \hspace*{-0.5\halfnegph@wd}%
}
\newlength{\halfph@wd}
\DeclareRobustCommand{\halfphantom}[1]{%
  \ifmmode
    \mathpalette\halfph@math{#1}%
  \else
    \halfph@do{#1}%
  \fi
}
\newcommand{\halfph@math}[2]{\halfph@do{$\m@th#1#2$}}
\newcommand{\halfph@do}[1]{%
  \settowidth{\halfph@wd}{#1}%
  \hspace*{0.5\halfph@wd}%
}
\newcommand{\lowindex}
{}
\newcommand\eqeqref[1]{\overset{\eqref{#1}}{=}}
\newcommand\eqdeuxeqref[2]{\overset{\eqref{#1}\eqref{#2}}{=}}
\newcommand\eqtroiseqref[3]{\overset{\eqref{#1}\eqref{#2}\eqref{#3}}{=}}
\newlength\traitsdiagrammevide
\newcounter{eqnabc}
\newcounter{eqnABC}
\newenvironment{eqnABC}{\refstepcounter{eqnABC}\equation}{\tag{\Alph{eqnABC}}\endequation}
\newcommand{\Scom}[1]{\ifdefined\commentaire{\begin{color}{green!30!black} \texttt{\small[SP: #1]}\end{color}}\fi}%
\newcommand{\alexandrecom}[1]{\ifdefined\commentaire{\begin{color}{orange!80!black} \texttt{\small[A: #1]}\end{color}}\fi}
\newcommand{\urlalt}[2]{\href{#2}{\nolinkurl{#1}}}
\theoremstyle{definition}
\renewcommand{\ket}[1]{|{#1}\rangle}
\begin{document}

\maketitle

\begin{abstract}
  We introduce the $\lov$-calculus, a graphical language for reasoning about linear optical quantum circuits with so-called vacuum state auxiliary inputs. 
  We present the axiomatics of the language and prove its soundness and completeness: two $\lov$-circuits represent the same quantum process if and only if one can be transformed into the other with the rules of the LO$_\textup{v}$-calculus.
  We give a confluent and terminating rewrite system to rewrite any polarisation-preserving LO$_\textup{v}$-circuit into a unique triangular normal form, inspired by the universal decomposition of Reck \textit{et al.}\ (1994) for linear optical quantum circuits. 
\end{abstract}

\section{Introduction}

\newcommand{\typ}[3][\mathcal H]{{#1}^{(#3)}}

Quantum computing and information processing promise a variety of advantages
over their classical analogues, from the potential for computational speedups
(e.g.\ \cite{grover,shor}) to enhanced security and communication (e.g.\
\cite{BENNETT20147,ekert}).
By encoding information into the states of physical systems that are quantum
rather than classical,
one can then process that information by evolving and manipulating the systems
according to the laws of quantum mechanics.
This opens up the possibility of exploiting non-classical behaviours
available to quantum systems in order to process information in radically new
and potentially advantageous ways.

The development of quantum technologies has proceeded at pace over the past
number of years,
with a variety of different physical supports for quantum information being
pursued.
These include matter-based systems like superconducting circuits, cold atoms,
and trapped ions,
as well as light-based systems, in which information is encoded in photons.
Among these, photons have a privileged role in the sense that regardless of
hardware choice it will eventually be necessary to network quantum processors,
and (as the only sensible support for communicating quantum information)
some quantum information will need to be treated photonically.
Yet, in their own right, photons also offer viable approaches to quantum
computation in the noisy intermediate-scale \cite{klm} and large-scale
fault-tolerant \cite{bartolucci2021fusion} regimes.

The standard unit of quantum information is the quantum bit or qubit, and
photons allow for a rich variety of ways to encode qubits.
However it is also interesting to note that treating photons as informational
units in their own right can be advantageous.
A good example is BosonSampling, originally proposed by Aaronson and Arkhipov
\cite{bosonsampling},
a computational task that is $\# P$-hard but which can be efficiently solved by
interacting photons in an idealised generic linear-optical circuit
in which no qubit encoding need be imposed. 
At present, along with Random Circuit Sampling
\cite{aaronson2016complexitytheoretic,Bouland2019},
this provides one of the two main routes to experimental demonstrations of
quantum computational advantage
\cite{arute2019quantum,zhong2020quantum,wu2021strong,zhong2021phase},
in which quantum devices have been claimed to outperform classical supercomputing capabilities for
specific tasks.

The usual semantics for quantum computation stemming from quantum
mechanics is based on unitary matrices (or unitary operators in
general) over Hilbert spaces. Although this faithfully models the
extensional behaviour of a computation, it fails to address several key
aspects that are of interest when considering the design and
implementation of quantum algorithms. A first limitation is the
intensional description of the computation: an algorithm or quantum
computation in general consists of modular components that are
composed and combined in specific way, and one wants to keep track of
this information. One therefore needs a \emph{language} for coding
these. The other important aspect is the need to specify and verify
the said code. Indeed, classically simulating a quantum process is a
task that is exponentially costly in the size of the system, while running
code on physical devices is expensive. If some limited testing
techniques are available on quantum
systems~\cite{feng2015qpmc,li2020projection-based}, it is however
highly desirable to be able to reason and prove the desired properties
of the code upstream, and rely on \emph{formal methods}.
If text-based high-level languages oriented towards formal methods
have successfully been proposed in the
literature~\cite{green2013quipper,silq,javadiabhari2015scaffcc}, we
aim in this paper to explore a lower-level, graphical language, making
contact with photonic hardware.

Graphical languages for quantum computation have a long history:
since Feyman diagrams~\cite{feynman1965quantum}, graphical
languages for representing (low-level) quantum processes have been considered
as an answer to the limitations of plain unitary matrices.  Quantum
circuits -- the quantum equivalent to classical, boolean circuits --
are an obvious candidate for a graphical language, and indeed, several
lines of research took them as their main object of
study~\cite{green2013quipper,goisync,paykin2017qwire,chareton-qbricks}.
Quantum circuits in particular form a natural medium for describing
the execution flow of a computation.
The main problem with the model of quantum circuits is the lack of
a satisfactory equational presentation. If several attempts have been
made for various
subsets~\cite{cockett2018category,cockett2018categorytof,hutslar2018library,makary2021generators},
none of them provides a complete presentation.

A recent proposal responding to the shortfalls of quantum circuits as a model
is the ZX-calculus \cite{coecke2017picturing}, which, along with its variants
\cite{carette2019szx,backens2019zh,carette2019completeness}, have proved to
be particularly useful for reasoning about qubit quantum mechanics,
for applications such as
quantum circuit optimisation \cite{duncan2020graph,backens2021there},
verification
\cite{duncan2013verifying,garvie2018verifying,hillebrand2011quantum}
and representation e.g.\ for MBQC patterns \cite{duncan2010rewriting} or
error-correction \cite{duncan2010rewriting,beaudrap2017zx}.
However, while ZX-calculus is versatile and provides a welcomed formal
semantics for quantum computation, it remains at an abstract level.

There is therefore a clear interest in developing a graphical language
for quantum photonic processes, especially linear quantum optics,
which is closer to photonic hardware and laboratory operations that
are easily implementable in bulk optics, fibres, or in integrated
photonic circuits.
This would provide a more formal counterpart to software frameworks that have
been proposed for defining and classically simulating such processes to the
extent that it is tractable
\cite{killoran2019strawberry,heurtel2022perceval}.
The need for such a formal language is also evidenced, for example,
by the appeal to diagrams
to concisely illustrate equivalent unitaries in recent work in the Physics
literature \cite{pont2022quantifying}.
Following on the trend for graphical quantum
languages, the $\text{PBS-calculus}$~\cite{alex2020pbscalculus,branciard2021coherent,clement2022minimising} has been
proposed as a first step towards an alternative to ZX dedicated to
linear quantum optical computation (LOQC). The PBS-calculus makes it
possible to reason on a small subset of linear optical components only
acting on the polarisation of a photon. While it is enough to describe
and analyse non causally-ordered computations, it fails short at
expressing other aspects of LOQC typically considered in the Physics
community, such as the phase.

Our goal here is to take a more bottom-up approach and to propose a new
language which formalises the kinds of diagrammatics that are currently in
use in the Physics community.
In practice this can find many uses including for the design, optimisation,
verification, error-correction, and
systematic study of linear optical quantum circuits for quantum information.

Our main contributions are the following.
\begin{itemize}
\item A graphical language for LOQC featuring most of the
  physical apparatuses used in the Physics literature. The language
  comes equipped with an equational theory that is sound and complete with
  respect to the standard semantics of LOQC.

\item A strongly normalising and globally confluent rewrite system and
  normal form for the polarisation-preserving fragment, for which we
  recover the Reck \textit{et al.}\ \cite{Reck1994unitary} decomposition as
  normal form (modulo $0$-angled beam splitters and $0$-angled phase
  shifters) with a novel proof of its uniqueness.

\end{itemize}
Finally, and maybe more importantly, our language makes it possible to
formalise and reason within a common framework on various
presentations of LOQC stemming from parallel research paths.  Our
semantics not only allow us to recover, extend and improve on some key
results in LOQC such as the universal decompositions of Reck
\textit{et al.}\  \cite{Reck1994unitary} and Clements \textit{et al.}\
\cite{Clements2016unitary}, but it also gives a unifying language for
the different formalisms from the literature.

The article is structured as follows.
In \cref{section2}, we present the syntax and the semantics of the
\lov-calculus. The equational theory and its soundness are given in \cref{section3}.
In \cref{section:pp} we present the strongly normalising and globally
confluent rewrite system.
This allows us to prove the completeness of the \lov-calculus in \cref{subsect:completeness}.
Finally, we conclude in \cref{conclusion}.

\begin{figure}[tb]
  \begin{subfigure}[b]{0.61\linewidth}
    \centerline{\tikzfig{triangleetoiles-8modes-star-xs}}
    \caption{Triangular form \cite{Reck1994unitary}.}\label{triangleNF}
  \end{subfigure}
  \begin{subfigure}[b]{0.39\linewidth}
    \tikzfig{rectanglewithoutphasesetoiles-nouveau-xs}
    \caption{Rectangular form \cite{Clements2016unitary}.}\label{rectangleNF}
  \end{subfigure}
  \caption{Triangular and rectangular forms for
    polarisation-preserving circuits. 
\Scom{Est-ce que c'est toujours pertinent de mettre la rectangular form ici?}
}
  \label{fig:canon}
\end{figure}

\section{Linear Optical Quantum Circuits}
\label{section2}

A linear optical quantum computation~\cite{kok2007linear,kok2010introduction} (LOQC)
consists of spatial modes through which photons pass --
which may be physically instantiated by optical fibers, waveguides in
integrated circuits, or simply by paths in free space (bulk optics) --
and operations that act on the spatial and polarisation degrees of freedom of
the photons,
including in particular \emph{beam splitters} (\tikzfig{bs-xs}),
\emph{polarising beam splitters}
(\tikzfig{beamsplitter-xs}), \emph{phase shifters}
(\tikzfig{convtp-phase-shift-xs}), \emph{wave plates}
(\tikzfig{pol-rot-xs}), \emph{pola-negations} \mbox{(\tikzfig{neg-xs})} and
finally the \emph{vacuum state sources} and \emph{detectors}
(\tikzfig{gene-0-xs} and \tikzfig{detector-0-xs}).
Their action and the semantics are described in Section~\ref{sec:sem}.

\subsection{Syntax}

\def\btikzfig#1{$\scalebox{.2}{\tikzfig{#1}}$}
In order to formalise linear optical quantum circuits, we use the formalism of
PROPs~\cite{prop}. A PRO is a strict monoidal category
whose monoid of objects is freely generated by a single $X$:
the objects are all of the form $X\otimes ...\otimes X$, and simply
denoted by $n$, the number of occurrences of $X$. PROs are typically
represented graphically as circuits: each copy of $X$ is represented by a wire
and morphisms by boxes on wires,
so that $\oplus$ is represented vertically and morphism
composition ``$\circ$'' is represented horizontally.
For instance, $D_1$ and $D_2$ represented as \tikzfig{D1-3wires} and
\tikzfig{D2-3wires} can be horizontally composed as
$D_2\circ D_1$, represented by \tikzfig{D1compD2-xs}, and vertically
composed as $D_1\oplus D_2$, represented by\quad $\tikzfig{D1tensD2-xs}$~. A PROP is the symmetric  monoidal analogue of PRO, so it is equipped with a swap \tikzfig{swap-xs}.

\begin{definition}
  \label{def:prop}
  $\lovbf$ is the PROP of \lov-circuits generated by
  $$\tikzfig{gene-0}:0 \to 1\qquad\qquad  \tikzfig{detector-0}:1 \to 0 \qquad\qquad  \tikzfig{convtp-phase-shift}:1 \to 1 \qquad\qquad \tikzfig{pol-rot}:1 \to 1$$ 
  $$
  \tikzfig{bs}:2 \to 2 \qquad\qquad \tikzfig{beamsplitter}:2 \to 2$$
  where $\theta, \varphi\in\mathbb R$.
  When the parameters $\theta$ and $\varphi$ are omitted we take them to be
  equal to $\pi/4$.
  We write $\tikzfig{neg-s}$ as a shortcut notation for
  $\tikzfig{pisur2pisur2}$.
  The tensor of the monoidal structure is denoted with $\oplus$, and the
  identity, swap and empty circuit (unit of $\oplus$) are denoted as
  follows: \scalebox{.9}{$\tikzfig{filcourt-s}:1 \to 1,\quad\tikzfig{swap-s}:2
  \to 2,\quad\tikzfig{diagrammevide-s}:0\to 0$.}
\end{definition}

\begin{figure}[t]
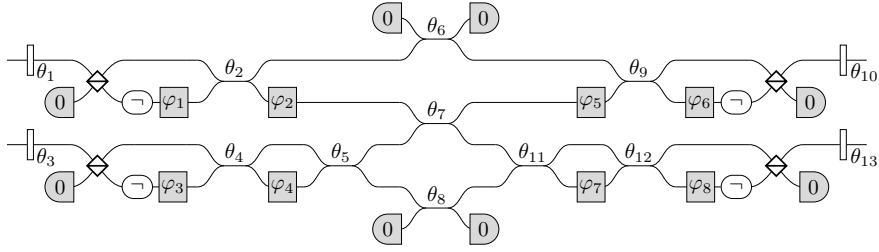

  \scalebox{.8}{\tikzfig{vqe_circuit}}
  \caption{\lov-circuit implementing a variational quantum eigensolver
  \cite{peruzzo2014variational}, an algorithm with applications including
  calculation of ground-state energies in quantum chemistry.}\label{vqecircuit}
\end{figure}
  
\begin{figure}[t]
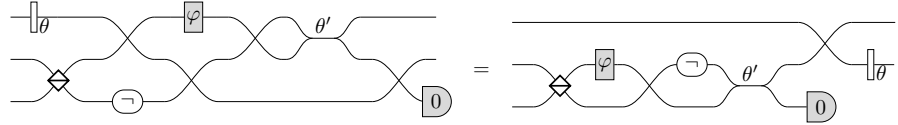

  \begin{multicols}{2}
    \scalebox{.8}{$\begin{array}{rcl} \tikzfig{ex_deformation1} &=& \tikzfig{ex_deformation2}   \end{array}$}
  \end{multicols}
  \caption{Two equivalent representations of the same \lov-circuit.}\label{circuitdeformation}
\end{figure}

\begin{example}
  An example of a linear optical quantum circuit using all of the connectives
  presented in Definition~\ref{def:prop} is shown in Figure~\ref{vqecircuit}.
\end{example}

\begin{remark}
  The axioms of PROPs guarantee that linear optical quantum circuits are
  defined up to deformations: Figure~\ref{circuitdeformation} shows two
  equivalent circuits under the equations of PROPs.
\end{remark}

Among the generators, the {beam splitters} and {phase shifters} are known to preserve the polarisation of the photons, as a consequence, we define a \emph{polarisation-preserving} sub-PRO of $\lov$ as follows. 
\begin{definition}
$\pplovbf$ is the PRO of \emph{polarisation-preserving circuits} generated by beam splitters \tikzfig{bs-xs} and phase shifters
  \tikzfig{convtp-phase-shift-xs}.
\end{definition}
Notice that we define polarisation-preserving circuits as a PRO rather than a PROP, thus they do not include swaps.

\subsection{Single-Photon Semantics}\label{sec:sem}

We will characterise photons by their spatial and polarisation modes.
Spatial modes refer to position, and polarisation can be horizontal (\H) or
vertical (\V).
Note that the quantum formalism admits (normalised complex) superpositions of
both spatial and polarisation modes.
For any
$n\in \mathbb N$, let $M_n= \{\V,\H\}\times[n]$, where
$[n]=\{0,\ldots n-1\}$, be the set of states (spatial and polarisation
modes).
The elements of $M_n$ are
denoted $c_p$ with $c\in \{\V,\H\}$ and $p\in [n]$.
The state space of
a single photon is
$\mathbb C^{{ M_n}} = span(\ket {\V_i},\ket{\H_i} ~|~ i\in
[n])$.
Notice that $\mathbb C^{{ M_0}} = \mathbb C^{\emptyset} =\{0\}$
is the Hilbert space of dimension $0$.
For instance, on 2 spatial modes (i.e. 2 wires), there are four
possible basis states: 
$\H_0, \H_1, \V_0, \V_1$. Indeed, a photon can be on
one of the two wires, while in the horizontal or
vertical polarisation. The state space is then a $4$-dimensional
Hilbert space.
The semantics of a \lov-circuit is defined as follows.

\begin{definition}
\label{singlephotsemantics}
For any \lov-circuit $D:n\to m$, let
$\interps D:\mathbb C^{{ M_n}}\to \mathbb C^{{ M_m}}$ be the linear
map inductively defined by 
Table~\ref{tab:singphotsem}\footnote{There are many possible conventions for
  beam splitters. We have chosen this one as it is a
  symmetric operation with
  good composition properties (see \cref{fig:usefulcons}). The convention for
  the wave plate has been chosen for similar reasons (see
  \cref{interestingequations}).}, and by
$\interps{D_2\circ D_1} = \interps{D_2} \circ \interps {D_1}$,
$\interps{D_1\oplus D_2} = \interps{D_1} \oplus \interps {D_2}$,
where for all
  $f \in\mathbb C^{{ M_n}}\to \mathbb C^{ M_m}$ and
  $g \in \mathbb C^{{ M_{n'}}}\to \mathbb C^{{ M_{m'}}}$,
  $(f\oplus g)(\ket{c_k}) = f(\ket{c_k})$ if $k<n$ and
  $S_{m,m'}(g(\ket{c_{k-n}}))$ if $k\ge n$, with
  $S_{m,m'}:\mathbb C^{{ M_{m'}}}\to   \mathbb C^{{ M_{m+m'}}} = \ket{c_k}\mapsto
  \ket{c_{k+m}}$ a shift of the positions by $m$.
\end{definition}

\begin{table}[t]
\begin{tabular}{@{}cc@{}}
$\begin{array}{rcl}
\interps{~\tikzfig{gene-0}~} &=& \interps{~\tikzfig{detector-0}~} ~=~ \interps{~\tikzfig{diagrammevide-s}~}  ~=~ 0\\[0.5cm]
\interps{~\tikzfig{bs-s}~}&=&\ket{c_p}\mapsto\cos(\theta)\ket{c_p}+i\sin(\theta)\ket{c_{1-p}}\\[0.3cm]
\interps{~\tikzfig{pol-rot}~} &=& \begin{cases} \ket{\V_0} \mapsto \cos(\theta) \ket{\V_0} + i\sin(\theta)\ket{\H_0} \\
\ket{\H_0} \mapsto  \cos(\theta) \ket{\H_0} + i\sin(\theta)\ket{\V_0} \end{cases}\
\end{array}$
&
$\begin{array}{rcl}
\interps{~\tikzfig{convtp-phase-shift-s}~} &=& \ket{c_0} \mapsto  e^{i\varphi} \ket{c_0}\\[.3cm]
\interps{~\tikzfig{beamsplitter-s}~} &=& \begin{cases} \ket{\V_p} \mapsto \ket{\V_p} \\ \ket{\H_p} \mapsto \ket{\H_{1-p}} 
 \end{cases}\\[.5cm]
\interps{~\tikzfig{swap-s}~} &=& \ket{c_p} \mapsto \ket{c_{1-p}}\\[.45cm]
\interps{~\tikzfig{filcourt-s}~} &=& \ket{c_0} \mapsto \ket{c_0}
\end{array}$
\end{tabular}
\caption{Semantics of \lov-circuits.}\label{tab:singphotsem}
\end{table}

\begin{example}
  The negation inverts polarisation:
  $\interps{~\tikzfig{neg-s}~}: \ket{\V_0} \mapsto
    \ket{\H_0}$ and  $\ket{\H_0} \mapsto \ket{\V_0}$.
\end{example}

\begin{remark}
  The semantics of the circuits is sound with respect to
  the axioms of PROPs. In other words two circuits that are equal up to
  deformation have the same semantics. More formally,
  $\interps. : \lovbf \to (\textbf{Hilb}, \oplus, 0)$ is a monoidal
  functor where \textbf{Hilb} is the category of state spaces
  $\mathbb C^{{ M_n}}$ and linear maps.
\end{remark}

\begin{remark}
  All the generators of the \lov-circuits are photon preserving, even
  the vacuum state sources ($\tikzfig{gene-0-xs}$) and detectors ($\tikzfig{detector-0-xs}$).
  Indeed the vacuum state source produces no
  photons, whereas the semantics of the detector corresponds to a
  postselection on the case where no photons are
  detected.
\end{remark}

\begin{definition}
  \label{def:Hsemantics}
For any \lopp-circuit $D:n\to n$, we define $\interph D:\mathbb C^n\to \mathbb C^n$ as the unique  linear map such that $\interp .  \circ \iota = \iota \circ \interph .$ where $\iota :\mathbb C^n \to \mathbb C^{M_n} = \ket k\mapsto \ket {\H_k}$. \end{definition}

For instance $\interph {\tikzfig{bs-xs}} = \left(\begin{smallmatrix}\cos(\theta)&i\sin(\theta)\\i\sin(\theta)&\cos(\theta)\end{smallmatrix}\right)$.

Polarisation-preserving circuits are universal for unitary transformations, this is a direct consequence of the result fo Reck \textit{et al.}~\cite{Reck1994unitary}. Unitary transformations can actually be uniquely represented by \lopp-circuits, as illustrated by the following two cases on 2 and 3 modes, the general case being proved in Section~\ref{section:pp}.

\begin{lemma}\label{existuniqtriangle2modes}
  For any unitary $2\times 2$ matrix $U$, there exist unique
  $\beta_1,\alpha_1\in[0,\pi)$ and $\beta_2,\beta_3\in[0,2\pi)$ such
  that $\interph{\scalebox{.8}{\tikzfig{phasebsphasebeta}}}=U$, 
   and
  $\alpha_1\in\{0,\frac\pi2\}\Rightarrow\beta_1=0$.
\end{lemma}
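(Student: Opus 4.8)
The plan is to reduce the statement to an elementary computation with $2\times2$ unitaries. First I would unfold the semantics: by Definition~\ref{def:Hsemantics} and Table~\ref{tab:singphotsem} the \lopp-circuit in the statement is a product of two diagonal phase matrices and one beam-splitter matrix, and computing it gives, in the convention where $\beta_1$ is carried by the upper wire (the other convention is symmetric), the matrix $U(\beta_1,\alpha_1,\beta_2,\beta_3)=\left(\begin{smallmatrix}e^{i(\beta_1+\beta_2)}\cos\alpha_1 & ie^{i\beta_2}\sin\alpha_1\\ ie^{i(\beta_1+\beta_3)}\sin\alpha_1 & e^{i\beta_3}\cos\alpha_1\end{smallmatrix}\right)$. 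So the lemma asserts that this map is a bijection from the set of tuples obeying the stated ranges and side condition onto the group of $2\times2$ unitaries. I would next record the two redundancies of the raw parametrisation (angular coordinates mod $2\pi$): $U(\beta_1,\alpha_1,\beta_2,\beta_3)=U(\beta_1,\alpha_1+\pi,\beta_2+\pi,\beta_3+\pi)=U(\beta_1+\pi,\pi-\alpha_1,\beta_2,\beta_3+\pi)$; the constraint $\alpha_1\in[0,\pi)$ breaks the first, and $\beta_1\in[0,\pi)$ the second.

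For \emph{existence}, given a unitary $U=\left(\begin{smallmatrix}a&b\\c&d\end{smallmatrix}\right)$ I would use $|a|=|d|$, $|b|=|c|=\sqrt{1-|a|^2}$ and the orthogonality of the rows $a\bar c+b\bar d=0$, and split into three cases. If $b=0$: then $c=0$, $|a|=|d|=1$, and we take $\alpha_1=0$, $\beta_1=0$ (forced by the side condition), $\beta_2=\arg a$, $\beta_3=\arg d$. If $a=0$: then $d=0$, $|b|=|c|=1$, and we take $\alpha_1=\tfrac\pi2$, $\beta_1=0$, reading $\beta_2,\beta_3$ off $b$ and $c$. Otherwise all entries are nonzero: put $\sin\alpha_1=|b|$, which leaves two candidates $\alpha_1\in\{\arccos|a|,\pi-\arccos|a|\}\subseteq(0,\pi)\setminus\{\tfrac\pi2\}$, and for each read $e^{i\beta_2}$ off $b$, $e^{i\beta_3}$ off $d$, $e^{i\beta_1}$ off $c$ (each well-defined since the relevant modulus is correct); a short computation shows the last entry equation $a=e^{i(\beta_1+\beta_2)}\cos\alpha_1$ is then equivalent to $a\bar c+b\bar d=0$, hence holds automatically. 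The two candidates differ by $(\beta_1,\alpha_1,\beta_2,\beta_3)\mapsto(\beta_1+\pi,\pi-\alpha_1,\beta_2,\beta_3+\pi)$ mod $2\pi$, so exactly one has $\beta_1\in[0,\pi)$; we keep that one, for which the side condition is vacuous.

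For \emph{uniqueness}, I would run the same trichotomy backwards. From $U$ the quantity $\sin\alpha_1=|b|$ is forced, so $\alpha_1=0$ if $b=0$, $\alpha_1=\tfrac\pi2$ if $a=0$, and $\alpha_1\in\{\arccos|a|,\pi-\arccos|a|\}$ otherwise, these being the only possibilities in $[0,\pi)$. In the first two cases the side condition forces $\beta_1=0$, after which $\beta_2,\beta_3$ are determined by $a,d$ (resp.\ $b,c$). In the generic case, for each of the two values of $\alpha_1$ the entries $b,d,c$ determine $e^{i\beta_2},e^{i\beta_3},e^{i\beta_1}$ and hence $\beta_2,\beta_3,\beta_1$ uniquely in $[0,2\pi)$; this gives exactly two candidate tuples, related by the move above, and exactly one satisfies $\beta_1\in[0,\pi)$.

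The real content is bookkeeping rather than ideas. The points to get right are: the semantics computation for the actual wire convention of the figure; the verification that the half-open intervals $[0,\pi)$ and $[0,2\pi)$ meet each fibre of the parametrisation in exactly one point --- in particular that the $\alpha_1\leftrightarrow\pi-\alpha_1$ redundancy in the generic case is killed precisely by $\beta_1\in[0,\pi)$; and the observation that the side condition is exactly what is needed at the two degenerate angles $\alpha_1\in\{0,\tfrac\pi2\}$, where the beam splitter is (anti)diagonal and $\beta_1$ can no longer be separated from $\beta_2$ (resp.\ $\beta_3$).
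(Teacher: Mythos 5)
Your proposal is correct and follows essentially the same route as the paper's proof: compute the explicit matrix $\left(\begin{smallmatrix}e^{i(\beta_1+\beta_2)}\cos\alpha_1 & ie^{i\beta_2}\sin\alpha_1\\ ie^{i(\beta_1+\beta_3)}\sin\alpha_1 & e^{i\beta_3}\cos\alpha_1\end{smallmatrix}\right)$, split into the diagonal, anti-diagonal and generic cases, and read the angles off the entries using unitarity. The only (immaterial) difference is the order of elimination --- the paper pins down $\beta_1$ first via the orthogonality relation and then $\alpha_1$ from a tangent, whereas you fix $\alpha_1$ up to the $\alpha_1\leftrightarrow\pi-\alpha_1$ ambiguity first and use $\beta_1\in[0,\pi)$ to break it --- together with your somewhat more explicit bookkeeping of the redundancies of the parametrisation.
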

\begin{proof}
  The proof is given in \cref{preuveexistuniqtriangle2modes}.
\end{proof}

\begin{lemma}\label{existuniqtriangle3modes}
  For any unitary $3\times 3$ matrix $U$, there exist unique angles
  $\alpha_1,\alpha_2,\alpha_3,\beta_1,\beta_2,\beta_3\in[0,\pi)$ and
  $\beta_4,\beta_5,\beta_6\in[0,2\pi)$ such that
  $\interph{\scalebox{.8}{\tikzfig{bsyangbaxterpointeenhaut}}}=U$
  where
$\forall i\in\{1,2,3\},
\alpha_i\in\{0,\frac\pi2\}\Rightarrow\beta_i=0$, 
and where $\alpha_2=0\Rightarrow\alpha_1=0$.
\end{lemma}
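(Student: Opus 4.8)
Read from input to output, the $3$-mode triangular circuit of the statement is a composition of three ``gadgets'' followed by the output phase shifters: a gadget $G_1$ made of the phase shifter $\beta_1$ and the beam splitter $\alpha_1$ on the first two modes, then a gadget $G_2$ (phase shifter $\beta_2$, beam splitter $\alpha_2$) on the last two modes, then a gadget $G_3$ (phase shifter $\beta_3$, beam splitter $\alpha_3$) again on the first two modes, and finally the output phase shifters $\beta_4,\beta_5$ on the first two modes and $\beta_6$ on the third. The plan is to treat $G_3$ together with $\beta_4,\beta_5$ as a two-mode triangular circuit on the first two modes -- exactly the circuit of \cref{existuniqtriangle2modes} -- and $G_1,G_2$ together with $\beta_6$ as an ``outer layer'' that brings the third mode in, and then to bootstrap from \cref{existuniqtriangle2modes}.

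\textbf{Existence.} Given $U$, I would decouple the third mode by Reck's nearest-neighbour elimination: the beam splitters $\alpha_1,\alpha_2$ with their adjacent phase shifters realise two-mode transformations that successively zero the $(3,1)$ and then the $(3,2)$ entry of $U$ (commuting the stray phases through the circuit into $\beta_6$ and into the first two modes as needed), after which unitarity forces the third row and column of the transformed matrix to be $(0,0,e^{i\beta_6})$; the $2\times2$ unitary left on the first two modes is then handled by the existence part of \cref{existuniqtriangle2modes}, producing $\alpha_3,\beta_3,\beta_4,\beta_5$. It remains to normalise the angles: multiples of $2\pi$ are absorbed freely, and each beam-splitter angle is brought into $[0,\pi)$ via the reflection $\alpha\mapsto\pi-\alpha$, the resulting global sign being re-absorbed into the adjacent phase shifter (so that $\beta_1,\beta_2,\beta_3\in[0,\pi)$ suffices) or into an output phase; whenever a parameter is then left free, the conventions $\alpha_i\in\{0,\frac\pi2\}\Rightarrow\beta_i=0$ and $\alpha_2=0\Rightarrow\alpha_1=0$ are used to fix it.

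\textbf{Uniqueness.} This is the substantive part. A direct computation shows that the third row of $U$ depends only on $\alpha_1,\alpha_2,\beta_1,\beta_2,\beta_6$; schematically it is
\[
e^{i\beta_6}\bigl(-\sin\alpha_1\sin\alpha_2\,e^{i(\beta_1+\beta_2)},\ i\cos\alpha_1\sin\alpha_2\,e^{i\beta_2},\ \cos\alpha_2\bigr).
\]
From the moduli of its three entries one determines $\alpha_1$ and $\alpha_2$ up to the reflection $\alpha\mapsto\pi-\alpha$, and the arguments then fix $\beta_1,\beta_2,\beta_6$ and resolve the reflections -- this is, in essence, a two-fold iteration of the bookkeeping already carried out for \cref{existuniqtriangle2modes} (once for the pair $(\alpha_2,\beta_2)$, once for $(\alpha_1,\beta_1)$), with the ranges $[0,\pi)$, $[0,2\pi)$ and the conventions $\alpha_i\in\{0,\frac\pi2\}\Rightarrow\beta_i=0$ ensuring uniqueness at the boundary. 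The one genuine ambiguity is at $\alpha_2=0$, where $\alpha_1$ and $\beta_1$ disappear from this row; correspondingly the circuit then has its two beam splitters $G_1,G_3$ acting on the \emph{same} pair of modes and is over-parameterised, and the convention $\alpha_2=0\Rightarrow\alpha_1=0$ removes precisely this redundancy. Once $\alpha_1,\alpha_2,\beta_1,\beta_2,\beta_6$ are pinned down, $G_1,G_2$ and $\beta_6$ are determined, so removing them from $U$ leaves a determined $2\times2$ unitary on the first two modes, and the uniqueness part of \cref{existuniqtriangle2modes} fixes $\alpha_3,\beta_3,\beta_4,\beta_5$.

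\textbf{Main obstacle.} The routine ingredient is the Givens-rotation-and-phase bookkeeping of the elimination step. The delicate one -- where I expect most of the work to be -- is the degenerate locus: checking that the conventions $\beta_i\in[0,\pi)$, $\beta_i=0$ when $\alpha_i\in\{0,\frac\pi2\}$, and above all $\alpha_2=0\Rightarrow\alpha_1=0$, are exactly what restores injectivity there, a case analysis that mirrors the two-mode lemma but is now carried out across two nested layers.
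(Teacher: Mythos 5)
Your proposal is correct and follows essentially the same route as the paper: the appendix proof likewise pins down the parameters layer by layer from the entries of $U$ that only the outer gadgets can affect (formulated there as $(U_1)_{2,0}=0$ and $(U_2)_{1,0}=0$ rather than as an explicit formula for the untouched row), resolves the degenerate vanishing-entry cases exactly by the conventions $\alpha_i\in\{0,\frac\pi2\}\Rightarrow\beta_i=0$ and $\alpha_2=0\Rightarrow\alpha_1=0$, and delegates existence to Reck \textit{et al}. The case analysis you defer is precisely the body of that proof, and your invocation of \cref{existuniqtriangle2modes} for the inner $2\times2$ block is a harmless modularisation of what the paper redoes inline.
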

\begin{proof}
  The existence of such a canonical form is shown in
  \cite{Reck1994unitary}. The uniqueness can then be derived by
  analysing the possible cases (See
  \cref{preuveexistuniqtriangle3modes}).
\end{proof}

$\lov$-circuits are more expressive than $\lopp$-ones, they not only act on the polarisation but the use of detectors and sources allow the representation of non-unitary evolutions:
For any $\lov$-circuit $D:n\to m$, $\interp D$ is sub-unitary\footnote{$U$ is sub-unitary (see for instance \cite{Sel2004-qpl}) iff $U^\dagger U \sqsubseteq I$, where $\sqsubseteq$ is the L\"owner partial order, i.e. $I-U^\dagger U$ is a positive semi-definite.}. $\lov$-circuits are actually universal for sub-unitary transformations:

\Scom{Déplacer le THM à la fin de la section 5?}
\begin{theorem}[Universality of \lov]
  For every sub-unitary map $U: \mathbb C^{M_n} \to \mathbb C^{M_m} $ (i.e. such that
  $U^\dagger U \sqsubseteq I$) there exists a diagram~$D:n\to m$ s.t. $\interp D =U$. 
\end{theorem}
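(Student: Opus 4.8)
The plan is to reduce the sub-unitary case to the unitary case via a dilation argument, and then use the universality of $\lopp$-circuits (equivalently, Reck et al.) together with the vacuum sources and detectors to realise the desired map. First I would recall the standard fact that any sub-unitary $U:\mathbb C^{M_n}\to\mathbb C^{M_m}$ admits a unitary dilation: there is a unitary $V$ on $\mathbb C^{M_N}$ for some $N\ge\max(n,m)$ (concretely $N = n+m$ suffices) such that $U = P_{\mathrm{out}}\, V\, \iota_{\mathrm{in}}$, where $\iota_{\mathrm{in}}:\mathbb C^{M_n}\hookrightarrow\mathbb C^{M_N}$ is the inclusion onto the first $n$ spatial modes (with the vacuum/zero on the others) and $P_{\mathrm{out}}:\mathbb C^{M_N}\twoheadrightarrow\mathbb C^{M_m}$ is the projection onto the first $m$ spatial modes. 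The reason such a dilation exists is that $I - U^\dagger U \sqsubseteq I$ is positive semi-definite, so it has a positive square root $K = \sqrt{I-U^\dagger U}$, and the block matrix $\left(\begin{smallmatrix} U & * \\ K & * \end{smallmatrix}\right)$ can be completed to a unitary; one must be slightly careful since $U$ is not square, but padding with zero rows/columns to make everything act on $\mathbb C^{M_N}$ handles this. Since $\mathbb C^{M_N} = \mathbb C^2\otimes\mathbb C^N$ has even dimension $2N$, and any unitary on it is in particular a unitary acting on $2N$ ``generalised modes'', the universality results give a circuit for it.

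Next I would build the $\lov$-circuit. The subtlety is that the generators of $\lov$ act on spatial modes carrying a two-dimensional polarisation space, so a wire carries $\mathbb C^2$, and $n$ wires carry $\mathbb C^{M_n}\cong\mathbb C^{2n}$. To realise an arbitrary unitary $V$ on $\mathbb C^{M_N}\cong\mathbb C^{2N}$ I would first reorganise the $2N$ basis states so that consecutive pairs are the $\H,\V$ components of a single wire; the swap generator of $\lovbf$ together with pola-negations $\tikzfig{neg-s}$ and polarising beam splitters lets one route and permute these, so it suffices to realise $V$ up to such a relabelling. Then, by the Reck et al.\ decomposition lifted to $\lovbf$ — more precisely, by applying Lemma~\ref{existuniqtriangle2modes} and its $N$-mode generalisation established in Section~\ref{section:pp} but now on $2N$ modes where a ``mode'' is an $\H$- or $\V$-line — every $2N\times 2N$ unitary is $\interp{D'}$ for some circuit $D'$ that uses beam splitters, phase shifters, polarising beam splitters, wave plates and swaps. (Alternatively one may invoke the Reck decomposition abstractly: any unitary is a product of two-level unitaries, each of which acts on a pair of basis states $\ket{c_i},\ket{c'_j}$, and each two-level unitary is easily seen to be expressible with a constant-size $\lov$-gadget by first routing $\ket{c_i}$ and $\ket{c'_j}$ onto the two polarisations of a single wire, applying a wave plate and phase shifters, and routing back.) Finally, I would prepend $N-n$ vacuum sources $\tikzfig{gene-0-xs}$ and append $N-m$ detectors $\tikzfig{detector-0-xs}$ on the appropriate wires; by Definition~\ref{singlephotsemantics} a source contributes the zero map into its wire's subspace and a detector post-selects on vacuum, which is exactly the projection $P_{\mathrm{out}}$, so the resulting $D:n\to m$ satisfies $\interp D = P_{\mathrm{out}} V \iota_{\mathrm{in}} = U$.

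The main obstacle is bookkeeping rather than conceptual: matching the ``spatial mode'' indexing of $\mathbb C^{M_n}$ against the ``qubit-line'' structure of $\lov$-circuits, and being precise about how sources/detectors implement the inclusion $\iota_{\mathrm{in}}$ and projection $P_{\mathrm{out}}$ at the level of the semantics in Table~\ref{tab:singphotsem}. In particular one should check that placing a source on a wire and later a detector on another wire indeed composes, under $\oplus$ and $\circ$, to the stated partial isometry pattern — this is where the $S_{m,m'}$ shift in Definition~\ref{singlephotsemantics} must be tracked. A secondary point needing care is the dilation step when $n\ne m$: the cleanest route is to first dilate $U$ to a square sub-unitary $\widetilde U$ on $\mathbb C^{M_{\max(n,m)}}$ (by adding zero rows or columns), then dilate $\widetilde U$ to a genuine unitary by doubling the dimension; I would verify that both steps are realised by the source/detector padding described above, so that no new ideas are needed beyond universality of $\lopp$ and the semantics of $\tikzfig{gene-0-xs}$ and $\tikzfig{detector-0-xs}$.
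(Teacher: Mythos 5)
Your plan is correct and is essentially the paper's own argument: dilate the sub-unitary to a unitary on a larger space via the defect operator $\sqrt{I-U^\dagger U}$ (the paper completes the resulting non-square isometry $\bigl(\begin{smallmatrix}V\\A\end{smallmatrix}\bigr)$ to a unitary in one step rather than squaring first), realise that unitary by the universality of the polarisation-preserving fragment on $2N$ lines after splitting each spatial mode's $\H$/$\V$ components with polarising-beam-splitter/negation gadgets, and implement the inclusion and projection with vacuum sources and detectors. The only cosmetic difference is that the paper first transports everything through the isomorphism $\mu_n:\mathbb C^{M_n}\to\mathbb C^{2n}$ before dilating, which is exactly the bookkeeping step you flag as the main obstacle.
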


\begin{proof} The proof given in Appendix \ref{appendix:universality} relies on the normal forms developed in Section \ref{subsect:completeness}. 
\end{proof}

\section{Equational Theory}
\label{section3}

Two distinct \lov-circuits may represent the same quantum
evolution: for instance, composing two negations is equivalent to
the identity.
In order to characterise equivalences of \lov-circuits, we
introduce a set of equations, shown in \cref{axiomsLOphotpres}.  They
capture basic properties of \lov-circuits, such as: detectors and
sources essentially absorbing the other generators (Equations
(\ref{absorptionphaseshiftleft}) to
(\ref{absorptionphaseshiftright})); parameters forming a monoid
(Equations (\ref{phaseaddition}) and (\ref{phase0})); and various
commutation properties (Equations (\ref{commutationphaserotpisur2}),
(\ref{gphpropapbs})).
Notice that there are two equations acting on 3 modes:
\cref{pbspbspbs} and \cref{Eulerscalaires}. \cref{pbspbspbs} is a
variant of the Yang-Baxter Equation \cite{jimbo1989introduction},
whereas \cref{Eulerscalaires} is a
property of decompositions into Euler angles.
Indeed, in 3-dimensional space, the two sides of
this equation
correspond to two distinct decompositions in elementary rotations.

\begin{definition}[$\lov$-calculus]
  \label{def:lov-calc}
  Two \lov-circuits $D_1, D_2$ are
  equivalent according to the rules of the \emph{$\lov$-calculus},
  denoted $\lov\vdash D_1=D_2$, if one can transform $D_1$ into $D_2$
  using the equations given in Figure \ref{axiomsLOphotpres}.  More
  precisely, $\lov\vdash \cdot = \cdot$ is defined as the smallest
  congruence which satisfies the equations of Figure
  \ref{axiomsLOphotpres} in addition to the axioms of PROP.  
\end{definition}

\begin{figure}[tb]
  \centering
\scalebox{.7}{\begin{minipage}{1.2\textwidth}
\begin{multicols}{2}
\newcommand{\nspazer}{-0.2em}
\begin{equation}\label{phaseaddition}\begin{array}{rcl}\tikzfig{convtp-phase-shifts-12}&=&\tikzfig{convtp-phase-shift-1plus2}\end{array}\end{equation}
\vspace{\nspazer}
\begin{equation}\label{phase0}\begin{array}{rcl}\tikzfig{phase-shift0}&=&\tikzfig{filcourt}\end{array}\end{equation}
\vspace{\nspazer}
\begin{equation}\label{bs0}\begin{array}{rcl}\tikzfig{bs0}&=&\tikzfig{filsparalleleslongbs-m}\end{array}\end{equation}
\vspace{\nspazer}
\begin{equation}\label{pbsnnnn}\begin{array}{rcl}\tikzfig{beamsplitternnnn-s}&=&\tikzfig{beamsplitterswap-s}\end{array}\end{equation}
\vspace{\nspazer}
\begin{equation}\label{pbspbs}\begin{array}{rcl}\tikzfig{beamsplitterbeamsplitter-s}&=&\tikzfig{filsparalleleslongs-s}\end{array}\end{equation}
\vspace{\nspazer}
\begin{equation}\label{pbspbspbs}\begin{array}{rcl}\tikzfig{bsbsbspointeenbas-s}&=&\tikzfig{xbsxpointeenbas-s}\end{array}\end{equation}
\vspace{\nspazer}
\begin{equation}\label{pbsnpbsh}\begin{array}{rcl}\tikzfig{beamsplitternhautbeamsplitter-s}&=&\tikzfig{beamsplitternnhaut-s}\end{array}\end{equation}
\vspace{\nspazer}
\begin{equation}\label{halterevide}\begin{array}{rcl}\tikzfig{haltere00}&=&\tikzfig{diagrammevide-s}\end{array}\end{equation}
\vspace{\nspazer}
\begin{equation}\label{absorptionphaseshiftleft}\begin{array}{rcl}\tikzfig{convtp-gene0phs}&=&\tikzfig{gene-0}\end{array}\end{equation}
\vspace{\nspazer}
\begin{equation}\label{absorptionpolrotleft}\begin{array}{rcl}\tikzfig{gene0polrot}&=&\tikzfig{gene-0}\end{array}\end{equation}
\vspace{\nspazer}
\begin{equation}\label{absorptionpbsleft}\begin{array}{rcl}\tikzfig{gene0pbs}&=&\tikzfig{gene0surgene0}\end{array}\end{equation}
\vspace{\nspazer}
\begin{equation}\label{absorptionphaseshiftright}\begin{array}{rcl}\tikzfig{convtp-phsdetector0}&=&\tikzfig{detector-0}\end{array}\end{equation}
\vspace{\nspazer}
\begin{equation}\label{absorptionpolrotright}\begin{array}{rcl}\tikzfig{polrotdetector0}&=&\tikzfig{detector-0}\end{array}\end{equation}
\vspace{\nspazer}
\begin{equation}\label{absorptionpbsright}\begin{array}{rcl}\tikzfig{pbsdetector0}&=&\tikzfig{detector0surdetector0}\end{array}\end{equation}
\vspace{\nspazer}
\begin{equation}\label{commutationphaserotpisur2}\begin{array}{rcl}\tikzfig{convtp-thetapisur2}&=&\tikzfig{convtp-pisur2theta}\end{array}\end{equation}
\vspace{\nspazer}
\begin{equation}\label{gphpropapbs}\begin{array}{rcl}\tikzfig{convtp-phshbpbs}&=&\tikzfig{convtp-pbsphshb}\end{array}\end{equation}
\vspace{\nspazer}
\begin{equation}\label{bsemulable}\begin{array}{rcl}\tikzfig{bs}&=&\tikzfig{bsemulation}\end{array}\end{equation}
\vspace{\nspazer}
\end{multicols}
\vspace{-0.5cm}
\begin{equation}\label{Eulerscalaires}\begin{array}{rcl}~\qquad\qquad\tikzfig{convtp-bsyangbaxterpointeenbas}&=&\tikzfig{bsyangbaxterpointeenhaut}\end{array}\end{equation}
\end{minipage}}
\caption{Axioms of the $\lov$-calculus. The equations are valid for arbitrary parameters $\varphi, \varphi_i, \theta,\theta_i \in \mathbb R$. 
In \cref{Eulerscalaires}, the angles on the left-hand side can take
any value while the right-hand side is given by \cref{existuniqtriangle3modes} (where $U$ is the $\interph.$-semantics of the left-hand side of the equation). 
\label{axiomsLOphotpres}}
\end{figure}

\begin{proposition}[Soundness]\label{soundnessLOphotpres}
  For any two \lov-circuits $D_1$ and
  $D_2$, if $\lov\vdash D_1 = D_2$ then $\interps{D_1}=\interps{D_2}$.
\end{proposition}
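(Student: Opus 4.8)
The plan is to prove soundness by a routine structural induction, exploiting the fact that $\lov\vdash\cdot=\cdot$ is defined as the smallest congruence containing the PROP axioms together with the equations of \cref{axiomsLOphotpres}. Since the semantics $\interp\cdot$ has already been observed to be a monoidal functor (compatible with $\circ$, $\oplus$ and the PROP axioms), it suffices to check that each of the generating equations $D_1=D_2$ of \cref{axiomsLOphotpres} satisfies $\interp{D_1}=\interp{D_2}$; closure under $\circ$, $\oplus$ and the congruence rules is then immediate from functoriality, and the PROP axioms are sound by the remark preceding the statement.

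Concretely, I would proceed equation by equation. For each axiom, both sides are $\lov$-circuits of the same type $n\to m$ with small $n,m$ (at most $3$), so one unfolds the inductive definition of $\interp\cdot$ from \cref{singlephotsemantics} and Table~\ref{tab:singphotsem} and compares the two linear maps on the basis states $\ket{\H_i},\ket{\V_i}$. The equations involving a source or detector (\cref{absorptionphaseshiftleft}--\cref{absorptionpbsright}, \cref{halterevide}) are essentially trivial: any circuit with a $\tikzfig{gene-0-xs}$ on the left (resp.\ $\tikzfig{detector-0-xs}$ on the right) has the zero map on that wire as semantics, so both sides collapse to the same zero-on-that-mode map. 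The monoid equations \cref{phaseaddition}, \cref{phase0} reduce to $e^{i\varphi_1}e^{i\varphi_2}=e^{i(\varphi_1+\varphi_2)}$ and $e^{i0}=1$. The beam-splitter and polarising-beam-splitter identities \cref{bs0}, \cref{pbsnnnn}, \cref{pbspbs}, \cref{pbsnpbsh} and the commutations \cref{commutationphaserotpisur2}, \cref{gphpropapbs}, \cref{bsemulable} are $2$-mode computations with the explicit $2\times 2$ (or $4\times 4$, tracking polarisation) matrices; \cref{bsemulable} in particular uses that a beam splitter can be re-expressed via a polarising beam splitter conjugated by wave plates, which is a direct matrix check. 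The Yang--Baxter-type equation \cref{pbspbspbs} is a $3$-mode permutation-matrix identity (the polarising beam splitters act as controlled swaps of spatial position conditioned on $\H$), verified by tracking where each basis state $\ket{\H_i},\ket{\V_i}$, $i\in\{0,1,2\}$, goes.

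The one genuinely non-routine case is \cref{Eulerscalaires}: here the right-hand side is not given by fixed parameters but is obtained by applying \cref{existuniqtriangle3modes} to $U:=\interph{(\text{left-hand side})}$. So soundness of this equation is not an identity to be checked by computation but holds \emph{by construction}: by \cref{existuniqtriangle3modes} the angles on the right are chosen precisely so that $\interph{(\text{right-hand side})}=U=\interph{(\text{left-hand side})}$, and then one lifts this equality of $\interp\cdot_{\mathrm{pp}}$-semantics to an equality of $\interp\cdot$-semantics using \cref{def:Hsemantics}: since both sides are $\lopp$-circuits (built only from beam splitters and phase shifters), $\interp\cdot=\iota\circ\interp\cdot_{\mathrm{pp}}\circ\iota^{-1}$ on the image of $\iota$, and both sides send $\ket{\V_i}\mapsto\ket{\V_i}$ as no generator touches vertical polarisation, so the two semantics agree on all of $\mathbb C^{M_3}$. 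I expect this lifting-plus-appeal-to-uniqueness step for \cref{Eulerscalaires} to be the only part requiring any care; everything else is bookkeeping. Finally, I would remark that by the congruence and functoriality properties, soundness of the generating equations extends to all of $\lov\vdash\cdot=\cdot$, completing the proof.
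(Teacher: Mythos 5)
Your approach is the same as the paper's: its proof is precisely the one-line observation that semantic equality is a congruence, so it suffices to verify each equation of \cref{axiomsLOphotpres} against the semantics of \cref{singlephotsemantics}, with \cref{existuniqtriangle3modes} disposing of \cref{Eulerscalaires} ``by construction'' exactly as you describe.

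One detail in your treatment of \cref{Eulerscalaires} is wrong as stated: it is not true that ``both sides send $\ket{\V_i}\mapsto\ket{\V_i}$ as no generator touches vertical polarisation''. A beam splitter maps $\ket{\V_p}$ to $\cos(\theta)\ket{\V_p}+i\sin(\theta)\ket{\V_{1-p}}$, so vertically polarised photons are certainly moved between spatial modes. The correct lifting argument is that each generator of a \lopp-circuit acts \emph{identically} on the $\H$- and $\V$-sectors (the same matrix on spatial modes for each fixed polarisation), so for any polarisation-preserving $D$ the matrix of $\interp D$ is block-diagonal with two copies of $\interph D$; hence equality of the $\interph\cdot$-semantics of the two sides of \cref{Eulerscalaires}, which \cref{existuniqtriangle3modes} guarantees, implies equality of their $\interp\cdot$-semantics. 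With that correction your proof is complete and coincides with the paper's.
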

\begin{proof}
  Since semantic equality is a congruence it suffices to check that
  for every equation of \cref{axiomsLOphotpres} both sides have the
  same semantics, which follows from
  \cref{singlephotsemantics} and \cref{existuniqtriangle3modes}.
\end{proof}

\begin{proposition}\label{2piperiodic}
  The rules of the \lov-calculus imply that the parameters are
  $2\pi$-periodic, i.e. for any $\theta,\varphi\in\R$:
  \[\lov\vdash\!\!\! \scalebox{.71}{\tikzfig{bs}\!$=$\!\!\!   \tikzfig{bsthetaplus2pi}}\qquad\lov\vdash\!\!\! \scalebox{.71}{\tikzfig{convtp-phase-shift} \!$=$\!\!\!     \tikzfig{phasephiplus2pi}}\qquad\lov\vdash\!\!\! \scalebox{.71}{\tikzfig{pol-rot} \!$=$\!\!\! 
    \tikzfig{pol-rotthetaplus2pi}}\]
\end{proposition}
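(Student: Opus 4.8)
The plan is to prove the three $2\pi$-periodicity statements by reducing each of them, via the axioms, to Equations (\ref{phaseaddition})--(\ref{phase0}), (\ref{bs0}), and the other elementary rules. The key observation is that a shift by $2\pi$ in any of the three parameters is semantically trivial, so the real work is purely syntactic: we must exhibit an explicit derivation in the $\lov$-calculus.

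First I would treat the phase shifter, which is the easiest case. By Equation (\ref{phaseaddition}) we have $\lov\vdash\tikzfig{phasephiplus2pi} = $ (phase shifter of angle $\varphi$) $\circ$ (phase shifter of angle $2\pi$), so it suffices to show that a $2\pi$-angled phase shifter equals the identity wire. For this I would use Equation (\ref{phaseaddition}) again to write the $2\pi$-phase shifter as two consecutive $\pi$-phase shifters; a $\pi$-phase shifter has $\interph.$-semantics $-1$, and the goal is then to show $\lov$ proves that the square of the $\pi$-phase shifter is $\tikzfig{filcourt-s}$. Here I expect to use one of the commutation or beam-splitter axioms together with Equation (\ref{bs0}) (a $0$-angled beam splitter is two parallel wires) to absorb the scalar; alternatively one can route the $-1$ through Equation (\ref{pbspbs}) or the Euler-angle axiom (\ref{Eulerscalaires}), since global scalars on a single mode can be manufactured and cancelled using pairs of beam splitters. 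The cleanest path is likely: $2\pi$-phase shifter $=$ (using a beam splitter of angle $\theta$ and its inverse, via (\ref{pbspbs}) and (\ref{commutationphaserotpisur2})-type moves) something that collapses to $\tikzfig{filcourt}$ by (\ref{phase0}) and (\ref{bs0}).

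Next, for the beam splitter I would use Equation (\ref{pbspbs}), which states that composing two of the (self-inverse-up-to-sign) beam splitters yields parallel wires, or more precisely I would combine the beam-splitter composition law that is implicit in the axioms: since $\interps{\tikzfig{bs-s}}$ is $\cos\theta\,\ket{c_p} + i\sin\theta\,\ket{c_{1-p}}$, two beam splitters of angles $\theta_1$ and $\theta_2$ compose to one of angle $\theta_1+\theta_2$ up to the scalar behaviour at the endpoints — and this additivity must be derivable from the axioms (it is needed for completeness). Granting beam-splitter additivity, a $(\theta+2\pi)$-beam splitter equals a $\theta$-beam splitter composed with a $2\pi$-beam splitter, and a $2\pi$-beam splitter reduces to a $0$-beam splitter (hence to wires by (\ref{bs0})) by the same additivity applied twice through $\pi$, using that a $\pi$-beam splitter squares to the identity times $-1$ on each wire, with the $-1$'s handled as in the phase-shifter case. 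Finally, the wave plate $\tikzfig{pol-rot}$ is handled identically to the beam splitter: Equation (\ref{commutationphaserotpisur2}) and the fact (needed for soundness) that wave plates also satisfy an additivity law in $\theta$ let us peel off a $2\pi$ wave plate, which reduces to a $0$ wave plate, which is the identity by absorption-type reasoning or by the analogue of (\ref{phase0}) for wave plates derivable from (\ref{commutationphaserotpisur2}) and (\ref{bs0}).

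The main obstacle is the scalar bookkeeping: a $2\pi$ shift decomposes through $\pi$, and a $\pi$-angled component introduces a global $-1$ on a single mode, which the $\lov$-calculus can only eliminate indirectly (there is no generator for scalars). So the crux of each derivation is showing $\lov\vdash$ (two $\pi$-phase shifters on one wire) $= \tikzfig{filcourt}$, i.e. that $-1 \cdot -1 = 1$ is provable; once that single lemma is in hand, the rest is routine application of (\ref{phaseaddition}), (\ref{bs0}), (\ref{commutationphaserotpisur2}) and the additivity of beam splitters and wave plates. I would isolate this scalar lemma first — most likely proving it by embedding the $\pi$-phase shifter into a loop of two beam splitters via (\ref{pbspbs}) and (\ref{gphpropapbs}), so that the two $-1$'s meet and cancel against a $0$-angled beam splitter collapsed by (\ref{bs0}) — and then the three displayed equalities follow uniformly.
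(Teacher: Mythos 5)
You have correctly located the crux — the whole difficulty is to turn a $2\pi$-angled component into the identity syntactically, since the parameters are real numbers and \eqref{phaseaddition}, \eqref{phase0}, \eqref{bs0} only handle sums and the exact value $0$ — but your proposed mechanisms for resolving it do not work, and one of them is circular. Decomposing the $2\pi$-phase shifter into two $\pi$-phase shifters buys nothing: by \eqref{phaseaddition} their composite \emph{is} the $2\pi$-phase shifter, so proving that the square of the $\pi$-phase shifter is the identity is literally the same problem restated. The same circularity affects your beam-splitter argument: additivity of beam splitters (Equation \eqref{additionbs}, which is indeed derivable) only tells you that two $\pi$-beam splitters compose to a $2\pi$-beam splitter; it cannot convert the real number $2\pi$ into $0$. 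And the routes you sketch for the ``scalar lemma'' — \eqref{pbspbs}, \eqref{gphpropapbs}, \eqref{bs0} — involve no angle normalisation at all (\eqref{pbspbs} concerns unparametrised polarising beam splitters), so no combination of them can cancel the residual phase. Finally, appealing to ``additivity must be derivable because it is needed for completeness'' is not admissible here, since this proposition is itself a step on the way to completeness.

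The one axiom that can change the numerical value of a parameter is \eqref{Eulerscalaires}: its right-hand side is \emph{defined} to be the canonical triangular form of \cref{existuniqtriangle3modes}, whose angles lie in $[0,\pi)$ and $[0,2\pi)$ by construction. You mention this axiom only in passing as an ``alternative'', but it is in fact the entire proof. The paper pads the phase shifter with a vacuum source/detector pair (\eqref{halterevide}) and $0$-angled beam splitters (\eqref{bs0}) so that it becomes the left-hand side of \eqref{Eulerscalaires}; applying that axiom replaces $\varphi$ by $\varphi\bmod 2\pi$, and the padding is then removed. The beam splitter is handled the same way through the two-mode consequence \eqref{Eulerbsphasebs} (pad with \eqref{bs0} and \eqref{phase0}, normalise, un-normalise to $\theta+2\pi$), and the wave plate is reduced to the beam-splitter case via \eqref{polrotsfrombs} together with \eqref{halterevide} and \eqref{absorptionpolrotleft} — not via a wave-plate additivity law, which would again be circular. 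To repair your proof you should make \eqref{Eulerscalaires} the centrepiece rather than a footnote, and drop the decomposition through $\pi$ entirely.
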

\begin{proof}
  The proof is given in \cref{preuve2piperiodic}.
\end{proof}

We now state one of our main results: the completeness of the \lov-calculus.
\begin{theorem}[Completeness]\label{completenessLOphotpres}
  For any two \lov-circuits $D_1$ and $D_2$, if $\interps{D_1}=\interps{D_2}$ then $\lov\vdash D_1 = D_2$.
\end{theorem}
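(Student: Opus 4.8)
The strategy is the standard one for completeness of a graphical calculus: reduce the semantic equality $\interps{D_1}=\interps{D_2}$ to a syntactic statement about normal forms. The key tool is the rewrite system of \cref{section:pp}, which I will use to show that every $\lov$-circuit is provably equal to one in a canonical shape whose semantics determines the circuit uniquely. First I would reduce the general case to the polarisation-preserving case. Using the axioms (in particular \cref{bsemulable}, which lets a beam splitter be emulated, together with the commutation rules \cref{commutationphaserotpisur2,gphpropapbs,pbsnpbsh} and the definition $\tikzfig{neg-s}=\tikzfig{pisur2pisur2}$) one can push all wave plates and polarising beam splitters to a standard position, separating the ``spatial'' part of the circuit (built from beam splitters and phase shifters, i.e.\ a $\lopp$-circuit) from a part acting only on polarisation, which itself is conjugated into $\lopp$ form. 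Concretely I would show $\lov\vdash D = E(D)$ where $E(D)$ is obtained by replacing each $\V$/$\H$ mode by a pair of spatial modes, so that reasoning about $\lov$-circuits on $n$ modes reduces to reasoning about $\lopp$-circuits on $2n$ modes; this reduction must be shown to be sound, i.e.\ $\interps{D}$ and $\interph{E(D)}$ carry the same information.

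Next, for $\lopp$-circuits (including those with sources and detectors, which appear after the encoding as extra wires that are created/postselected), I would invoke the rewrite system of \cref{section:pp}: it is strongly normalising and globally confluent, so every circuit rewrites to a unique normal form, and each rewrite step is an instance of an equation of \cref{axiomsLOphotpres} (or a consequence of it, via \cref{2piperiodic} to bring parameters into the required intervals). The normal form for the unitary part is the triangular (Reck) form of \cref{fig:canon}\subref{triangleNF}; \cref{existuniqtriangle2modes} and \cref{existuniqtriangle3modes}, generalised to $n$ modes in \cref{section:pp}, guarantee that the angles appearing in this triangular form are \emph{uniquely} determined by the semantics $\interph{\cdot}$. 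For circuits with sources and detectors one additionally normalises the placement of the $\tikzfig{gene-0-xs}$ and $\tikzfig{detector-0-xs}$ generators (using \cref{halterevide,absorptionphaseshiftleft,absorptionpolrotleft,absorptionpbsleft,absorptionphaseshiftright,absorptionpolrotright,absorptionpbsright} and \cref{bs0}) so that they sit at the extreme left/right, reducing the non-unitary case to the unitary one on a larger number of modes plus a fixed ``frame'' of sources and detectors.

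Putting it together: given $D_1,D_2$ with $\interps{D_1}=\interps{D_2}$, apply the encoding to get $\lopp$-circuits with the same $\interph{\cdot}$-semantics; rewrite both to their unique normal forms $N_1,N_2$; since the normal form is determined by the semantics (uniqueness of the triangular decomposition), $N_1$ and $N_2$ are literally the same circuit, hence $\lov\vdash D_1 = N_1 = N_2 = D_2$. The main obstacle is the reduction step from $\lov$ to $\lopp$: one must check carefully that the axioms suffice to move \emph{every} wave plate, pola-negation and polarising beam splitter into the encoded position without getting stuck, and that the sources and detectors interact correctly with this encoding; in particular the soundness of the encoding (that $\interps{D}$ is recoverable from $\interph{E(D)}$) and the fact that the $\lopp$ rewrite system, designed for the purely spatial fragment, still applies after encoding, are the delicate points. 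The uniqueness of the $n$-mode triangular normal form — the promised ``novel proof'' — is the other substantial ingredient, but it is isolated in \cref{section:pp} and here we only consume it.
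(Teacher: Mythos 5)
Your overall architecture---encode polarisation into doubled spatial modes, normalise the polarisation-preserving part via PPRS, and conclude by uniqueness of the triangular normal form---matches the paper's route for the fragment \emph{without} vacuum sources and detectors (its \cref{putinNF}, \cref{uniquenesspureNF} and \cref{partialcompletenessLOphotpres}). The gap is in your final step. When $D_1$ and $D_2$ contain sources and detectors, the hypothesis $\interps{D_1}=\interps{D_2}$ does \emph{not} yield two $\lopp$-circuits with the same $\interph{\cdot}$-semantics: after pushing the sources to the far left and the detectors to the far right, the semantics only determines the compression $\pi\circ\interph{T_i}\circ\iota$ of the inner unitary part $T_i$, where $\iota$ injects the non-ancillary input modes and $\pi$ projects onto the non-ancillary output modes. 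Two unitary dilations of the same sub-unitary map can be genuinely different unitaries, so the unique PPRS normal forms $T_1$ and $T_2$ of the two inner parts need not coincide, and your conclusion that ``$N_1$ and $N_2$ are literally the same circuit'' fails; the uniqueness statement you invoke is only available for \emph{pure} normal forms, i.e.\ those with no ancillary wires. You flag the $\lov$-to-$\lopp$ encoding as the delicate point, but that part is comparatively routine; the real obstruction is this non-uniqueness of the dilation.

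The missing ingredient is the paper's \cref{egaliteaunitairespres}: two unitaries whose compressions to given subspaces agree differ exactly by unitaries $Q,Q'$ supported on the orthogonal (ancillary) subspaces, so that $\interph{T_2}=(I\oplus Q')\circ\interph{T_1}\circ(I\oplus Q)$. One then realises $Q$ and $Q'$ as PPRS triangular circuits $T_{\mathrm{in}},T_{\mathrm{out}}$ by \cref{triangleuniversel} and introduces them for free next to the sources and detectors, using the absorption equations \eqref{absorptionphaseshiftleft}--\eqref{absorptionpbsright} together with \eqref{absorptionbsleft} and \eqref{absorptionbsright}, since a source or detector provably absorbs any generator acting only on its ancillary wires. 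After this insertion the middle part of $D_1^{\mathrm{NF}}$ has the same semantics as $T_2$ as a circuit without sources or detectors, and only then does \cref{partialcompletenessLOphotpres} apply. Without this step your argument establishes completeness only for the source- and detector-free fragment. A secondary point: one must also use Equation \eqref{halterevide} to equalise the numbers of ancillary wires of $D_1$ and $D_2$ before any comparison is possible, which your sketch omits.
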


The proof of \cref{completenessLOphotpres} is given in
\cref{subsect:completeness}. As a step towards proving the theorem,
we first consider the fragment of the \lopp-circuits.

\section{Polarisation-Preserving 
Circuits}
\label{section:pp}

This section gives a universal normal form for any
 \lopp-circuit. We prove the uniqueness of that
form by introducing a strongly normalising and confluent
polarisation-preserving rewrite system: \textup{PPRS}.

\begin{definition}
  \label{def:pprs}
  The rewrite system \textup{PPRS} is defined on
  \lopp-circuits with the rules of
  \cref{rulestriangleform}.
\end{definition}

\newcommand{\nspazer}{-0.2em}
\begin{figure}[tb]
\centering
\scalebox{0.9}{\begin{minipage}{\textwidth}
\begin{vwcol}[widths={0.5,0.5},
 sep=.1cm, justify=flush,rule=0pt,indent=0em] 
\begin{align}
  \label{phasemod2pi}\tikzfig{convtp-phase-shift-}\ &\to\ \tikzfig{convtp-phase-shiftthetamod2pi-}\\[1.5em]
    \label{bsmod2pi}\tikzfig{bs-}\ &\to\ \tikzfig{convtp-bsphimod2pi-}\\[1.5em]
      \label{fusionphaseshifts}\tikzfig{convtp-phase-shifts-12}\ &\to\ \tikzfig{convtp-phase-shift-1plus2}\\[1.5em]
         \label{zerophaseshifts}\tikzfig{convtp-phase-shift-zero}\ &\to\ \tikzfig{filcourt}\\[1.5em]
         \label{zerobs}\tikzfig{bs0}\ &\to \ \tikzfig{filsparalleleslongbs-m-}
  \end{align}

  \begin{align}
  \label{removebottomphase}\tikzfig{convtp-phasebbs}\ &\to\ \tikzfig{convtp-moinsthetahbsthetatheta}\\[0.6em]
  \label{passagepisur2}\tikzfig{convtp-phasehbspisur2}\ &\to\ 
\tikzfig{convtp-bspisur2thetab}\\[0.6em]
  \label{passagephasepi}\tikzfig{convtp-phasehbs}\ &\to\ \tikzfig{convtp-thetamoinspihbspimoinsphipib}\\[0.6em]
  \label{soustractionpi}
\tikzfig{bstheta4}\ &\to\ \tikzfig{bstheta4moinspipis}
  \end{align}
   \end{vwcol}
   
   \vspace{\nspazer}   \vspace{\nspazer}   \vspace{\nspazer} \vspace{\nspazer}
 
   \begin{equation}\label{glissadeEulerscalaires}\begin{array}{rcl}\tikzfig{convtp-bsyangbaxterpointeenbas-etoiles}&\to&\tikzfig{bsyangbaxterpointeenhaut}\end{array}\end{equation}
   
 \vspace{-0.4cm}

  \begin{equation}  \label{fusionEulerbsphasebs}\qquad\qquad\begin{array}{rcl}\tikzfig{bsphasebsalpha-etoile}&\to&\tikzfig{phasebsphasebeta}\end{array}\end{equation}
\end{minipage}}
  \caption{Rewriting rules of PPRS. 
    $\psi \in \mathbb R\setminus [0,2\pi)$, $ \varphi, \varphi_1, \varphi_2\in (0,2\pi)$, $\varphi_0,\theta_4\in [\pi, 2\pi)$, $\theta, \theta_0, \theta_1, \theta_2, \theta_3 \in (0,\pi)$, and $\theta_0\neq  \frac \pi 2$. 
  $\tikzfig{convtp-phase-shift-etoile}$ denotes either $\tikzfig{convtp-phase-shift}$ or $\tikzfig{filcourt}$. In Rules \eqref{glissadeEulerscalaires} and \eqref{fusionEulerbsphasebs}, the angles on the left-hand side can take any value while the right-hand side is given by \cref{existuniqtriangle3modes} and \cref{existuniqtriangle2modes} respectively.
  \label{rulestriangleform}}
\end{figure}

\begin{lemma}\label{soundnessrewriting}
  If $D_1$ rewrites to $D_2$ using the
  \textup{PPRS} rewrite system then $\lov\vdash D_1=D_2$.
\end{lemma}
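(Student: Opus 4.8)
The plan is to show that each of the rewrite rules in Figure \ref{rulestriangleform} is derivable in the \lov-calculus, i.e.\ for each rule $L \to R$ we have $\lov \vdash L = R$. Since $\lov\vdash\cdot=\cdot$ is a congruence closed under the PROP axioms (Definition \ref{def:lov-calc}), and since any single rewrite step replaces a subcircuit matching some $L$ by the corresponding $R$ inside a larger context, derivability of each rule immediately gives $\lov\vdash D_1 = D_2$ whenever $D_1$ rewrites to $D_2$ in one step; the general case follows by transitivity of $=$. So the work is entirely local: go through the nine families of rules one by one.

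Most rules are essentially axioms of Figure \ref{axiomsLOphotpres} read in one direction, modulo the side conditions on parameters. Concretely: Rule \eqref{fusionphaseshifts} is \cref{phaseaddition}; Rule \eqref{zerophaseshifts} is \cref{phase0}; Rule \eqref{zerobs} is \cref{bs0}; Rule \eqref{fusionEulerbsphasebs} is an instance of \cref{bsemulable} followed by simplifications, or can be obtained directly by combining \cref{bsemulable} with phase-shift manipulations so that the right-hand side is the (unique) triangular form of \cref{existuniqtriangle2modes}; and Rule \eqref{glissadeEulerscalaires} is \cref{Eulerscalaires}, where the $\star$-decorated phase shifters on the left simply record that those phases may be absent — both sides having the same $\interph{\cdot}$-semantics, the equation applies verbatim (using \cref{phase0} to insert a $0$-angled phase shifter where needed). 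For Rules \eqref{phasemod2pi} and \eqref{bsmod2pi}, which reduce a parameter modulo $2\pi$, we invoke \cref{2piperiodic}: the rules state precisely that $\theta$ and $\varphi$ can be replaced by $\theta \bmod 2\pi$ and $\varphi \bmod 2\pi$, which is \cref{2piperiodic} applied repeatedly.

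The remaining four rules \eqref{removebottomphase}–\eqref{soustractionpi} are the commutation/normalisation rules that push phase shifters around beam splitters, and these require short derivations rather than a single axiom. For \eqref{removebottomphase} (moving a phase on the bottom wire to the top with a sign change) and \eqref{passagepisur2}, \eqref{passagephasepi} (the special cases $\theta=\pi/2$ and the $\pi$-shift), the plan is to use \cref{gphpropapbs} and \cref{commutationphaserotpisur2} together with \cref{phaseaddition}/\cref{phase0} to shuffle phases from one side of a beam splitter to the other; one checks each is sound via \cref{singlephotsemantics} as a sanity check, but the derivation itself is a finite chain of the Figure \ref{axiomsLOphotpres} equations. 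Rule \eqref{soustractionpi} ($\theta \mapsto \theta - \pi$ on a beam splitter, compensated by phase shifters) should follow from \cref{bs0}-style reasoning combined with the fact, derivable from the axioms, that a $\pi$-angled beam splitter equals a pola-preserving arrangement of $\pi$-phases; again this is a bounded derivation. The main obstacle is purely bookkeeping: for rules \eqref{removebottomphase}–\eqref{soustractionpi} and \eqref{fusionEulerbsphasebs} one must exhibit an explicit sequence of axiom applications with the parameters tracked correctly (and verify the side conditions are respected at each intermediate step), which is routine but tedious. There is no conceptual difficulty — every rule is sound by Table \ref{tab:singphotsem}, and by the soundness of the axioms (\cref{soundnessLOphotpres}) each is also \emph{derivable} provided we produce the derivation; the proof amounts to collecting these derivations, which we defer to the appendix.
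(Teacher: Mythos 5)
Your overall strategy is the same as the paper's: reduce the lemma to showing that each rewrite rule of Figure~\ref{rulestriangleform} is derivable in the $\lov$-calculus, then conclude by congruence and transitivity. Your treatment of the easy rules is also correct: \eqref{fusionphaseshifts}, \eqref{zerophaseshifts}, \eqref{zerobs} are the axioms \eqref{phaseaddition}, \eqref{phase0}, \eqref{bs0}; \eqref{phasemod2pi} and \eqref{bsmod2pi} follow from \cref{2piperiodic}; and \eqref{glissadeEulerscalaires} is \eqref{Eulerscalaires} plus \eqref{phase0} to account for the optional phase shifters.

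There is, however, a genuine gap in the half of the rules that interact phase shifters with beam splitters. The key idea you are missing is that \emph{all} of \eqref{removebottomphase}--\eqref{soustractionpi} and \eqref{fusionEulerbsphasebs} are driven by the Euler axiom \eqref{Eulerscalaires}, specialised to two modes by plugging a vacuum source and a detector into the third wire (using \eqref{halterevide} and \eqref{bs0}); this yields the derived two-mode Euler equation \eqref{Eulerbsphasebs} and, from it, the propagation of equal phases through a beam splitter \eqref{globalphasepropagationbs}. The tools you propose instead --- \eqref{gphpropapbs} and \eqref{commutationphaserotpisur2} --- concern polarising beam splitters and the $\frac\pi2$ wave plate, and cannot move a phase shifter past a spatial-mode beam splitter: even after decomposing the beam splitter via \eqref{bsemulable}, you would need a phase shifter to commute with a wave plate of \emph{arbitrary} angle, which is not an axiom and whose derivation itself circles back to \eqref{Eulerscalaires}. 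Concretely, \eqref{removebottomphase} needs \eqref{globalphasepropagationbs}; \eqref{passagephasepi} and \eqref{soustractionpi} are obtained by padding with \eqref{bs0}/\eqref{phase0} and applying \eqref{Eulerbsphasebs}; \eqref{fusionEulerbsphasebs} is \eqref{Eulerbsphasebs} plus \eqref{phase0} (your citation of \eqref{bsemulable} for this rule is simply the wrong axiom --- that equation emulates a beam splitter by polarising beam splitters and wave plates and plays no role here); only \eqref{passagepisur2} goes through the swap identity \eqref{bspisur2swap}, itself a derived consequence. So while your framework is right and the claim that each rule is semantically sound is correct, soundness of the axioms (\cref{soundnessLOphotpres}) does not by itself give derivability, and the explicit derivations you defer would not close along the route you sketch without first establishing \eqref{Eulerbsphasebs} and \eqref{globalphasepropagationbs}.
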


\begin{proof}
  The proof is given in
  \cref{preuvesoundnessrewriting}.
\end{proof}

\begin{theorem}\label{thm:strongnormalisation}
  The rewrite system \textup{PPRS} is
  strongly normalising.
\end{theorem}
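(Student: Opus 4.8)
The plan is to exhibit a termination measure: a function $\mu$ from \lopp-circuits to a well-founded ordered set (a lexicographically-ordered tuple of natural numbers) that strictly decreases under every rule of \cref{rulestriangleform}. Since the rules act locally inside a larger circuit and $\oplus$/$\circ$ only add a constant contribution, it suffices to check that $\mu$ of the left-hand side exceeds $\mu$ of the right-hand side for each rule pattern, with the understanding that the measure is additive (or sup-like) over circuit composition so that context does not interfere.

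First I would fix a combinatorial presentation of a \lopp-circuit as a sequence of ``layers'' of beam splitters and phase shifters on $n$ wires, and define candidate components of $\mu$. Natural ingredients, in roughly decreasing order of priority: (i) the number of beam splitters whose angle $\theta$ lies outside the canonical window, i.e. is in $\mathbb R\setminus[0,\pi)$ together with the ``bad'' endpoint issues — this is what Rules \eqref{bsmod2pi} and \eqref{soustractionpi} attack; (ii) a weighted count that measures how far the circuit is from triangular (Reck) shape — e.g. $\sum$ over beam splitters of (the depth/position at which a ``non-triangular'' crossing occurs), which is decreased by the Yang–Baxter-type Rule \eqref{glissadeEulerscalaires} and by the Euler-decomposition Rule \eqref{fusionEulerbsphasebs}, and by the phase-migration Rules \eqref{removebottomphase}, \eqref{passagepisur2}, \eqref{passagephasepi} that push phase shifters towards the boundary/through beam splitters; (iii) the number of phase shifters, decreased by fusion \eqref{fusionphaseshifts} and by removal of $0$-phases \eqref{zerophaseshifts} and \eqref{zerobs}; (iv) the number of phase shifters not yet in $[0,2\pi)$, decreased by \eqref{phasemod2pi}. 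One must choose the lexicographic ordering of these components so that a rule which might \emph{increase} a lower-priority component (e.g.\ \eqref{fusionEulerbsphasebs} replaces a 3-gate chain by phase–bs–phase, possibly creating phases) strictly decreases a higher-priority one (here: it removes the ``misplaced'' configuration counted by component (ii)).

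Then, for each of the eleven rules I would verify the strict decrease. The routine ones are \eqref{phasemod2pi}, \eqref{bsmod2pi} (decrease component (i)/(iv)), \eqref{fusionphaseshifts}, \eqref{zerophaseshifts}, \eqref{zerobs} (decrease the phase/gate count (iii)), and \eqref{soustractionpi} (decrease (i)). The delicate ones are the phase-commutation rules \eqref{removebottomphase}--\eqref{passagephasepi}, which may duplicate or move phase shifters but should be seen to strictly decrease a positional potential measuring ``total distance of phases from the top-left corner'' (or the number of phase-shifter/beam-splitter inversions), and — the main obstacle — Rules \eqref{glissadeEulerscalaires} and \eqref{fusionEulerbsphasebs}, whose right-hand sides are defined only implicitly via \cref{existuniqtriangle3modes} and \cref{existuniqtriangle2modes}. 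For those, the bound on component (ii) must be purely structural: the key point is that \eqref{glissadeEulerscalaires} moves a particular beam-splitter ``crossing'' strictly closer to the triangular position regardless of the resulting angles, and \eqref{fusionEulerbsphasebs} collapses a length-$3$ pattern to a pattern that can no longer contain that sub-pattern, so the count of such patterns drops; one must argue that no rule ever recreates the offending configuration at the same or a higher priority level. The hard part will be finding a single potential that simultaneously witnesses decrease for the phase-migration rules and the two ``normalising'' rules while being insensitive to the uncontrolled output angles — I would expect to need a carefully designed lexicographic tuple (likely 4 or 5 components), and the bulk of the write-up is the case analysis confirming each of the eleven rules strictly decreases it. I would present the measure, then a table/enumeration of the eleven verifications, deferring the most tedious ones (the phase-migration rules) to an appendix.
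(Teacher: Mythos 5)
Your overall strategy coincides with the paper's: a lexicographic termination measure whose high-priority components count beam splitters with out-of-range angles, a position-weighted beam-splitter count that Rules \eqref{glissadeEulerscalaires} and \eqref{fusionEulerbsphasebs} strictly decrease, and low-priority components counting phase shifters and their out-of-range angles. Your identification of which rules are routine and which are delicate is also accurate.

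However, the component you leave vaguest is precisely where the real work lies, and the two substitutes you sketch for it would both fail. Consider Rule \eqref{removebottomphase}: it turns one phase shifter at the \emph{bottom-left} input of a beam splitter into a phase shifter at its \emph{top-left} input plus two phase shifters on its \emph{outputs}. A potential of the form ``total distance of phases from the top-left corner'' can increase here (two new phases appear further to the right), and so can ``number of phase-shifter/beam-splitter inversions'' (the new output phases sit to the left of every downstream beam splitter). The paper's measure resolves this by indexing the count by the \emph{depth} of the phase shifter (the maximal number of beam splitters to its right) and by giving, at each depth $j$, strictly higher priority to bottom-left occurrences than to top-left ones: the phases created on the outputs live at depth $\le j-1$ and hence only perturb lower-priority coordinates. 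This forces the measure to be an almost-zero integer sequence compared from its largest nonzero index, i.e.\ a well-order of type $\omega^\omega$, not a fixed-length tuple of naturals. A second subtlety you do not anticipate is the interaction with Rule \eqref{fusionphaseshifts}: if the top-left coordinate at depth $j$ naively counted ``phase shifters with angle outside $[0,\pi)$ at the top-left of a beam splitter'' (the configurations attacked by Rule \eqref{passagephasepi}), then merging two phases each in $[0,\pi)$ could create one outside $[0,\pi)$ and \emph{increase} that count. The paper instead counts a phase shifter as ``bad'' whenever \emph{some subset} of the chain of phases between it and the beam splitter sums (mod $2\pi$) to a value outside $[0,\pi)$ --- a condition invariant under fusion. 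Without these two devices (depth-stratified ordinal priorities and the subset-sum condition), the case analysis for Rules \eqref{removebottomphase}, \eqref{passagepisur2}, \eqref{passagephasepi} together with \eqref{fusionphaseshifts} cannot be closed, so your plan as stated has a genuine gap at its central step.
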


\begin{proof}
  The proof is done by defining a lexicographic order on six distinct
  values: numbers of beam splitters of various angle ranges, count of
  specific patterns, numbers and positions of phase shifters. The
  order is shown to be decreasing with respect to the rewrite rules of
  \textup{PPRS}. The complete proof is given in
  \cref{stronglynormalisingproof}.
\end{proof}

As \textup{PPRS} is terminating, we can therefore derive the
existence of normal forms. The next step is to show that these normal
forms are unique: this is derived from Theorem~\ref{lem:globconfluent}.

\begin{theorem}\label{lem:globconfluent}
  \textup{PPRS} is globally confluent.
\end{theorem}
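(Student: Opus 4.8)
The plan is to prove global confluence by combining strong normalisation (\cref{thm:strongnormalisation}) with local confluence and then invoking Newman's lemma. So the real content is to establish local confluence: whenever a circuit $D$ admits two one-step rewrites $D \to D_1$ and $D \to D_2$, the results $D_1$ and $D_2$ have a common reduct. Since PPRS has finitely many rules, I would enumerate the critical pairs arising from overlapping redexes and show each one is joinable; pairs of redexes that are ``disjoint'' (act on non-overlapping subcircuits, including the case where they act on disjoint sets of wires) commute trivially, so only genuine overlaps need attention.

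First I would organise the rules into groups: the ``local'' simplification rules on one or two wires that only involve phase shifters and beam splitters (\eqref{phasemod2pi}, \eqref{bsmod2pi}, \eqref{fusionphaseshifts}, \eqref{zerophaseshifts}, \eqref{zerobs}); the ``commutation/normalisation'' rules that move phases past beam splitters or adjust angle ranges (\eqref{removebottomphase}, \eqref{passagepisur2}, \eqref{passagephasepi}, \eqref{soustractionpi}); and the two ``global'' Euler/Yang-Baxter rules \eqref{glissadeEulerscalaires} and \eqref{fusionEulerbsphasebs}, whose right-hand sides are dictated by the uniqueness statements of \cref{existuniqtriangle3modes} and \cref{existuniqtriangle2modes}. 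For overlaps within the first two groups, joinability is a finite case analysis: two phase shifters meeting a third (associativity of \eqref{fusionphaseshifts}), a phase shifter entering the $2\pi$-reduction range while also fusing with a neighbour, a phase of value $0$ that is both removed by \eqref{zerophaseshifts} and is a candidate for \eqref{removebottomphase}, a beam splitter of angle in $[\pi,2\pi)$ that both gets reduced mod $\pi$ by \eqref{soustractionpi} and feeds a phase-commutation rule, and so on. In each such case one rewrites both ways and reduces to a common circuit, using \cref{soundnessrewriting} together with the fact — which I would prove as an auxiliary lemma — that two PPRS-normal forms with the same $\interph\cdot$-semantics are syntactically equal (this is essentially \cref{existuniqtriangle2modes} and its $n$-mode generalisation), so that semantic equality of two reducts suffices to conclude joinability after normalising. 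Indeed this auxiliary observation is the cleanest tool: by \cref{soundnessrewriting} every rewrite preserves semantics, so $D_1$ and $D_2$ always have equal semantics; hence it suffices to show that \emph{some} common reduct exists, and the most economical argument is often to run both sides to normal form and invoke uniqueness — but uniqueness of normal forms is precisely what we are trying to establish, so I must be careful to prove the joinability of critical pairs directly, without circularity, and only then conclude uniqueness.

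The main obstacle will be the critical pairs involving the global rules \eqref{glissadeEulerscalaires} and \eqref{fusionEulerbsphasebs}, because their right-hand sides are defined implicitly via the canonical decompositions of \cref{existuniqtriangle3modes} and \cref{existuniqtriangle2modes} rather than by an explicit syntactic pattern. When an Euler rule overlaps with a local rule — for instance a phase shifter adjacent to the three-beam-splitter pattern on three modes that can either be absorbed into the Euler rewrite or first commuted by \eqref{removebottomphase}/\eqref{passagepisur2}/\eqref{passagephasepi} — I would argue joinability semantically: both reducts have the same $\interph\cdot$-image, and after further PPRS-reduction each reaches a circuit matching the canonical form, which by the \emph{uniqueness} clauses of Lemmas~\ref{existuniqtriangle2modes} and \ref{existuniqtriangle3modes} must be the same circuit. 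This is not circular provided I first show that any circuit in the relevant shape reduces to a canonical-form circuit, and that the canonical-form circuits are themselves irreducible; then uniqueness of the canonical parameters (already proved, independently of confluence) pins down the common reduct. I would also need to check overlaps of \eqref{fusionEulerbsphasebs} with itself and with \eqref{glissadeEulerscalaires} on configurations of four or more beam splitters, reducing them to the two- and three-mode uniqueness lemmas applied to appropriate sub-circuits. Once local confluence is established for all critical pairs, Newman's lemma applied to the strongly normalising system PPRS yields global confluence, completing the proof; the full case enumeration is deferred to the appendix.
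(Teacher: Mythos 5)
Your overall strategy is the same as the paper's: establish local confluence by a critical-pair analysis, close each peak using the uniqueness of the canonical two- and three-mode decompositions (\cref{existuniqtriangle2modes}, \cref{existuniqtriangle3modes}) --- which is proved semantically and independently of confluence, so no circularity --- and then combine with \cref{thm:strongnormalisation} via Newman's lemma. You also correctly identify the circularity trap and the way out of it (show that every small circuit reduces to an irreducible canonical form, then pin it down by the uniqueness of the canonical parameters).

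There is, however, one genuine gap. The step that makes the whole argument work in the paper is a structural lemma you do not prove and in fact seem to doubt: \emph{every non-trivial critical peak involves at most three spatial modes}. You instead write that you ``would also need to check overlaps of \eqref{fusionEulerbsphasebs} with itself and with \eqref{glissadeEulerscalaires} on configurations of four or more beam splitters, reducing them to the two- and three-mode uniqueness lemmas applied to appropriate sub-circuits.'' If such a peak genuinely spanned four or more modes, the two- and three-mode uniqueness lemmas would not cover it, and the only available uniqueness statement for $n\geq 4$ modes (\cref{thm:PPRSuniquenormalform}, \cref{triangleuniversel}) is derived \emph{from} global confluence --- so invoking it here would be circular. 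The paper avoids this by a careful (and not obvious) argument that no such peak exists: any overlap of two redexes spanning four modes would have to involve at least one instance of Rule~\eqref{glissadeEulerscalaires}, and a case analysis on how its left-hand side can share modes or generators with another redex (using, e.g., that its left-hand side has no phase shifter on the bottom mode, and how its beam splitters are wired) rules out every configuration. You need to supply this argument (or an equivalent bound on the size of critical peaks) before the closure-by-uniqueness step is legitimate; with it, the rest of your plan goes through as in the paper.
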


\begin{proof}
  \textup{PPRS} is locally confluent. Indeed, one can show by case
  analysis that the non-trivial peaks all use at most three
  wires. Each peak can be closed since for any polarisation-preserving
  \lov-circuit of size $n\in \{1,2,3\}$, \textup{PPRS} terminates to a
  specific unique normal form: when $n=1$, a simple phase-shift; when
  $n=2$, the form shown in Lemma~\ref{existuniqtriangle2modes}; when
  $n=3$, the form shown in Lemma~\ref{existuniqtriangle3modes}.  See
  Appendix~\ref{app:proofnf123} for details.
  Finally, using \cref{thm:strongnormalisation}, global confluence is
  deduced from Newman's lemma~\cite{terese2003term}.
\end{proof}

\begin{figure}[t]
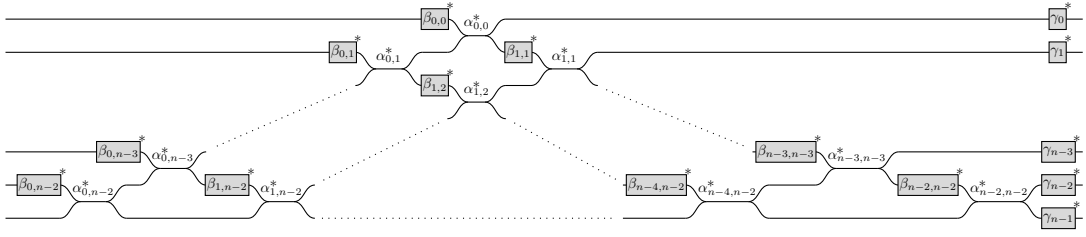

  \centering
  \tikzfig{trianglepointillesindicesetoiles}
  \caption{General scheme of a PPRS triangular normal form. The stars mean that any phase shifter or beam splitter with angle $0$ is replaced by the identity. The conditions on the angles are the following: $
  \alpha_{i,j},\beta_{i,j}\in[0,\pi)$;\quad $ 
  \gamma_i\in[0,2\pi)$;\quad $
  \alpha_{i,j}=0\Rightarrow\forall j'>j,\alpha_{i,j'}=0$;\quad $ 
  \alpha_{i,j}\in\{0,\frac\pi2\}\Rightarrow\beta_{i,j}=0$.}
  \label{fig:NFwithindices}
\end{figure}

\begin{figure}[t]
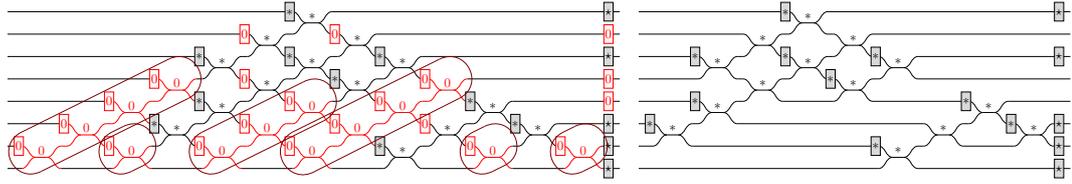

  \begin{subfigure}[b]{.6\linewidth}
    \centering
    \scalebox{0.55}{\tikzfig{triangleetoiles-8modes-star-s-diagonalesrouges-larges-zeros}}
  \end{subfigure}
  \begin{subfigure}[b]{.4\linewidth}
    \centering
    \scalebox{0.55}{\tikzfig{triangleetoiles-8modes-star-s-tronque-etroit}}
  \end{subfigure}
  \caption{An example of a PPRS triangular normal form. In the
    figure on the left, the beam splitters and phase shifters with angle $0$ in
    the corresponding triangular form are shown in red. In the
    figure on the right, they are replaced with identities.}
  \label{fig:nf}
\end{figure}

\begin{definition}\label{def:pprs-nf}
  A \emph{PPRS triangular normal form} is a circuit with a triangular
  shape similar to \cref{triangleNF}, 
but with all $0$-angled generators replaced with identities and with additional conditions on the angles, as described in \cref{fig:NFwithindices}.

\end{definition}

  Figure~\ref{fig:nf} shows an example: the figure on the left is the ``full''
  circuit with $0$-angled beam splitters while on the right is
  the corresponding PPRS triangular normal form.

\begin{lemma}\label{lem:normalformarePPRS}
  Any irreducible 
  \lopp-circuit is a PPRS triangular normal form.
\end{lemma}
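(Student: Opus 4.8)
The plan is to prove the contrapositive in spirit: show that any \lopp-circuit which is \emph{not} already a PPRS triangular normal form admits a PPRS rewrite, hence is reducible. Equivalently, take an arbitrary irreducible circuit $D$ and argue it must have the shape of \cref{fig:NFwithindices}. The argument proceeds in stages, peeling structure from $D$ using the unavailability of each rewrite rule as a structural constraint.

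\textbf{Step 1: normalise parameters and local gates.} Since Rules \eqref{phasemod2pi} and \eqref{bsmod2pi} do not apply, every phase-shifter angle lies in $[0,2\pi)$ and every beam-splitter angle lies in $[0,2\pi)$; combined with \eqref{soustractionpi} (and the way the side-conditions are stated), beam-splitter angles actually lie in $[0,\pi)$ in the relevant positions. Since \eqref{fusionphaseshifts} does not apply, no two phase shifters are directly composed on the same wire; since \eqref{zerophaseshifts} and \eqref{zerobs} do not apply, there are no $0$-angled generators (they have all been replaced by identities, matching the ``stars'' convention of \cref{def:pprs-nf}).

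\textbf{Step 2: push phase shifters to one side and establish the triangular skeleton.} Rules \eqref{removebottomphase}, \eqref{passagepisur2}, \eqref{passagephasepi} let a phase shifter move across a beam splitter (from the bottom input, or with the special $\pi/2$ and $\pi$ cases), so irreducibility forces all phase shifters into the canonical boundary positions (the $\gamma_i$'s and $\beta_{i,j}$'s of \cref{fig:NFwithindices}), and forces the side-conditions $\alpha_{i,j}\in\{0,\tfrac\pi2\}\Rightarrow\beta_{i,j}=0$. Rule \eqref{fusionEulerbsphasebs} eliminates any ``$\star$-bs–phase–bs'' pattern on two wires that is not already in the Lemma~\ref{existuniqtriangle2modes} form, and Rule \eqref{glissadeEulerscalaires} eliminates any three-wire sub-pattern not in the Lemma~\ref{existuniqtriangle3modes} form; together with Rule \eqref{pbspbs}-type cancellations these force the beam splitters into a genuine triangular arrangement with the condition $\alpha_{i,j}=0\Rightarrow\alpha_{i,j'}=0$ for $j'>j$ (a $0$-angled/identity beam splitter cannot be followed by a nontrivial one in the same row, else a rewrite applies). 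The key tool throughout is that PPRS is a PRO rewrite system, so I may freely use PRO-deformations (the ambient congruence) to bring a putative redex into the exact syntactic shape of a rule's left-hand side.

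\textbf{Step 3: induction on the number of wires / beam splitters.} Formally I would induct on the size $n$ of $D$. For $n=1$ irreducibility plus Step 1 gives a single phase shifter (or identity): this is the normal form. For the inductive step, I isolate the topmost/leftmost wire: the irreducibility constraints from Step 2 force everything touching that wire to form one ``row'' of the triangle (a cascade of beam splitters with a trailing phase shifter, in the exact Reck-style order, with the stated angle ranges), after which that wire is disconnected from the rest; the remainder is an irreducible \lopp-circuit on $n-1$ wires, to which the induction hypothesis applies, yielding the full triangular normal form of \cref{fig:NFwithindices}.

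The main obstacle is Step 2: verifying that the combined non-applicability of \emph{all} the rules — in particular the interaction of the commutation rules \eqref{removebottomphase}--\eqref{passagephasepi} with the two- and three-wire ``Euler/Yang--Baxter'' rules \eqref{glissadeEulerscalaires} and \eqref{fusionEulerbsphasebs} — really does pin down the arrangement uniquely, with no exotic irreducible configuration slipping through. This is essentially a careful case analysis on the local shape of the circuit around each beam splitter, handled modulo PRO-deformation, and it is where the bulk of the detailed work lies (and where the side-conditions on $\theta_0\neq\pi/2$, $\varphi_i\in(0,2\pi)$, etc.\ in \cref{rulestriangleform} are exactly calibrated to make the peeling go through). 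I expect this to be discharged by the same case enumeration already used for the $n\le 3$ base cases in Appendix~\ref{app:proofnf123}, lifted to arbitrary $n$ via the ``at most three wires'' locality observation from the proof of \cref{lem:globconfluent}.
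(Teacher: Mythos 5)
Your overall strategy coincides with the paper's: extract structural constraints from the non-applicability of each rule, then induct on the number of modes, peeling off the part of the circuit that touches the top wire and applying the induction hypothesis to the rest, with \cref{lem:123normalform} as the base case. However, there is a genuine gap at the one place where the real work happens. The configuration that must be excluded in the inductive step is two beam splitters acting on modes $(0,1)$ separated by an arbitrary irreducible circuit $P$ on modes $1,\dots,n-1$: nothing in your Step~1 forbids it, and it is not a three-wire pattern. The tool you propose for this --- lifting the $n\le 3$ case analysis ``via the at most three wires locality observation from the proof of \cref{lem:globconfluent}'' --- does not apply: that observation concerns overlapping redexes (critical pairs) and says nothing about which redex-free configurations can occur globally in an irreducible circuit. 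Irreducibility of all three-mode sub-circuits does not, by itself, pin down the global shape.

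The missing idea is to apply the induction hypothesis not only to ``the remainder after the top wire is disconnected'' but to the separating circuit $P$ itself: being irreducible on $n-1$ modes, $P$ is by induction already a PPRS triangular normal form, so either it contains a beam splitter on \emph{its} top two modes --- in which case that beam splitter together with the two outer ones forms (after clearing phases) an instance of the left-hand side of Rule~\eqref{glissadeEulerscalaires}, contradicting irreducibility --- or it contains none, acts on mode $1$ by at most a phase shifter, and the two outer beam splitters become adjacent, so Rule~\eqref{fusionEulerbsphasebs} applies. This is precisely how the paper closes the argument; with it, your Steps 1--3 go through, and without it the ``triangular skeleton'' claim of Step~2 remains an assertion rather than a proof.
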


\begin{proof}
  This property can be proven by induction. First, we explicit properties of any irreducible circuit that can be directly deduced from the PPRS rules of \cref{rulestriangleform}. 
  Then, we give two more properties characterising the PPRS triangular normal forms. 
  By induction, we prove that any irreducible circuit respects those two properties, so that any irreducible circuit is a PPRS triangular normal form. 
  The initialisation is deduced by \cref{lem:123normalform}. See \cref{preuvelemnormalformarePPRS} for more details.
\end{proof}
\begin{theorem}\label{thm:PPRSuniquenormalform}
  Any  
  \lopp-circuit, with the rules of
  \textup{PPRS}, converges to a unique \textup{PPRS} triangular normal form.
\end{theorem}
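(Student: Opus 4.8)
The plan is to derive Theorem~\ref{thm:PPRSuniquenormalform} as an immediate corollary of the three preceding results: strong normalisation (\cref{thm:strongnormalisation}), global confluence (\cref{lem:globconfluent}), and the characterisation of irreducible circuits as PPRS triangular normal forms (\cref{lem:normalformarePPRS}). The argument is essentially the standard abstract-rewriting fact that a strongly normalising and confluent system has unique normal forms, together with the identification of those normal forms.

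Concretely, first I would invoke \cref{thm:strongnormalisation}: every rewrite sequence from a given \lopp-circuit $D$ terminates, so $D$ has at least one irreducible form. Second, by \cref{lem:normalformarePPRS}, any such irreducible form is a PPRS triangular normal form, so $D$ reaches \emph{some} triangular normal form. Third, for uniqueness, suppose $D$ rewrites (in zero or more steps) to two irreducible circuits $N_1$ and $N_2$. By global confluence (\cref{lem:globconfluent}), there is a common circuit $N$ to which both $N_1$ and $N_2$ rewrite; but $N_1$ and $N_2$ are irreducible, so the only rewrite sequence out of each is the empty one, forcing $N_1 = N = N_2$. Hence the triangular normal form reachable from $D$ is unique, which is exactly the claim.

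There is essentially no obstacle here — all the mathematical content has already been packaged into the three cited results — so the proof is just a short assembly. The only thing worth being careful about is phrasing the uniqueness step correctly: ``converges to a unique normal form'' means that the normal form is independent of the chosen rewrite strategy, which is precisely what confluence plus termination give via the common-reduct argument above; one does not need the stronger Church--Rosser statement beyond what \cref{lem:globconfluent} already provides.

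\begin{proof}
  By \cref{thm:strongnormalisation}, every sequence of PPRS rewrites
  starting from a \lopp-circuit $D$ is finite, so $D$ admits at least
  one irreducible form. By \cref{lem:normalformarePPRS}, any irreducible
  \lopp-circuit is a PPRS triangular normal form; hence $D$ reaches a
  PPRS triangular normal form. For uniqueness, suppose $D$ rewrites to
  two irreducible circuits $N_1$ and $N_2$. By \cref{lem:globconfluent},
  there is a circuit $N$ such that both $N_1$ and $N_2$ rewrite to $N$.
  Since $N_1$ and $N_2$ are irreducible, the only rewrite sequence from
  each of them is the empty one, so $N_1 = N = N_2$. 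Therefore $D$
  converges to a unique PPRS triangular normal form.
\end{proof}
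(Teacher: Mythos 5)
Your proof is correct and follows the same route as the paper: existence from strong normalisation (\cref{thm:strongnormalisation}), identification of irreducible circuits as triangular normal forms via \cref{lem:normalformarePPRS}, and uniqueness from global confluence (\cref{lem:globconfluent}). The only difference is that you spell out the standard common-reduct argument for uniqueness, which the paper leaves implicit.
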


\begin{proof}
  \textup{PPRS} is globally confluent and terminating: normal forms
  are unique. From \cref{lem:normalformarePPRS}, PPRS triangular normal forms are the only irreducible forms. 
  Therefore, any
  polarisation-preserving circuit terminates to such a unique normal
  form.
\end{proof}

\begin{remark}
  By using Equation \eqref{Eulerscalaires} (together with Equations
  \eqref{phaseaddition} and \eqref{globalphasepropagationbs}) and by adding 0-angled beam splitters if necessary,  one can turn any circuit in PPRS triangular normal form into a circuit in the rectangular form of
  \cite{Clements2016unitary} shown in \cref{rectangleNF}. A schematic example of such a
  transformation is shown in \cref{triangletorectangularproof}. 

\end{remark}

We can now prove the completeness of the polarisation-preserving fragment.
\begin{theorem}\label{thm:ppcircuitscompleteness}
  For any 
  \lopp-circuits $C_1$,$C_2$ such that $\interph{C_1}=\interph{C_2}$, their normal forms are equal, i.e. $N_1=N_2$, where $N_1$ (resp. $N_2$) is the unique normal form of $C_1$ (resp. $C_2$) given by \cref{thm:PPRSuniquenormalform}.
\end{theorem}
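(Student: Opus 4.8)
The plan is to reduce the statement to the \emph{injectivity} of $\interph{\cdot}$ on \textup{PPRS} triangular normal forms, and then to prove that injectivity by induction on the number of modes.

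\textbf{Reduction.} By \cref{thm:PPRSuniquenormalform}, each $C_k$ converges under \textup{PPRS} to a unique normal form $N_k$. By \cref{soundnessrewriting} every rewrite step is a $\lov$-equality, so $\lov\vdash C_k = N_k$; combining \cref{soundnessLOphotpres} with \cref{def:Hsemantics} this yields $\interph{C_k}=\interph{N_k}$. Hence the hypothesis $\interph{C_1}=\interph{C_2}$ becomes $\interph{N_1}=\interph{N_2}$, and it remains to show that two \textup{PPRS} triangular normal forms with equal $\interph{\cdot}$-semantics are syntactically identical.

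\textbf{Injectivity on normal forms.} Induct on the number $n$ of modes. For $n=1$ a normal form is a single phase shifter with angle $\gamma\in[0,2\pi)$ (or the identity when $\gamma=0$), and $\gamma$ is determined by the scalar $e^{i\gamma}$. For $n\in\{2,3\}$ the \textup{PPRS} normal forms are exactly the circuits appearing in \cref{existuniqtriangle2modes} and \cref{existuniqtriangle3modes} (as already used in the proof of \cref{lem:globconfluent}), and those lemmas state that their angles are uniquely fixed by $U$. For $n\ge 4$ (in fact already for $n\ge 2$), one reads off \cref{fig:NFwithindices} that an $n$-mode normal form $N$ factors as
\[
  N \;=\; (N'\oplus\id_1)\circ D_N,
\]
where $D_N$ is the outermost (longest) diagonal of the triangle --- a staircase of $n-1$ consecutive beam-splitter-and-phase-shifter blocks joining all $n$ modes, followed by the terminal phase $\gamma$ on the peeled-off mode --- and $N'$ is an $(n-1)$-mode \textup{PPRS} triangular normal form on the remaining modes. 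Since $\interph{N'\oplus\id_1}=\interph{N'}\oplus(1)$ is block diagonal and acts as the identity on the peeled-off mode, the corresponding row of $U=\interph{N}$ coincides with that row of $\interph{D_N}$, so it depends only on the angles of $D_N$ and on $\gamma$. The side conditions carried by a normal form --- the $\alpha$-angles lie in $[0,\pi)$, once an $\alpha$ on a diagonal vanishes all later $\alpha$'s on that diagonal vanish, and $\alpha\in\{0,\tfrac\pi2\}\Rightarrow\beta=0$ --- are precisely what makes the map from the angles of $D_N$ (and $\gamma$) to that row of $U$ injective; its atomic case is the uniqueness part of \cref{existuniqtriangle2modes}, and the staircase case is the corresponding iterated Givens-type statement. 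Hence $\interph{N_1}=\interph{N_2}$ forces $D_{N_1}=D_{N_2}$ and $\gamma_1=\gamma_2$; cancelling these leaves $\interph{N'_1}\oplus(1)=\interph{N'_2}\oplus(1)$, so $\interph{N'_1}=\interph{N'_2}$, and the induction hypothesis gives $N'_1=N'_2$, whence $N_1=N_2$.

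\textbf{Main obstacle.} The crux is the injectivity of this ``peeled-row'' map for the outer diagonal: when a beam-splitter angle of $D_N$ equals $0$ or $\tfrac\pi2$, neighbouring phase shifters can be absorbed or shifted, so without the normal-form side conditions the angles would not be determined. Showing that those conditions remove every such ambiguity requires a careful case analysis of the degenerate branches along the staircase --- essentially the same analysis carried out by hand for $n\le 3$ in \cref{existuniqtriangle2modes} and \cref{existuniqtriangle3modes}. A secondary technical point is to check that the factorisation $N=(N'\oplus\id_1)\circ D_N$ genuinely respects the shape and all angle constraints of \cref{fig:NFwithindices}, so that $N'$ is a bona fide $(n-1)$-mode normal form and the induction hypothesis applies.
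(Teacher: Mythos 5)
Your reduction step matches the paper's, but from there you take a genuinely different route, and the route you choose leaves its central lemma unproved. The paper does \emph{not} prove full injectivity of $\interph{\cdot}$ on PPRS triangular normal forms by a direct angle-recovery induction. Instead it proves only the much weaker special case $\interph{N}=\interph{I_n}\Rightarrow N=I_n$ (an easy induction: the top row of the matrix forces $\alpha_{0,0}=0$, the side conditions then kill the whole outer diagonal, and one recurses), and then gets the general statement by a confluence trick: taking a polarisation-preserving circuit $P$ with $\interph{P}=\interph{N_1}^{-1}$ (which exists by \cite{Reck1994unitary}), the composite $N_1PN_2$ has both $\interph{N_1P}=\interph{I_n}$ and $\interph{PN_2}=\interph{I_n}$, so it reduces to $N_1$ by normalising $PN_2$ first and to $N_2$ by normalising $N_1P$ first; uniqueness of normal forms (\cref{thm:PPRSuniquenormalform}) then gives $N_1=N_2$. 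The whole point of this argument is to avoid exactly the case analysis you are signing up for.

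What you propose instead --- peel the outer staircase $D_N$, recover its angles from the top row of $U$ using the normal-form side conditions, cancel, and recurse --- is a legitimate alternative; it is essentially the $n$-mode generalisation of the uniqueness arguments of \cref{existuniqtriangle2modes,existuniqtriangle3modes}, and the paper itself remarks (after the proof of \cref{existuniqtriangle3modes}) that this generalisation is possible and yields an alternative proof of \cref{triangleuniversel}. But as written your proof has a genuine gap: the ``injectivity of the peeled-row map'' for an arbitrary-length staircase is asserted, flagged as the main obstacle, and not established. The degenerate branches are not a formality --- e.g.\ $|U_{0,0}|=|\cos\alpha_{0,0}|$ alone does not separate $\alpha$ from $\pi-\alpha$, and one must argue via the reality/argument conditions on successive ratios exactly as in the $3$-mode proof, together with the propagation rule $\alpha_{i,j}=0\Rightarrow\alpha_{i,j'}=0$ for $j'>j$ when an intermediate entry of the row vanishes. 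Until that lemma is proved (by the iterated case analysis you gesture at), the argument is incomplete; alternatively, you could keep your reduction and replace the rest by the paper's $N_1PN_2$ argument, which needs only the identity case.
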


\begin{proof}
  As the rewrite system preserves the semantics, it is sufficient to prove that $\interph{N_1}=\interph{N_2} \Rightarrow N_1=N_2$. 
  First, we can show by induction that $\interph{N}=\interph{I_n}\Rightarrow N=I_n$. Indeed, to have the semantics of identity, we can show the upper beam splitter and phase shifters are necessarily 0-angled. The proof follows from the induction property, details are given in \cref{preuvethmPPRScompleteness}.
  Let $P$ be an inverse circuit of $N_1$ and $N_2$, that is, a polarisation-preserving circuit such that $\interph{P}=\interph{N_1}^{-1}$. 
  The existence of such a circuit follows from \cite{Reck1994unitary}.
As $\interph{N_1P}=\interph{PN_2}=\interph{I_n}$, the term $N_1PN_2$ can both be reduced to $N_1$ (by reducing $PN_2$ first) and $N_2$ (by reducing $N_1P$ first). By \cref{thm:PPRSuniquenormalform}, $N_1=N_2$.
\end{proof}

\begin{proposition}[Universality and uniqueness in the polarisation-preserving fragment]\label{triangleuniversel}
  For any unitary $U\colon\mathbb C^{n}\to \mathbb C^{n}$, there
  exists a unique circuit $T$ in 
  \textup{PPRS} triangular normal form such that
  $\interph{T}=U$.
\end{proposition}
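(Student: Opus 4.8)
The plan is to deduce Proposition~\ref{triangleuniversel} directly from the normal-form machinery of this section, so that only a small amount of gluing remains.

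\emph{Existence.} Since polarisation-preserving circuits are universal for unitaries, the decomposition of Reck \textit{et al.}~\cite{Reck1994unitary} provides a \lopp-circuit $C\colon n\to n$ with $\interph{C}=U$. By \cref{thm:PPRSuniquenormalform}, $C$ converges under PPRS to a unique circuit $T$ in PPRS triangular normal form. It then remains to see that $\interph{T}=U$. Every rule of PPRS rewrites a \lopp-circuit of type $n\to n$ into a \lopp-circuit of the same type, so $\interph{\cdot}$ is defined all along the rewrite sequence from $C$ to $T$; and by \cref{soundnessrewriting} together with \cref{soundnessLOphotpres}, this sequence yields $\interp{C}=\interp{T}$, whence $\interph{C}=\interph{T}$ via \cref{def:Hsemantics} (using that $\iota$ is injective). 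Therefore $\interph{T}=\interph{C}=U$.

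\emph{Uniqueness.} Suppose $T_1$ and $T_2$ are both PPRS triangular normal forms with $\interph{T_1}=\interph{T_2}=U$. By \cref{thm:ppcircuitscompleteness}, the PPRS normal forms $N_1$ of $T_1$ and $N_2$ of $T_2$ satisfy $N_1=N_2$. By \cref{lem:normalformarePPRS} together with \cref{thm:PPRSuniquenormalform}, the PPRS triangular normal forms are exactly the PPRS-irreducible circuits, so each $T_i$ coincides with its own normal form, i.e.\ $N_i=T_i$. Hence $T_1=N_1=N_2=T_2$.

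I expect no serious obstacle here: the substance has been front-loaded into \cref{thm:strongnormalisation}, \cref{lem:globconfluent}, \cref{lem:normalformarePPRS} and \cref{thm:ppcircuitscompleteness}. The one point that needs a little care is the passage from semantic preservation at the \lov-level ($\interp{\cdot}$) to the \lopp-level ($\interph{\cdot}$): one must observe that PPRS never leaves the \lopp-fragment nor changes the type $n\to n$, so that \cref{def:Hsemantics} applies at every step and pins down $\interph{\cdot}$ from $\interp{\cdot}$. Everything else is a direct assembly of the preceding results.
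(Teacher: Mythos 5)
Your proof is correct and follows essentially the same route as the paper, which likewise assembles the result from Reck \emph{et al.}, \cref{thm:PPRSuniquenormalform}, \cref{thm:ppcircuitscompleteness}, and the irreducibility of PPRS triangular normal forms. One small slip in the uniqueness part: the fact you need is that every PPRS triangular normal form is irreducible (so that $T_i$ is its own normal form), but \cref{lem:normalformarePPRS} only gives the converse implication (irreducible $\Rightarrow$ triangular normal form), and \cref{thm:PPRSuniquenormalform} does not supply the missing direction either; that direction is a separate (easily checked) fact obtained by verifying that no rule of \cref{rulestriangleform} applies under the angle conditions of \cref{fig:NFwithindices}, and it is exactly the extra ingredient the paper invokes explicitly.
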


\begin{proof}
  This follows directly from \cite{Reck1994unitary}, \cref{thm:PPRSuniquenormalform,thm:ppcircuitscompleteness} and the fact that all PPRS triangular normal forms are irreducible. 
\end{proof}
  
\section{Completeness of the \texorpdfstring{\lov}{LOv}-Calculus}
\label{subsect:completeness}
To prove the completeness of the \texorpdfstring{\lov}{LOv}-Calculus (\cref{completenessLOphotpres}), we introduce the following
notion of normal form.

\begin{figure}[t]
  \centering
  \[\scalebox{0.8}{\tikzfig{NFexpl}}\]
  \caption{Shape of a circuit in normal form as of
    \cref{defNFLOphotpres}.}
  \label{fig:nf-lov-pr}
\end{figure}

\begin{definition}[Normal form]\label{defNFLOphotpres}
  A circuit in normal form $N:n\to m$ is a circuit of the form shown
  in \cref{fig:nf-lov-pr},
where $T$ is a PPRS triangular normal form (Definition~\ref{def:pprs-nf}).
If $n'=m'=0$,
then $N$ is said to be in \emph{pure normal form}.
\end{definition}

\begin{lemma}[Uniqueness of the pure normal form]\label{uniquenesspureNF}
  If two circuits $N_1$ and $N_2$ in pure normal form are such that
  $\interps{N_1}=\interps{N_2}$, then $N_1=N_2$.
\end{lemma}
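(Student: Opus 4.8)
The plan is to exploit the rigid layered structure of a pure normal form: a circuit $N:n\to m$ in pure normal form has, reading from left to right, a block of vacuum state sources on some auxiliary wires, then a PPRS triangular normal form $T$ acting on all wires, then a block of detectors on the remaining auxiliary wires. Since $T$ is polarisation-preserving and acts identically on the two polarisation sectors, the semantics $\interps{N}$ decomposes as a ``block'' of a unitary $\interph{T}$: writing $k$ for the total number of wires inside $T$, the map $\interps{N}$ is obtained from the $k\times k$ unitary $\interph{T}$ (tensored with the $2$-dimensional polarisation space) by deleting the columns corresponding to source wires and the rows corresponding to detector wires. So the first step is to make this precise: $\interps{N}$ determines, and is determined by, the triple $(n', m', \interph{T})$ where $n'$ is the number of sources and $m'$ the number of detectors, via this submatrix-extraction operation.

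The second step is to recover $n'$ and $m'$ from $\interps{N}$ alone. Here I would use the fact that $T$ is unitary, hence its rows and columns are orthonormal; deleting rows/columns and restricting cannot increase rank, but more to the point one can argue that the number of auxiliary source wires $n'$ is forced. Concretely, since $N:n\to m$ is a fixed typing, $n$ and $m$ are already determined by $\interps{N}$ (they are $\tfrac12\dim$ of domain and codomain), and $T$ has $n+n' = m+m'$ wires; the remaining freedom is a single integer, say the total wire count $k$ of $T$, equivalently $n'$. I expect that $n'$ is pinned down as the smallest value for which such a decomposition exists — any genuinely smaller $T$ would have to be a strict submatrix of a unitary realising the same map, and the normal form is by construction ``minimal'' in this sense (this is presumably guaranteed by the PPRS rules, e.g. a source immediately feeding a detector, or trivial generators, having been eliminated). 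If instead the definition of normal form fixes $n'$ and $m'$ canonically (e.g. as $\max$ or via the construction in Section~\ref{subsect:completeness}), then this step is immediate and I would simply quote that.

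The third step, given that $n'=n_1'=n_2'$ and $m'=m_1'=m_2'$ are equal, is to conclude $\interph{T_1}=\interph{T_2}$ and then invoke Proposition~\ref{triangleuniversel} (uniqueness of the PPRS triangular normal form realising a given unitary) to get $T_1=T_2$, hence $N_1=N_2$. To show $\interph{T_1}=\interph{T_2}$: both are $k\times k$ unitaries whose common submatrix (the rows/columns surviving after removing the $m'$ detector rows and $n'$ source columns) equals the same operator $\interps{N_1}=\interps{N_2}$ on the horizontal sector. The deleted source columns of $\interph{T_j}$ must be the unique unit vectors orthogonal to all surviving columns and compatible with the triangular shape; similarly for the deleted detector rows. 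Triangularity is the key: in a triangular (Reck-style) decomposition, the action on the ``topmost'' auxiliary wire is determined first and is essentially free, so one reconstructs $\interph{T_j}$ layer by layer from $\interps{N_j}$ together with the positions of the auxiliary wires. Thus $\interph{T_1}=\interph{T_2}$.

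The main obstacle I anticipate is the second step — showing the auxiliary dimensions $n',m'$ are uniquely determined by the semantics rather than merely bounded by it. This is a minimality/irreducibility argument about the normal form and will require carefully tracking which PPRS rules (and which clauses of the normal-form definition in \cref{defNFLOphotpres}) rule out ``redundant'' auxiliary wires — e.g. ensuring no source wire enters $T$ only through $0$-angled beam splitters and a phase before hitting a detector, and that the triangle has no all-zero border row/column adjacent to the auxiliary region. Once that rigidity is in hand, the reconstruction of $\interph{T_j}$ from $\interps{N_j}$ and the appeal to Proposition~\ref{triangleuniversel} are routine.
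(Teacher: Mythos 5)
You have misread the definition you are proving uniqueness for. By Definition~\ref{defNFLOphotpres}, a \emph{pure} normal form is precisely the case $n'=m'=0$: there are no vacuum-state sources and no detectors at all. The circuit is just the polarisation-encoding sandwich (polarising beam splitters and negations) around a PPRS triangular normal form $T$ on $2n$ wires. Consequently $\interps{N}$ is conjugate to $\interph{T}$ by the \emph{fixed} isomorphism $\mu:\mathbb C^{M_n}\to\mathbb C^{2n}$, $\ket{\V_k}\mapsto\ket{2k}$, $\ket{\H_k}\mapsto\ket{2k+1}$ (no rows or columns are deleted), so $\interps{N_1}=\interps{N_2}$ immediately gives $\interph{T_1}=\interph{T_2}$, and uniqueness of the PPRS triangular normal form (Theorem~\ref{thm:PPRSuniquenormalform}, equivalently Proposition~\ref{triangleuniversel}) yields $T_1=T_2$, hence $N_1=N_2$. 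That is the whole proof; your ``third step'' specialised to $n'=m'=0$ is essentially it.

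The bulk of your argument (steps one and two, and the row/column reconstruction in step three) addresses a different and in fact \emph{false} statement: uniqueness of general normal forms with auxiliary source/detector wires. When $n',m'>0$, the semantics only determines $\interph{T}$ up to $(I\oplus Q')\circ\interph{T}\circ(I\oplus Q)$ for arbitrary unitaries $Q,Q'$ on the auxiliary subspaces --- this is exactly Lemma~\ref{egaliteaunitairespres}, and the slack is genuine: it is why the completeness proof in Section~\ref{subsect:completeness} must insert extra triangular circuits $T_{\mathrm{in}},T_{\mathrm{out}}$ realising $Q$ and $Q'$ before it can compare the two circuits. Your hoped-for rigidity therefore cannot be established: the deleted columns span an $n'$-dimensional orthogonal complement whose orthonormal basis (and phases) is unconstrained by $\interps{N}$, and the PPRS rules act only on the polarisation-preserving part, with no rules involving sources or detectors, so they enforce no minimality of the auxiliary wires. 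Had the lemma been about general normal forms, no amount of care in your step two would close the gap; fortunately it is not, and the pure case needs none of this machinery.
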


\begin{proof}
Let $T_1$ (resp. $T_2$) be the \lopp-circuit associated with $N_1$ (resp $N_2$) as in \cref{fig:nf-lov-pr}. Notice that $\interph {T_i}\circ \mu = \mu \circ \interp{N_i}$ where $\mu: \mathbb C^{M_n}\to \mathbb C^{2n}$ is the isomorphism  $\ket{\V_k}\mapsto \ket{2k}$ and $\ket{\H_k}\mapsto \ket{2k+1}$. Thus $\interp{N_1} = \interp{N_2}$ implies $\interph{T_1}=\interph{T_2}$ so that the result
  follows from Theorem~\ref{thm:PPRSuniquenormalform}.
\end{proof}

\begin{lemma}\label{putinNF}
  For any circuit $D$ without vacuum state sources or detectors
  there exists a circuit in pure normal form
  $N$ such that $\lov\vdash D=N$.
\end{lemma}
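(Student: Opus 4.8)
I want to show that any $\lov$-circuit $D$ with no vacuum sources or detectors can be rewritten, using the $\lov$-calculus, into a circuit in pure normal form, i.e.\ a PPRS triangular normal form $T$ on $2n$ wires sandwiched between the canonical ``interleaving'' circuits that pair up the $\V$ and $\H$ modes of each spatial mode (the shape of \cref{fig:nf-lov-pr} with $n'=m'=0$). The natural strategy is: first reduce the question to the polarisation-preserving fragment by \emph{encoding} each of the six generators of $\lov$ as a $\lopp$-circuit acting on twice as many wires, then invoke the already-established normal-form theory for $\lopp$ (\cref{thm:PPRSuniquenormalform}).

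\textbf{Step 1: an encoding of generators.} For a circuit $D:n\to m$ without sources or detectors, I would exhibit, for each generator, a $\lov$-provable equality rewriting it into a piece consisting of: (i) a fixed ``splitter'' gadget turning each wire $k$ into a pair of wires carrying the $\V_k$ and $\H_k$ components, (ii) a polarisation-preserving circuit on the doubled wires, and (iii) the inverse ``merger'' gadget. Concretely: a phase shifter $\tikzfig{convtp-phase-shift-xs}$ on wire $k$ becomes a phase shifter on \emph{both} of the corresponding doubled wires; a wave plate $\tikzfig{pol-rot-xs}$ becomes a beam splitter of the same angle between the two doubled wires of mode $k$ (this is exactly why the single-photon semantics of the wave plate and of the beam splitter have the same shape, cf.\ the footnote to \cref{singlephotsemantics}); a polarising beam splitter $\tikzfig{beamsplitter-xs}$ becomes a swap of the two $\H$-wires (which, since we are in a PRO without swaps, must be emulated — here I would use \cref{pbsnnnn} together with \cref{pbspbs} and the beam-splitter emulation \cref{bsemulable}, or more directly realize the needed permutation inside $\lopp$); a beam splitter $\tikzfig{bs-xs}$ between modes $p$ and $q$ becomes two parallel beam splitters of the same angle, one between the two $\V$-wires and one between the two $\H$-wires. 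The identity and swap are handled by the PROP axioms. Each such local rewrite must be justified from \cref{axiomsLOphotpres}; the workhorses will be \cref{bsemulable}, \cref{pbsnnnn}, \cref{pbspbs}, \cref{pbsnpbsh}, \cref{halterevide}, and the commutation rules \cref{commutationphaserotpisur2,gphpropapbs}.

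\textbf{Step 2: assemble and normalise.} Applying Step~1 to every generator of $D$ and composing, I would use the PROP axioms and \cref{halterevide}/\cref{pbspbs}-type cancellations to slide every merger gadget rightward until it meets the next splitter gadget and they annihilate, leaving a single splitter at the far left, a single merger at the far right, and one big $\lopp$-circuit $C$ on $2n$ wires in between — this is precisely a circuit in pure normal form \emph{except} that $C$ is not yet in PPRS triangular normal form. Then apply \cref{thm:PPRSuniquenormalform} to rewrite $C$ to its normal form $T$; by \cref{soundnessrewriting} every PPRS step is $\lov$-provable, so $\lov\vdash D = N$ with $N$ the resulting pure normal form.

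\textbf{Main obstacle.} The delicate point is Step~1 for the polarising beam splitter together with Step~2: since $\lopp$ is a PRO without swaps, the ``swap of the two $\H$-wires'' encoding of $\tikzfig{beamsplitter-xs}$ is not literally available, and I must show that the composite splitter--(PP circuit)--merger picture is closed under composition, i.e.\ that when two encoded generators are composed the interior merger and splitter really do cancel to leave something again of the encoded shape. Getting the wire bookkeeping right — that the splitter gadget genuinely implements the isomorphism $\mu$ (resp.\ its analogue) from \cref{uniquenesspureNF}, and that the axioms suffice to cancel $\mu^{-1}\mu$ in the middle — is where the real work lies; once that is in place, the reduction to \cref{thm:PPRSuniquenormalform} is immediate.
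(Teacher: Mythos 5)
Your proposal follows essentially the same route as the paper's proof: the paper likewise encodes each generator by passing it through the splitter gadget (its Equations \eqref{passagephaseshift}--\eqref{passageswap} are exactly your Step~1 encodings, with the polarising beam splitter and swap becoming permutations of the doubled wires), eliminates the resulting swaps by replacing them with $\frac\pi2$-beam splitters via \cref{bspisur2swap}, and then concludes with \cref{thm:PPRSuniquenormalform} and \cref{soundnessrewriting}. The only organisational difference is that the paper runs Step~2 as an induction on the generators of $D$, pushing each new generator through the left splitter one at a time, rather than encoding everything first and cancelling the interior merger--splitter pairs.
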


\begin{proof}
The proof is given in
\cref{putinNFproof} 
\end{proof}

Completeness for circuits without vacuum state sources or detectors follows
directly from \cref{uniquenesspureNF,putinNF}:

\begin{proposition}\label{partialcompletenessLOphotpres}
  Given any two circuits $D_1$ and $D_2$ without any
  $\tikzfig{gene-0}$ or $\tikzfig{detector-0}$, if
  $\interps{D_1}=\interps{D_2}$ then $\lov\vdash D_1=D_2$.
\end{proposition}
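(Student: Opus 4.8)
The plan is to derive \cref{partialcompletenessLOphotpres} immediately from the two preceding lemmas, \cref{uniquenesspureNF} and \cref{putinNF}, exactly as the surrounding text announces. Given $D_1$ and $D_2$ built without any $\tikzfig{gene-0}$ or $\tikzfig{detector-0}$, I would first apply \cref{putinNF} to each: there exist circuits $N_1$ and $N_2$ in pure normal form with $\lov\vdash D_1 = N_1$ and $\lov\vdash D_2 = N_2$. By \cref{soundnessLOphotpres} (soundness), these equalities are preserved by the semantics, so $\interps{N_1} = \interps{D_1}$ and $\interps{N_2} = \interps{D_2}$.

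Next I would invoke the hypothesis $\interps{D_1} = \interps{D_2}$, which now gives $\interps{N_1} = \interps{N_2}$. Since $N_1$ and $N_2$ are both in pure normal form, \cref{uniquenesspureNF} applies and yields $N_1 = N_2$ as circuits (not merely up to the equational theory). Chaining the three equalities $\lov\vdash D_1 = N_1$, $N_1 = N_2$, and $\lov\vdash N_2 = D_2$, and using that $\lov\vdash\cdot=\cdot$ is a congruence containing syntactic equality, we conclude $\lov\vdash D_1 = D_2$.

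There is essentially no obstacle here: the entire content has been pushed into \cref{putinNF} (the constructive rewriting of an arbitrary source/detector-free circuit into pure normal form, whose proof is deferred to \cref{putinNFproof}) and into \cref{uniquenesspureNF} (which in turn rests on \cref{thm:PPRSuniquenormalform}, the uniqueness of the PPRS triangular normal form). The only point requiring minimal care is that \cref{putinNF} genuinely produces a circuit in \emph{pure} normal form — i.e.\ with $n'=m'=0$ in the notation of \cref{defNFLOphotpres} — which is exactly what makes \cref{uniquenesspureNF} applicable; this holds precisely because $D_1$ and $D_2$ contain no vacuum state sources or detectors, so no auxiliary wires are needed. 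Thus the proof is a short two-line composition of the two lemmas together with soundness.

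\begin{proof}
  By \cref{putinNF}, there are circuits $N_1$ and $N_2$ in pure normal form
  such that $\lov\vdash D_1 = N_1$ and $\lov\vdash D_2 = N_2$. By soundness
  (\cref{soundnessLOphotpres}), $\interps{N_1} = \interps{D_1} = \interps{D_2}
  = \interps{N_2}$. Hence, by \cref{uniquenesspureNF}, $N_1 = N_2$. Since
  $\lov\vdash\cdot=\cdot$ is a congruence, we obtain $\lov\vdash D_1 = N_1 =
  N_2 = D_2$.
\end{proof}
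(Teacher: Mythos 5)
Your proof is correct and is essentially identical to the paper's: both apply \cref{putinNF} to get pure normal forms $N_1,N_2$, use soundness (\cref{soundnessLOphotpres}) to transfer the semantic equality to them, invoke \cref{uniquenesspureNF} to get $N_1=N_2$, and conclude by transitivity. Nothing is missing.
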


\begin{proof}
  By \cref{putinNF}, there exist two circuits in pure normal form
  $N_1$ and $N_2$ such that $\lov\vdash D_1=N_1$ and
  $\lov\vdash D_2=N_2$. By \cref{soundnessLOphotpres}, one has
  $\interps{N_1}=\interps{D_1}=\interps{D_2}=\interps{N_2}$, so that
  by \cref{uniquenesspureNF}, $N_1=N_2$. The result follows by
  transitivity.
\end{proof}

\paragraph*{Proof of \cref{completenessLOphotpres}}

We now have the required material to to finish the proof of
\cref{completenessLOphotpres}.  Let $D_1,D_2:n\to m$ be any two
\lov-circuits such that $\interps{D_1}=\interps{D_2}$. By deformation,
we can write them as
\[\tikzfig{blocsDprime1expl}\quad\text{and}\quad\tikzfig{blocsDprime2expl}\]
where $D'_1,D'_2$ do not contain $\tikzfig{gene-0}$ or
$\tikzfig{detector-0}$. Up to using Equation \eqref{halterevide}, we
can assume that $n''=n'$. Since circuits without vacuum state sources and
detectors necessarily have the same number of input wires as of output
wires, this implies that $m''=m'$. By \cref{putinNF}, we can put
$D'_1$ and $D'_2$ in pure normal form. Then by using
\cref{absorptionphaseshiftleft,absorptionpolrotleft,absorptionpbsleft,absorptionphaseshiftright,absorptionpolrotright,absorptionpbsright},
we get two circuits in normal form
\[D_1^{\mathrm{NF}}=\tikzfig{NFblocs1majexpl}\text{ and }D_2^{\mathrm{NF}}=\tikzfig{NFblocs2majexpl}\]
with $T_1$ and $T_2$ in PPRS triangular normal form.

$\interp {D_1}=\interp {D_2}$ implies that $\pi\circ \interph{T_1} \circ \iota = \pi\circ \interph{T_2} \circ \iota$ where $\iota: \mathbb C^{2n}\to \mathbb C^{2n+n'}$ is the injection $\ket k\mapsto \ket k$ and $\pi : \mathbb C^{2m+m'}\to \mathbb C^{2m}$ is the projector s.t. $\pi\ket k =\ket k$ when $k<2m$ and $\pi\ket k=0$ otherwise. Thus there exists two unitaries $Q,Q'$ s.t. 
$\interph{T_2}=(I\oplus Q')\circ
\interph{T_1}\circ (I\oplus Q)$ (see \cref{egaliteaunitairespres} in Appendix \ref{egaliteaunitairespresproof}). 

By
\cref{triangleuniversel}, there exist two circuits $T_{\mathrm{in}}$
and $T_{\mathrm{out}}$ in PPRS triangular normal form such that $\interph{T_{\mathrm{in}}}=Q$ and $\interph{T_{\mathrm{out}}}=Q'$. By 
\cref{absorptionphaseshiftleft,absorptionphaseshiftright,absorptionbsleft,absorptionbsright}, we can make $T_{\mathrm{in}}$  and $T_{\mathrm{out}}$ appear, turning $D_1^{\mathrm{NF}}$ into
\[\tikzfig{NFblocs1majTTexpl}.\]
Since by construction, the middle part has the same single-photon semantics as $T_2$, by \cref{partialcompletenessLOphotpres}  
we can transform it into $T_2$ using the axioms of the $\lov$-calculus, which means transforming $D_1^{\mathrm{NF}}$ into $D_2^{\mathrm{NF}}$. The result follows by transitivity.
\qed

\section{Conclusion}
\label{conclusion}

In this paper, we presented the $\lov$-calculus, a graphical
language for LOQC capturing most of the components typically
considered in the Physics community for linear optical quantum
circuits. The language comes equipped with a sound and complete
semantics, and we discussed how it provides a unifying framework for
many of the existing approaches in the literature. We explained how
several existing results can be ported in the $\lov$ framework.

An obvious direction for future work is to extend the language to
allow for sources and detectors of a non-zero number of
photons. A more exploratory research avenue is to add support
for features such as squeezed states or continuous
variables.


\newpage
\appendix

\section{Proofs}

\subsection[2pi-Periodicity: Proof of Proposition \ref{2piperiodic}]{$2\pi$-Periodicity: Proof of Proposition \ref{2piperiodic}}\label{preuve2piperiodic}

We actually prove a stronger version of the $2\pi$-periodicity for the phase shifter:

\begin{equation}\label{phasemod2pieq}\begin{array}{rcl}\tikzfig{convtp-phase-shift}&=&\tikzfig{convtp-phase-shiftthetamod2pi}\end{array}\end{equation}

as follows:

\begin{longtable}{RCL}
\tikzfig{convtp-phase-shift}&\eqdeuxeqref{halterevide}{bs0}&\tikzfig{bs0phasephihbs0genedet}\\\\
&\eqeqref{Eulerscalaires}&\tikzfig{phasebsphasezerosphimod2pigenedet}\\\\
&\eqtroiseqref{phase0}{bs0}{halterevide}&\tikzfig{convtp-phase-shiftthetamod2pi}
\end{longtable}

Then, the equality of \cref{2piperiodic} follows straightforwardly:
\[\tikzfig{convtp-phase-shift}\ \eqeqref{phasemod2pieq}\ \tikzfig{convtp-phase-shiftthetamod2pi}\ \eqeqref{phasemod2pieq}\ \tikzfig{phasephiplus2pi}\]

To prove the $2\pi$-periodicity for the beam splitter, we proceed as follows:

\begin{longtable}{RCL}
\tikzfig{bs}&\eqdeuxeqref{bs0}{phase0}&\tikzfig{convtp-bsphase0hbs0}\\\\
&\eqeqref{Eulerbsphasebs}&\tikzfig{phasebsphasethetamodpipis}\quad\text{where $\varepsilon=\begin{cases}0&\text{if $\theta\bmod 2\pi\in[0,\pi)$}\\1&\text{if $\theta\bmod 2\pi\in[\pi,2\pi)$}\end{cases}$}\\\\
&\eqeqref{Eulerbsphasebs}&\tikzfig{bsthetaplus2piphase0hbs0}\\\\
&\eqdeuxeqref{bs0}{phase0}&\tikzfig{bsthetaplus2pi}
\end{longtable}

Finally, the $2\pi$-periodicity for the wave plate follows from that for the beam splitter as follows:

\begin{longtable}{RCL}
\tikzfig{pol-rot}&\eqtroiseqref{halterevide}{absorptionpolrotleft}{polrotsfrombs}&\tikzfig{pol-rot-NF}\\\\
&=&\tikzfig{pol-rotthetaplus2pi-NF}\\\\
&\eqtroiseqref{polrotsfrombs}{absorptionpolrotleft}{halterevide}&\tikzfig{pol-rotthetaplus2pi}
\end{longtable}

\begin{remark}\label{moduloborne}
Note that we could also prove the following stronger equations, for any $k\in\mathbb Z$, 
with the same sequence of rewrite steps, that is, in a bounded number of steps:

\[\tikzfig{convtp-phase-shift}\ =\ \tikzfig{phasephiplus2kpi}\]

\[\tikzfig{bs}\ =\ \tikzfig{bsthetaplus2kpi}\]

\[\tikzfig{pol-rot}\ =\ \tikzfig{pol-rotthetaplus2kpi}\]
\end{remark}

\subsection{Useful Consequences of the Axioms}
\label{usefulequationsproof}

\begin{lemma}
  The equations of Figure \ref{usefulequations} are consequences of
  the axioms of the $\lov$-calculus.
\end{lemma}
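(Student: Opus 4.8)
The plan is to prove Figure~\ref{usefulequations} equation by equation, each time exhibiting an explicit derivation from the axioms of Figure~\ref{axiomsLOphotpres}, the axioms of PROP, and the already-established $2\pi$-periodicity (Proposition~\ref{2piperiodic}). It is worth stressing that completeness is not yet available at this point, so no equation may be justified merely by checking that its two sides have the same single-photon semantics: every identity must come from a genuine syntactic rewriting. The derivations fall into three families.

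\emph{Phase shifters and beam splitters only.} The key identity to establish first is the two-mode Euler-type equation \eqref{Eulerbsphasebs}, which plays for $2\times 2$ unitaries the role that \eqref{Eulerscalaires} plays for $3\times 3$ ones. It is obtained by the very padding trick already used in the proof of Proposition~\ref{2piperiodic}: pad the two-wire circuit into a three-wire one using \eqref{halterevide} and \eqref{bs0}, apply the genuine three-mode axiom \eqref{Eulerscalaires}, and then strip the padding again with \eqref{phase0}, \eqref{bs0} and \eqref{halterevide}; the side-conditions on the resulting angles are exactly those supplied by Lemma~\ref{existuniqtriangle3modes} specialised to the padded circuit, which collapse to the conditions of Lemma~\ref{existuniqtriangle2modes}. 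Once \eqref{Eulerbsphasebs} is in hand, the remaining phase/beam-splitter identities, including the global-phase propagation \eqref{globalphasepropagationbs}, follow by short rewrites combining \eqref{phaseaddition}, \eqref{phase0}, \eqref{commutationphaserotpisur2} and \eqref{gphpropapbs}.

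\emph{Polarising components.} For identities involving wave plates, pola-negations and polarising beam splitters, I would first normalise each polarisation-sensitive generator: unfold the wave plates using the beam-splitter emulation axiom \eqref{bsemulable} and the defining abbreviation of the negation, and simplify the resulting strings of polarising beam splitters with \eqref{pbsnnnn}, \eqref{pbspbs}, \eqref{pbsnpbsh}, \eqref{commutationphaserotpisur2} and \eqref{gphpropapbs}. This reduces every mixed identity (in particular the one expressing a wave plate as a beam splitter sandwiched between two polarising beam splitters) to an identity between polarisation-preserving sub-circuits, which is then handled as in the previous family. \emph{Vacuum sources and detectors.} Identities featuring a vacuum source or a detector are proved by absorbing it through the rest of the circuit using \eqref{absorptionphaseshiftleft}--\eqref{absorptionpbsright} and \eqref{halterevide}, after which both sides collapse to the same residual circuit.

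The main obstacle is precisely that the natural proofs of several of these equations ``want'' to invoke uniqueness of a normal form, which is not yet licensed; the real content of the argument is therefore routing the two-mode cases through the single honest three-mode axiom \eqref{Eulerscalaires} by padding, and carefully tracking the attendant angle side-conditions coming from Lemmas~\ref{existuniqtriangle2modes} and~\ref{existuniqtriangle3modes}. Beyond that, the work is routine rewriting and case bookkeeping.
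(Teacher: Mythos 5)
Your overall strategy matches the paper's: the central step is indeed to derive the two-mode Euler equation \eqref{Eulerbsphasebs} by padding the two-wire circuit to three wires with \eqref{halterevide} and \eqref{bs0}, applying the genuine three-mode axiom \eqref{Eulerscalaires}, and stripping the padding with \eqref{phase0}, \eqref{bs0} and \eqref{halterevide}; the paper then obtains \eqref{additionbs}, \eqref{globalphasepropagationbs}, \eqref{bsswap}, and the $\pi/2$ beam splitter/swap identity essentially as you describe, and handles sources and detectors by absorption. You are also right that no step may appeal to semantics or to a not-yet-available normal form, so routing everything through \eqref{Eulerscalaires} is the honest core of the argument.

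The one place where your plan is too optimistic is \eqref{echangebspbs}, the commutation of a beam splitter past a polarising beam splitter on crossed wires. After unfolding the beam splitter via \eqref{bsemulable}, this does \emph{not} reduce to an identity between polarisation-preserving sub-circuits: what remains is a purely polarising-beam-splitter identity of bialgebra type, and the paper has to develop a small dedicated theory for it --- a ``square'' law and bialgebra laws for polarising beam splitters (Equations \eqref{carre}, \eqref{bigebrepbs}, \eqref{bigebrepbsnegs}) --- whose proofs rest essentially on the Yang--Baxter axiom \eqref{pbspbspbs} together with two naturality equations imported from the PBS-calculus. Your list of simplification axioms for the polarising fragment omits \eqref{pbspbspbs} altogether, and without it (or the bialgebra consequences it yields) the derivation of \eqref{echangebspbs} does not go through. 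The rest of your sketch, including \eqref{negneg} (a pure PBS-calculus computation) and the vacuum-source/detector absorptions, is routine in the way you claim.
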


\begin{figure}[tbp]
\renewcommand{\nspazer}{-0.2em}
\centering
\scalebox{.769}{\begin{minipage}{1.3\textwidth}
\begin{multicols}{2}

\begin{equation}\label{Eulerbsphasebs}\begin{array}{rcl}\tikzfig{bsphasebsalpha}&=&\tikzfig{phasebsphasebeta}\end{array}\end{equation}
\vspace{\nspazer}
\begin{equation}\label{additionbs}\begin{array}{rcl}\tikzfig{convtp-bsphi1bsphi2}&=&\tikzfig{convtp-bsphi1plusphi2}\end{array}\end{equation}
\vspace{\nspazer}

\begin{equation}\label{globalphasepropagationbs}\begin{array}{rcl}\tikzfig{convtp-thetathetabs}&=&\tikzfig{convtp-bsthetatheta}\end{array}\end{equation}
\vspace{\nspazer}
\begin{equation}\label{polrotsfrombs}\begin{array}{rcl}\tikzfig{polrotsurpolrot}&=&\tikzfig{pbsnbbsthetanbpbs}\end{array}\end{equation}
\vspace{\nspazer}
\begin{equation}\label{negneg}\begin{array}{rcl}\tikzfig{negneg}&=&\tikzfig{filcourt}\end{array}\end{equation}
\vspace{\nspazer}
\begin{equation}\label{commutationnegphaseshift}\begin{array}{rcl}\tikzfig{convtp-negphaseshift}&=&\tikzfig{convtp-phaseshiftneg}\end{array}\end{equation}
\vspace{\nspazer}
\begin{equation}\label{absorptionnegleft}\begin{array}{rcl}\tikzfig{gene0neg}&=&\tikzfig{gene-0}\end{array}\end{equation}
\vspace{\nspazer}
\begin{equation}\label{absorptionnegright}\begin{array}{rcl}\tikzfig{negdetector0}&=&\tikzfig{detector-0}\end{array}\end{equation}
\vspace{\nspazer}
\begin{equation}\label{absorptionbsleft}\begin{array}{rcl}\tikzfig{convtp-gene0bs}&=&\tikzfig{gene0surgene0}\end{array}\end{equation}
\vspace{\nspazer}
\begin{equation}\label{absorptionbsright}\begin{array}{rcl}\tikzfig{convtp-bsdetector0}&=&\tikzfig{detector0surdetector0}\end{array}\end{equation}
\vspace{\nspazer}
\begin{equation}\label{bspisur2swap}\begin{array}{rcl}\tikzfig{bspisur2}&=&\tikzfig{swappissur2}\end{array}\end{equation}
\vspace{\nspazer}
\begin{equation}\label{bsswap}\begin{array}{rcl}\tikzfig{convtp-bsswap}&=&\tikzfig{convtp-swapbs}\end{array}\end{equation}
\vspace{\nspazer}
\begin{equation}\label{pbsswap}\begin{array}{rcl}\tikzfig{beamsplitterswap-s}&=&\tikzfig{swapbeamsplitter-s}\end{array}\end{equation}
\vspace{\nspazer}
\begin{equation}\label{bsemulablevariante}\begin{array}{rcl}\tikzfig{bs}&=&\tikzfig{convtp-bsemulationalenvers}\end{array}\end{equation}
\vspace{\nspazer}
\begin{equation}\label{commutationnegsbs}\begin{array}{rcl}\tikzfig{convtp-negnegbs}&=&\tikzfig{convtp-bsnegneg}\end{array}\end{equation}
\vspace{\nspazer}
\begin{equation}\label{commutationnegpolrot}\begin{array}{rcl}\tikzfig{negpolrot}&=&\tikzfig{polrotneg}\end{array}\end{equation}
\vspace{\nspazer}
\end{multicols}
\begin{equation}\label{echangebspbs}\begin{array}{rcl}\tikzfig{convtp-bsetpbscroises}&=&\tikzfig{convtp-pbsetbscroises}\end{array}\end{equation}
\end{minipage}}
\caption{\label{fig:usefulcons} Useful consequences of the axioms of the
  $\lov$-calculus. In Equation \eqref{Eulerbsphasebs}, the angles on the
  left-hand side can take any value, and the right-hand side is given by
  \cref{existuniqtriangle2modes}. Explicit expressions of $\alpha_1$, $\beta_1$, $\beta_2$ and 
  $\beta_3$ in terms of $\theta_1$, $\theta_2$ and $\varphi_1$ are
  given in Appendix \ref{preuveexistuniqtriangle2modes}.
\label{usefulequations}}
\end{figure}

  To prove \cref{Eulerbsphasebs}, we have:
  \begin{longtable}{RCL}
  \tikzfig{bsphasebsalpha}&\eqtroiseqref{halterevide}{bs0}{phase0}&\tikzfig{bsyangbaxterpointeenbaszerosgenedet}\\\\
  &\eqeqref{Eulerscalaires}&\tikzfig{bsyangbaxterpointeenhautzerosgenedet}\\\\
  &\eqtroiseqref{phase0}{bs0}{halterevide}&\tikzfig{phasebsphasebeta}
  \end{longtable}

  To prove \cref{additionbs}, we have:
  \begin{longtable}{RCL}
  \tikzfig{convtp-bsphi1bsphi2}&\eqeqref{phase0}&\tikzfig{convtp-bsphi1phase0hbsphi2}\\\\
  &\eqeqref{Eulerbsphasebs}&\tikzfig{phasebsphasebeta}\\\\
  &\eqeqref{Eulerbsphasebs}&\tikzfig{convtp-bsphi1plusphi2phase0hbs0}\\\\
  &\eqdeuxeqref{bs0}{phase0}&\tikzfig{convtp-bsphi1plusphi2}
  \end{longtable}

  To prove \cref{negneg}, we have (cf. \cite{alex2020pbscalculus}, Appendix D):
  \begin{longtable}{RCL}\label{preuvenegneg}
  \tikzfig{negneg}&\eqdeuxeqref{halterevide}{pbspbs}&\tikzfig{negnegsimplification1}\\\\
  &\eqeqref{pbsnnnn}&\tikzfig{negnegsimplification2}\\\\
  &\eqeqref{pbsnpbsh}&\tikzfig{negnegsimplification3}\\\\
  &\eqeqref{pbsnpbsh}&\tikzfig{negnegsimplification4}\\\\
  &\eqeqref{pbspbs}&\tikzfig{negnegsimplification5}\\\\
  &\eqeqref{pbsnnnn}&\tikzfig{negnegsimplification6}\\\\
  &\eqdeuxeqref{pbspbs}{halterevide}&\tikzfig{filcourt}
  \end{longtable}
  
  Equation \eqref{commutationnegphaseshift} is a direct consequence of \cref{commutationphaserotpisur2,phaseaddition}.

  To prove \cref{globalphasepropagationbs}, we have:
  \begin{longtable}{RCL}
  \tikzfig{convtp-thetathetabs}&\eqdeuxeqref{halterevide}{bs0}&\tikzfig{convtp-bsyangbaxterpointeenbaszerosthetasgenedet}\\\\
  &\eqeqref{Eulerscalaires}&\tikzfig{bsyangbaxterpointeenhautzerosgenedetanglesoriginaux}\\\\
  &\eqeqref{Eulerscalaires}&\tikzfig{convtp-bsyangbaxterpointeenbaszerosthetasgenedet2}\\\\
  &\eqdeuxeqref{bs0}{halterevide}&\tikzfig{convtp-bsthetatheta}
  \end{longtable}

  To prove \cref{polrotsfrombs}, we have:
  \begin{longtable}{RCL}
  \tikzfig{polrotsurpolrot}&\eqdeuxeqref{pbspbs}{negneg}&\tikzfig{pbsnnbpbsprhprbpbsnnbpbs}\\\\
  &\eqeqref{bsemulable}&\tikzfig{pbsnbbsthetanbpbs}
  \end{longtable}
  
  Equation \eqref{absorptionnegleft} is a direct consequence of \cref{absorptionphaseshiftleft,absorptionpolrotleft}.
  
  Equation \eqref{absorptionnegright} is a direct consequence of \cref{absorptionphaseshiftright,absorptionpolrotright}.
  
  Equation \eqref{absorptionbsleft} is a direct consequence of \cref{bsemulable,absorptionnegleft,absorptionpbsleft,absorptionpolrotleft}.
  
  Equation \eqref{absorptionbsright} is a direct consequence of \cref{bsemulable,absorptionnegright,absorptionpbsright,absorptionpolrotright}.
  
  To prove \cref{bspisur2swap}, we have:
  \begin{longtable}{RCL}
  \tikzfig{bspisur2}&\eqeqref{bsemulable}&\tikzfig{bspisur2emulation}\\\\
  &\eqdeuxeqref{phase0}{phaseaddition}&\tikzfig{bspisur2emulationphasespisur2int}\\\\
  &=&\tikzfig{bspisur2emulationnegsphasespisur2int}\\\\
  &\eqdeuxeqref{gphpropapbs}{commutationnegphaseshift}&\tikzfig{bspisur2emulationnegsphasespisur2ext}\\\\
  &\eqeqref{negneg}&\tikzfig{bspisur2emulationnegsnnhphasespisur2ext}\\\\
  &\eqdeuxeqref{pbsnnnn}{pbspbs}&\tikzfig{nbswapnbpissur2}\\\\
  &\eqdeuxeqref{gphpropapbs}{commutationnegphaseshift}&\tikzfig{swappissur2}
  \end{longtable}
  
  To prove \cref{bsswap}, we have:
  \begin{longtable}{RCL}
  \tikzfig{convtp-bsswap}&\eqtroiseqref{phase0}{phaseaddition}{bspisur2swap}&\tikzfig{convtp-bsphibspisur2phasesmoinspisur2}\\\\
  &\eqtroiseqref{additionbs}{additionbs}{globalphasepropagationbs}&\tikzfig{convtp-bspisur2phasesmoinspisur2bsphi}\\\\
  &\eqtroiseqref{bspisur2swap}{phaseaddition}{phase0}&\tikzfig{convtp-swapbs}
  \end{longtable}
  
  Equation \eqref{pbsswap} is proved in \cite{alex2020pbscalculus} (Appendix B.2.2.1) as Equation (17).
  
  Equation \eqref{bsemulablevariante} is a direct consequence of \cref{bsemulable,bsswap,pbsswap}.
  
  To prove \cref{commutationnegsbs}, we have:
  \begin{longtable}{RCL}
  \tikzfig{convtp-negnegbs}&\eqeqref{bsemulable}&\tikzfig{convtp-negnegbsemulation}\\\\
  &\eqeqref{negneg}&\tikzfig{convtp-bsemulationmixtehb}\\\\
  &\eqeqref{negneg}&\tikzfig{convtp-bsemulationalenversnegneg}\\\\
  &\eqeqref{bsemulablevariante}&\tikzfig{convtp-bsnegneg}
  \end{longtable}
  
  To prove \cref{commutationnegpolrot}, we have:
  \begin{longtable}{RCL}
  \tikzfig{negpolrot}&\eqtroiseqref{halterevide}{absorptionpolrotleft}{polrotsfrombs}&\tikzfig{negpolrot1}\\\\
  &\eqdeuxeqref{absorptionnegleft}{negneg}&\tikzfig{negpolrot2}\\\\
  &\eqeqref{pbsnnnn}&\tikzfig{negpolrot3}\\\\
  &\eqeqref{bsswap}&\tikzfig{negpolrot4}\\\\
  &\eqdeuxeqref{pbsswap}{pbsnnnn}&\tikzfig{negpolrot5}\\\\
  &\eqdeuxeqref{negneg}{absorptionnegright}&\tikzfig{negpolrot6}\\\\
  &\eqtroiseqref{polrotsfrombs}{absorptionpolrotleft}{halterevide}&\tikzfig{polrotneg}
  \end{longtable}

  \textit{Proof of Equation \eqref{echangebspbs}.}
  To prove Equation \eqref{echangebspbs}, we need the following auxiliary equations, which are consequences of Equations \eqref{pbsnnnn}, \eqref{pbspbs}, \eqref{pbspbspbs} and \eqref{pbsnpbsh}:
  \begin{equation}\label{pbsdoubleneg}\tikzfig{pbsnhnb}=\tikzfig{swapnhnbpbs}\end{equation}
  \begin{equation}\label{pbsnegsbas}\tikzfig{beamsplitternnbas-s}=\tikzfig{swapnbpbsnh}\end{equation}
  \begin{equation}\label{naturalitebsswap}\tikzfig{bsxbspointeenbas-s}=\tikzfig{xbsbspointeenhaut-s}\end{equation}
  \begin{equation}\label{echangeswapsbas}\tikzfig{bsbsmonte-s}=\tikzfig{xbsbsxdescendbas-s}\end{equation}
  \begin{equation}\label{carre}\tikzfig{carrepbs-s}=\tikzfig{carrepbsswaps-s}\end{equation}
  \begin{equation}\label{bigebrepbs}\tikzfig{bigebrepbssansnegs}=\tikzfig{bigebresansnegsreduction5}\end{equation}
  \begin{equation}\label{bigebrepbsnegs}\tikzfig{bigebrepbsavecnegs}=\tikzfig{bigebrereduitedroite}\end{equation}
  \begin{equation}\label{bigebrepbsnegsswaps}\tikzfig{nhpbssurnbpbs}=\tikzfig{bigebrepbsavecnegsetswapsenplus}\end{equation}
  Equations \eqref{naturalitebsswap} and \eqref{echangeswapsbas} are proved in \cite{alex2020pbscalculus} (Appendix B.2.2.1) as Equations (25) and (24) respectively. Equations \eqref{pbsdoubleneg} and \eqref{pbsnegsbas} are direct consequences of Equations \eqref{pbsnnnn} and \eqref{negneg}.
  
  To prove Equation \eqref{carre}, we have:
  \begin{longtable}{RCL}
  \tikzfig{carrepbs-s}&\eqeqref{pbspbs}&\tikzfig{trianglehauttrianglebaspbs-s}\\\\
  &\eqeqref{pbspbspbs}&\tikzfig{trianglehauttrianglebaspbsswaps-s}\\\\
  &=&\tikzfig{carrepbsswaps-s}
  \end{longtable}
  to prove \cref{bigebrepbs}, we have:
  \begin{longtable}{RCL}
  \tikzfig{bigebrepbssansnegs}&\eqeqref{naturalitebsswap}&\tikzfig{bigebresansnegsreduction2}\\\\
  &\eqeqref{echangeswapsbas}&\tikzfig{bigebresansnegsreduction3}\\\\
  &=&\tikzfig{bigebresansnegsreduction4}\\\\
  &\eqeqref{carre}&\tikzfig{bigebresansnegsreduction5}
  \end{longtable}
  and to prove Equation \eqref{bigebrepbsnegs}, we have
  \begin{longtable}{RCL}
  \tikzfig{bigebrepbsavecnegs}&\eqeqref{negneg}&\tikzfig{bigebredoubleneghgbd}\\\\
  &\eqeqref{pbsdoubleneg}&\tikzfig{bigebrereduction1}\\\\

  &\eqeqref{bigebrepbs}&\tikzfig{bigebrereduction5}\\\\
  &=&\tikzfig{bigebrereduction6}\\\\
  &\eqeqref{pbsnegsbas}&\tikzfig{bigebrereduction7}\\\\
  &=&\tikzfig{bigebrereduitegauche}\\\\
  &=&\tikzfig{bigebrereduitedroite}
  \end{longtable}
  The last step is by mere deformation of the circuit, by exchanging the two PBS. To prove Equation \eqref{bigebrepbsnegsswaps}, we have:
  \begin{longtable}{RCL}
  \tikzfig{bigebrepbsavecnegsetswapsenplus}&\eqeqref{bigebrepbsnegs}&\tikzfig{bigebrepbsavecnegsetswapsenplusreduitegauche}\\\\
  &\eqeqref{negneg}&\tikzfig{nhpbssurnbpbs}
  \end{longtable}
  
  Now we can prove Equation \eqref{echangebspbs}:
  \begin{longtable}{RCL}
  \tikzfig{convtp-bsetpbscroises-s}&\eqdeuxeqref{bsemulable}{bsemulablevariante}&\tikzfig{convtp-bsetpbscroisesbigebre1}\\\\
  &\eqeqref{bigebrepbsnegs}&\tikzfig{convtp-bsetpbscroisesbigebre2}\\\\
  &\eqeqref{bigebrepbsnegsswaps}&\tikzfig{convtp-bsetpbscroisesbigebre3}\\\\
  &\eqdeuxeqref{commutationnegpolrot}{negneg}&\tikzfig{convtp-bsetpbscroisesbigebre4}\\\\
  &\eqdeuxeqref{bsemulable}{bsemulablevariante}&\tikzfig{convtp-bsetpbscroisesbigebre5}\\\\
  &=&\tikzfig{convtp-bsetpbscroisesbigebre6}\\\\
  &\eqeqref{pbsswap}&\tikzfig{convtp-pbsetbscroises-s}
  \end{longtable}

\subsection{Triangular to Rectangular Form}
\label{triangletorectangularproof}

If necessary, we add 0-angled beam splitters in the PPRS triangular normal form to obtain a triangular shape, as in \cref{triangleNF}.
Then, for example with 7 spatial modes, we proceed as follows:\footnote{Here we only show how the beam splitters move along the process. We interpret Equation \eqref{Eulerscalaires} as sliding one of the beam splitters through the two others while changing the parameters and adding some phase shifts. Before and after each move it may be necessary to 
manipulate the phase shifters with the help of Equations \eqref{phaseaddition}, 
\eqref{globalphasepropagationbs} and \eqref{phase0}. Beam splitters represented in red are just to be moved, and beam splitters represented in blue have just been moved.}
\begin{longtable}{RCL}
\tikzfig{triangleex1}&\to&\tikzfig{triangleex2}\\\\
&\to&\tikzfig{triangleex3}\\\\
\to\ \ \cdots&\to&\tikzfig{triangleex4}\\\\
&=&\tikzfig{triangleex6}\\\\
\to\ \ \cdots&\to&\tikzfig{triangleex7}\\\\
\to\ \ \cdots&\to&\tikzfig{triangleex8}\\\\
&=&\tikzfig{triangleex9}\\\\
\to\ \ \cdots&\to&\tikzfig{triangleex10}
\end{longtable}

This leads us to a rectangular form of \cite{Clements2016unitary} (see \cref{rectangleNF}).

\subsection{Existence and Uniqueness of the 3-Mode Triangular Form: Proof of Lemma \ref{existuniqtriangle3modes}}\label{preuveexistuniqtriangle3modes}

Let us consider such $\alpha_1,\alpha_2,\alpha_3,\beta_1,\beta_2,\beta_3\in[0,\pi)$ and $\beta_4,\beta_5,\beta_6\in[0,2\pi)$. We first prove that, assuming that they exist, their values are uniquely determined by $U$.

Let $U_1\coloneqq\interph{\tikzfig{filsurbeta1hbsalpha1}}\circ
U^\dag$,
$U_2\coloneqq\interph{\tikzfig{beta1alpha1beta2alpha2escalier}}\circ
U^\dag$ and
$U_3\coloneqq\interph{\tikzfig{bsyangbaxterpointeenhautsansphasesfinales}}\circ
U^\dag$, where $\interph{-}$ is defined in \cref{def:Hsemantics}.

By construction, $U_3=\begin{pmatrix}e^{-\ii\beta_4}&0&0\\0&e^{-\ii\beta_5}&0\\0&0&e^{-\ii\beta_6}\end{pmatrix}$, so that
\begin{eqnABC}\label{U2alphabeta}
U_2=\begin{pmatrix}e^{-\ii\beta_4}&0&0\\0&e^{-\ii(\beta_3+\beta_5)}\cos(\alpha_3)&-\ii e^{-\ii(\beta_3+\beta_6)}\sin(\alpha_3)\\0&-\ii e^{-\ii\beta_5}\sin(\alpha_3)&e^{-\ii\beta_6}\cos(\alpha_3)\end{pmatrix},
\end{eqnABC}
and $U_1=\interp{\tikzfig{beta2hbsalpha2surfil}}^\dag\circ U_2$. 
Since $\tikzfig{beta2hbsalpha2surfil}$ does not act on the last mode, this implies that $(U_1)_{2,0}=0$.\footnote{We denote by $M_{i,j}$ the entry of indices $(i,j)$ of a matrix $M$, the index of the first row and column being $0$.} That is, by definition of $U_1$, 
$\ii e^{\ii\beta_1}\sin(\alpha_1)U_{0,1}^\dag+\cos(\alpha_1)U_{0,2}^\dag=0$.

\begin{itemize}
\item If $U_{0,1},U_{0,2}\neq0$, then this equality implies that $\cos(\alpha_1)\neq0$ and $\sin(\alpha_1)\neq0$ (indeed, if $\cos(\alpha_1)=0$ then $\sin(\alpha_1)=\pm1$ and conversely, which in both cases prevents the equality from being satisfied). Hence, $\beta_1$ is the unique angle in $[0,\pi)$ such that $\frac{\ii e^{\ii\beta_1}U_{0,1}^\dag}{U_{0,2}^\dag}\in\R$, namely $\arg(U_{0,1})-\arg(U_{0,2})+\frac\pi2\bmod\pi$. Then $\alpha_1$ is the unique angle in $[0,\pi)\setminus\{\frac\pi2\}$ such that $\tan(\alpha_1)=-\frac{U_{0,2}^\dag}{\ii e^{\ii\beta_1}U_{0,1}^\dag}$.

\item If $U_{0,2}=0$ and $U_{0,1}\neq0$, then $\sin(\alpha_1)=0$, which means, since $\alpha_1\in[0,\pi)$, that $\alpha_1=0$. Due to the constraints on the angles, this implies that $\beta_1=0$ too.

\item If $U_{0,1}=0$ and $U_{0,2}\neq0$, then $\cos(\alpha_1)=0$, which means, since $\alpha_1\in[0,\pi)$, that $\alpha_1=\frac\pi2$. Due to the constraints on the angles, this implies that $\beta_1=0$ too.

\item If $U_{0,1}=U_{0,2}=0$, then since $U$ is unitary, it is of the form $U=\begin{pmatrix}e^{\ii\varphi}&0&0\\0&*&*\\0&*&*\end{pmatrix}$, where $*$ denotes any complex number. Then, regardless of $\alpha_1$ and $\beta_1$, $U_1$ is of the same form: $U_1=\begin{pmatrix}e^{\ii\varphi}&0&0\\0&*&*\\0&*&*\end{pmatrix}$. Consequently, $U_2=\begin{pmatrix}e^{\ii\varphi}\cos(\alpha_2)&*&*\\ \ii e^{\ii\varphi}\sin(\alpha_2)&*&*\\ 0&*&*\end{pmatrix}$. By \eqref{U2alphabeta}, this implies that $\sin(\alpha_2)=0$, which means, since $\alpha_2\in[0,\pi)$, that $\alpha_2=0$. Due to the constraints on the angles, this implies that $\alpha_1=\beta_1=\beta_2=0$ too.
\end{itemize}

\noindent Thus, $\alpha_1$ and $\beta_1$, and in turn $U_1$, are uniquely determined given $U$.

Since $(U_1)_{2,0}=0$, $U_1$ can be written as $\begin{pmatrix}(U_1)_{0,0}&*&*\\(U_1)_{1,0}&*&*\\0&*&*\end{pmatrix}$. By \eqref{U2alphabeta} we have $(U_2)_{1,0}=0$, that is, $\ii e^{\ii\beta_2}\sin(\alpha_2)(U_1)_{0,0}+\cos(\alpha_2)(U_1)_{1,0}=0$. Since $U_1$ is unitary, $|(U_1)_{0,0}|^2+|(U_1)_{1,0}|^2=1$, so that we cannot have $(U_1)_{0,0}=(U_1)_{1,0}=0$. The other cases are similar to those of $\alpha_1$ and $\beta_1$:

\begin{itemize}
\item If $(U_1)_{0,0},(U_1)_{1,0}\neq0$, then similarly, the equality implies that $\cos(\alpha_2)\neq0$ and $\sin(\alpha_2)\neq0$. Hence, $\beta_2$ is the unique angle in $[0,\pi)$ such that $\frac{\ii e^{\ii\beta_2}(U_1)_{0,0}}{(U_1)_{1,0}}\in\R$, namely $\arg((U_1)_{1,0})-\arg((U_1)_{0,0})+\frac\pi2\bmod\pi$. Then $\alpha_2$ is the unique angle in $[0,\pi)\setminus\{\frac\pi2\}$ such that $\tan(\alpha_2)=-\frac{(U_1)_{1,0}}{\ii e^{\ii\beta_2}(U_1)_{0,0}}$.

\item If $(U_1)_{1,0}=0$ and $(U_1)_{0,0}\neq0$, then $\sin(\alpha_2)=0$, which means, since $\alpha_2\in[0,\pi)$, that $\alpha_2=0$. Due to the constraints on the angles, this implies that $\beta_2=0$ too.

\item If $(U_1)_{0,0}=0$ and $(U_1)_{1,0}\neq0$, then $\cos(\alpha_2)=0$, which means, since $\alpha_2\in[0,\pi)$, that $\alpha_2=\frac\pi2$. Due to the constraints on the angles, this implies that $\beta_2=0$ too.
\end{itemize}

\noindent Thus, $\alpha_2$ and $\beta_2$, and in turn $U_2$, are also uniquely determined given $U$.

Furthermore, \eqref{U2alphabeta} implies that
\begin{itemize}
\item If $(U_2)_{1,1}, (U_2)_{2,1}\neq0$, then $\beta_3$ is the unique angle in $[0,\pi)$ such that $\frac{e^{\ii\beta_3}(U_2)_{1,1}}{\ii(U_2)_{2,1}}\in\R$, namely, $\arg((U_2)_{2,1})-\arg((U_2)_{1,1})+\frac\pi2\bmod\pi$, and $\alpha_3$ is the unique angle in $[0,\pi)$ such that $\tan(\alpha_3)=\frac{\ii(U_2)_{2,1}}{e^{\ii\beta_3}(U_2)_{1,1}}$.
\item If $(U_2)_{2,1}=0$ and $(U_2)_{1,1}\neq0$ then $\sin(\alpha_3)=0$, which means, since $\alpha_3\in[0,\pi)$, that $\alpha_3=0$. Due to the constraints on the angles, this implies that $\beta_3=0$ too.
\item If $(U_2)_{1,1}=0$ and $(U_2)_{2,1}\neq0$, then $\cos(\alpha_3)=0$, which means, since $\alpha_3\in[0,\pi)$, that $\alpha_3=\frac\pi2$. Due to the constraints on the angles, this implies that $\beta_3=0$ too.
\end{itemize}

\noindent Thus, $\alpha_3$ and $\beta_3$, and in turn $U_3$, are also uniquely determined given $U$.

Finally, since $U_3=\begin{pmatrix}e^{-\ii\beta_4}&0&0\\0&e^{-\ii\beta_5}&0\\0&0&e^{-\ii\beta_6}\end{pmatrix}$, we necessarily have $\beta_4=-\arg((U_3)_{0,0})$, $\beta_5=-\arg((U_3)_{1,1})$ and $\beta_6=-\arg((U_3)_{2,2})$. This finishes proving the uniqueness.

Conversely, it is easy to see that the unique possible values given above for $\alpha_1$, $\alpha_2$, $\alpha_3$, $\beta_1$, $\beta_2$, $\beta_3$, $\beta_4$, $\beta_5$ and $\beta_6$ are well defined for any unitary $U$ and satisfy the desired properties, which proves the existence.

\begin{remark}
It is possible to generalise this proof to extend the result to an arbitrary number of modes. 
This provides an alternative proof of 
\cref{triangleuniversel}.
\end{remark}

\subsection{Existence and Uniqueness of the 2-Mode Triangular Form: Proof of Lemma \ref{existuniqtriangle2modes}}\label{preuveexistuniqtriangle2modes}

Let us consider such $\beta_1,\alpha_1\in[0,\pi)$ and $\beta_2,\beta_3\in[0,2\pi)$. We have (with $\interph{-}$ defined in \cref{def:Hsemantics}):

\[U=\interph{\tikzfig{phasebsphasebeta}}=\begin{pmatrix}e^{\ii(\beta_1+\beta_2)}\cos(\alpha_1)&\ii e^{\ii\beta_2}\sin(\alpha_1)\\\ii e^{\ii(\beta_1+\beta_3)}\sin(\alpha_1)&e^{\ii\beta_3}\cos(\alpha_1)\end{pmatrix}\]

If $U$ has a null entry, the since it is unitary, it is either diagonal or anti-diagonal. If it is diagonal, then $\sin(\alpha_1)=0$, which, since $\alpha_1\in[0,\pi)$, implies that $\alpha_1=0$, which by the constraint on $\beta_1$ and $\alpha_1$, implies that $\beta_1=0$. Consequently, $\beta_2=\arg(U_{0,0})$ and $\beta_3=\arg(U_{1,1})$. If $U$ is anti-diagonal, then $\cos(\alpha_1)=0$, which, since $\alpha_1\in[0,\pi)$, implies that $\alpha_1=\frac\pi2$, which by the constraint on $\beta_1$ and $\alpha_1$, implies that $\beta_1=0$. Consequently, $\beta_2=
\arg(\frac{U_{0,1}}{\ii})$ and $\beta_3=
\arg(\frac{U_{1,0}}{\ii})$.

If $U$ has no null entry, since $UU^\dag=I$, we have $e^{\ii(\beta_1+\beta_2)}\cos(\alpha_1)U_{1,0}^\dag+\ii e^{\ii\beta_2}\sin(\alpha_1)U_{1,1}^\dag=0$. Hence, $\beta_1$ is the unique angle in $[0,\pi)$ such that $\frac{e^{\ii\beta_1}U_{1,0}^\dag}{\ii U_{1,1}^\dag}\in\R$, namely $\arg(U_{1,0})-\arg(U_{1,1})+\frac\pi2\bmod\pi$. Then $\alpha_1$ is the unique angle in $[0,\pi)$ such that $\tan(\alpha_1)=-\frac{e^{\ii\beta_1}U_{1,0}^\dag}{\ii U_{1,1}^\dag}$, and since $\alpha_1\in(0,\pi)$, we have $\sin(\alpha_1)>0$, so that $\beta_2=\arg{\frac{U_{0,1}}{\ii}}$ and $\beta_3=
\arg(\frac{U_{1,0}}{\ii e^{\ii\beta_1}})$.

Finally, it is easy to see that given any unitary $U$, the unique possible values given above for $\beta_1$, $\alpha_1$, $\beta_2$ and $\beta_3$ are well defined and satisfy the desired properties.

\subsection{Strong Normalisation: Proof of Lemma \ref{thm:strongnormalisation}}

\label{stronglynormalisingproof}

  To prove that the rewrite system is strongly normalising, given a circuit $D:n\to n$ composed only of phase shifters and beam splitters, let us consider the tuple $(a,b,c,d,e,f)$, where:
  \begin{itemize}
  \item $a$ is the number of beam splitters in $D$ with angle not in $[0,\pi)$
  \item $b$ is the number of beam splitters in $D$ with angle not in $[0,2\pi)$
  \item $c=\sum_{i=0}^{n-2}c(i)\times2^{n-i}$ where $c(i)$ is the number of beam splitters in $D$ 
that act on positions $i$ and $i+1$\alexandrecom{pas besoin de l'exponentiation}

  \item $d=(d_j)_{j\in\mathbb N}$ is the almost-zero sequence of integers such that for any $j\geq0$,
  \begin{itemize}
  \item $d_{2j}$ is the number of phase shifters $\tikzfig{convtp-phase-shift-sansfil}$ of depth $j$ that are in a pattern 
  $\tikzfig{convtp-phasehbs-chaine}$\medskip with either 
  $\theta
=\frac\pi2$, or 
  $\theta\in(0,\pi)\setminus\{\frac{\pi}{2}\}$ and there exists a subset $I$ of $\{1,...,k\}$ such that $(\varphi+\sum_
{i\in I}\varphi_i)\bmod 2\pi\notin[0,\pi)$
  \item $d_{2j+1}$ is the number of phase shifters $\tikzfig{convtp-phase-shift-sansfil}$ of depth $j$ that are in a pattern $\tikzfig{convtp-phasebbs-chaine}$\medskip
  \end{itemize}
  where the depth of a phase shifter $p$ in $D$ is defined as the maximal number of beam splitters that a photon starting from $p$ and going to the right can traverse before reaching an output port
  \item $e$ is the number of phase shifters in $D$
  \item $f$ is the number of phase shifters in $D$ with angle not in $[0,2\pi)$.
  \end{itemize}
  We consider the order on almost-zero sequences $(d_j)_{j\in\mathbb N}$ given, for two such sequences, by comparing first the indices of their respective last non-zero terms, then in case of equality, the terms with greatest index with different values. It is well known that this gives a well-order, more precisely of order type $\omega^\omega$. 
We consider the lexicographic order on the set of tuples $(a,b,c,d,e,f)$. Since it is the lexicographic order on a finite cartesian product of well-ordered sets, it is itself a well-order. 
 Hence, to prove that the rewrite system is strongly normalising, it suffices to prove that each of the rewrite rules strictly decreases 
  the tuple $(a,b,c,d,e,f)$.
  \begin{itemize}
  \item Rule \eqref{phasemod2pi} strictly decreases $f$ without increasing any component of the tuple.
  \item Rule \eqref{bsmod2pi} strictly decreases $b$ without increasing $a$.
  \item Rule \eqref{fusionphaseshifts} strictly decreases $e$, it does not change $a$, $b$  
or $c$ since it does not affect the beam splitters, and it does not increase $d$. Indeed, 
if after applying Rule \eqref{fusionphaseshifts}, $\tikzfig{convtp-phase-shift-1plus2-sansfil}$ is at the left of a sequence of phase shifters 
that contains a subsequence with sum of angles modulo $2\pi$ 
not in $[0,\pi)$, then this was already the case of $\tikzfig{convtp-phase-shifttheta1-sansfil}$ before applying Rule~\eqref{fusionphaseshifts}.
  \item Rule~\eqref{zerophaseshifts} does not increase $a$, $b$ 
or $c$ since it does not affects the beam splitters, it does not increase $d$ since it only removes a phase shifter, and it strictly decreases $e$.
  \item Rule~\eqref{zerobs} does not increase 
$a$ or $b$ since it only removes a beam splitter, and strictly decreases 
$c$.
  \item Rule \eqref{removebottomphase} does not change $a$, $b$ 
or $c$ since it does not affect the beam splitters, and it strictly decreases $d$. Indeed, it decreases by $1$ the number of phase shifters at the bottom left of a beam splitter at a given depth $j$, which decreases $d_{2j+1}$, and 
  it only adds phase shifters at the top left of this same beam splitter, which possibly increases only $d_{2j}$, and at the right of this beam splitter, that is, at depth
at most $j-1$, which can only change the $d_{j'}$ with $j'\leq 2j-1$.

  \item Rule \eqref{passagepisur2} does not change $a$, $b$ 
or $c$ since it does not affect the beam splitters, and it strictly decreases $d$. Indeed, it decreases by $1$ the number of 
phase shifters at the top left of a $\frac{\pi}{2}$-angled beam splitter at a given depth $j$, which decreases $d_{2j}$, and adds a phase shifter at the right of this same beam splitter, that is, at depth at most $j-1$, which can only change the $d_{j'}$ with $j'\leq 2j-1$.
  \item Rule \eqref{passagephasepi} does not change $a$, $b$ 
or $c$ since it affects the beam splitters only by changing the angle of one of them and keeps this angle in $(0,\pi)$ (and in particular, different from $0$), and it strictly decreases $d$. Indeed, for some beam splitter $\tikzfig{bs}$ with $\theta\in(0,\pi)\setminus\{\frac{\pi}{2}\}$, it decreases by at least $1$ the number of phase shifters $\tikzfig{convtp-phase-shift-sansfil}$ belonging to a pattern $\tikzfig{convtp-phasehbs-chaine}$ formed with this beam splitter such that for some $I\subseteq\{1,...,k\}$, $(\varphi+\sum_
{i\in I}\varphi_i)\bmod 2\pi\notin[0,\pi)$, and it does not turn $\theta$ into $\frac{\pi}{2}$. The affected phase shifter is at a given depth $j$, so that applying this rule decreases $d_{2j}$. Additionally, the application of Rule~\eqref{passagephasepi} adds a phase shifter at the right of the beam splitter, that is, at depth at most $j-1$, which can only change the $d_{j'}$ with $j'\leq 2j-1$.
  \item Rule \eqref{soustractionpi} strictly decreases $a$.
  \item Rule \eqref{glissadeEulerscalaires} strictly decreases $c$, and it does not increase $a$ or $b$ since it only outputs beam splitters with angle in $[0,\pi)$.

  \item Rule \eqref{fusionEulerbsphasebs} does not increase $a$ or $b$ since it can only output a beam splitter with angle in $[0,\pi)$, and

  it strictly decreases 
  $c$.
  \end{itemize}

\subsection{Proof of Local Confluence}
\label{app:proofnf123}

\begin{lemma}\label{lem:123normalform}
  For any polarisation-preserving \lov-circuit of size $n\in \{1,2,3\}$,
  \textup{PPRS} terminates to a unique normal form 
  with the shape shown in Figure~\ref{tab:PPRS-NF}.
\end{lemma}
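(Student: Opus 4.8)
The plan is to deduce the statement from three ingredients: strong normalisation of \textup{PPRS} (\cref{thm:strongnormalisation}); the fact that \textup{PPRS} preserves the $\interph{\cdot}$-semantics; and the existence and uniqueness of the $2$- and $3$-mode triangular forms (\cref{existuniqtriangle2modes,existuniqtriangle3modes}). Let $\T_1$ be the set of circuits consisting of a single phase shifter or the identity, $\T_2$ the set of circuits of the form of \cref{existuniqtriangle2modes}, and $\T_3$ that of \cref{existuniqtriangle3modes} --- these are the shapes collected in \cref{tab:PPRS-NF}. I would establish: (A) every circuit in $\T_n$ is irreducible under \textup{PPRS}; (B) every irreducible \lopp-circuit of size $n\le 3$ belongs to $\T_n$ (directly, without invoking \cref{lem:normalformarePPRS}, whose proof depends on the present lemma); and (C) if $D_1$ rewrites to $D_2$ under \textup{PPRS} then $\interph{D_1}=\interph{D_2}$. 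From these the lemma follows: any \lopp-circuit $C$ of size $n\le 3$ rewrites by \cref{thm:strongnormalisation} to an irreducible circuit, which lies in $\T_n$ by (B); and if $C$ rewrites to irreducible circuits $N$ and $N'$, then $N,N'\in\T_n$ and $\interph{N}=\interph{C}=\interph{N'}$ by (C), so $N=N'$ by uniqueness in \cref{existuniqtriangle2modes,existuniqtriangle3modes} (for $n=1$ this is immediate, as $\interph{\cdot}$ distinguishes the identity from a phase shifter of angle in $(0,2\pi)$ and distinguishes such phase shifters by their angle). Hence $C$ has a unique normal form, of the shape of \cref{tab:PPRS-NF}.

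Ingredient (C) is immediate: $D_1$ rewriting to $D_2$ gives $\lov\vdash D_1=D_2$ by \cref{soundnessrewriting}, hence $\interps{D_1}=\interps{D_2}$ by \cref{soundnessLOphotpres}, and since the map $\iota$ of \cref{def:Hsemantics} is injective this forces $\interph{D_1}=\interph{D_2}$.

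For (A) and (B) I would proceed by case analysis on the number and positions of the beam splitters, against the rules of \cref{rulestriangleform}. Any irreducible circuit has all beam splitter angles in $(0,\pi)$ by rules \eqref{zerobs}, \eqref{bsmod2pi} and \eqref{soustractionpi}, all phase shifter angles in $(0,2\pi)$ by \eqref{zerophaseshifts} and \eqref{phasemod2pi}, and no two phase shifters composed in sequence on one wire by \eqref{fusionphaseshifts}. For $n=1$ this is already the description of $\T_1$, and conversely no rule applies to a single phase shifter of angle in $(0,2\pi)$ nor to the empty circuit. For $n=2$: rule \eqref{removebottomphase} forbids a phase shifter feeding the lower input of a beam splitter, rule \eqref{fusionEulerbsphasebs} forbids two beam splitters separated only by an optional phase shifter on the upper wire, and rules \eqref{passagepisur2}, \eqref{passagephasepi} restrict a phase shifter on the upper input of a beam splitter; using these together with \eqref{fusionphaseshifts} to normalise any intervening phases, an irreducible $2$-mode circuit has at most one beam splitter, preceded by at most one phase shifter on its upper input (of angle forced into $[0,\pi)$, and absent if the beam splitter has angle $\tfrac\pi2$) and followed by at most one phase shifter on each output --- precisely the shape of \cref{existuniqtriangle2modes}. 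A direct inspection then confirms that no rule applies to a circuit of this shape, giving (A) for $n\le 2$.

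The case $n=3$ is the main obstacle and carries the bulk of the work. In addition to the observations above, one uses: rule \eqref{fusionEulerbsphasebs} to exclude two consecutive beam splitters acting on the same pair among the three wires; rule \eqref{glissadeEulerscalaires} (the Euler / Yang--Baxter slide) to rewrite any apex-down triple of beam splitters into the apex-up triangular ``staircase''; and rules \eqref{removebottomphase}, \eqref{passagepisur2}, \eqref{passagephasepi}, \eqref{fusionphaseshifts}, \eqref{zerophaseshifts} to control the positions, count and angle ranges of the phase shifters --- in particular to force the inner phases into $[0,\pi)$ and the side conditions $\alpha_i\in\{0,\tfrac\pi2\}\Rightarrow\beta_i=0$ and $\alpha_2=0\Rightarrow\alpha_1=0$. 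I would enumerate the possible arrangements of beam splitters on three wires, showing that every arrangement other than the triangular one of \cref{existuniqtriangle3modes} contains a redex, and that in the triangular arrangement the phase shifters must sit exactly as prescribed there; the converse --- irreducibility of every circuit of that shape --- is again a finite check against each rule. Together with (C) and the uniqueness of \cref{existuniqtriangle3modes}, this completes the argument; the resulting shapes are recorded in \cref{tab:PPRS-NF}.
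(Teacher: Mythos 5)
Your proposal is correct and follows essentially the same route as the paper's own proof: a case analysis against the rules of \textup{PPRS} showing that every irreducible \lopp-circuit on at most three modes has the shape of \cref{tab:PPRS-NF}, followed by uniqueness obtained from the soundness of the rewriting (\cref{soundnessrewriting,soundnessLOphotpres}) together with \cref{existuniqtriangle2modes,existuniqtriangle3modes}. The one step you gloss over, and which the paper states explicitly, is that a circuit of the starred shape in \cref{tab:PPRS-NF} admits a \emph{unique} re-insertion of $0$-angled phase shifters and beam splitters producing a circuit of the exact parametrised form of \cref{existuniqtriangle2modes} or \cref{existuniqtriangle3modes} with the side conditions satisfied; this bijection is what lets you transfer the uniqueness of those lemmas to the normal forms themselves.
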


\begin{figure}[t]
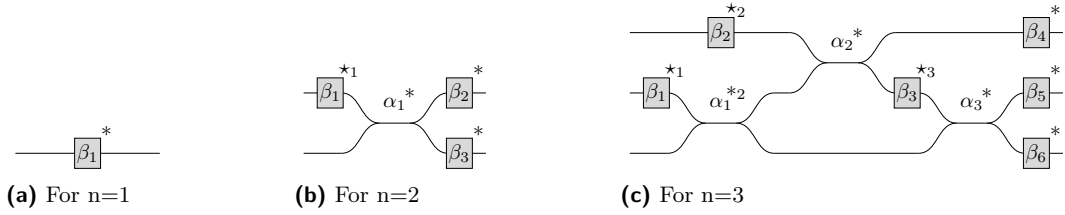

    \captionsetup[subfigure]{position=b}
    \subcaptionbox{\label{PPRS-NF1}For
      n=1}{\scalebox{.8}{\tikzfig{PPRS-NF1}}}
    \hfill
    \subcaptionbox{\label{PPRS-NF2}For
      n=2}{\scalebox{.8}{\tikzfig{PPRS-NF2}}}
    \hfill
    \subcaptionbox{\label{PPRS-NF3}For n=3}{\scalebox{.8}{\tikzfig{PPRS-NF3}}}
    \caption{Normal forms of PPRS for
      $n\in\{1,2,3\}$. $*$ means that the phase shifter or beam splitter is
      replaced by (an) identity wire(s) when the angle is
      zero. 
      $*_i$ represents the identity in the preceding case and also when $\alpha_i = 0$.  
      $\star_i$ represents the identity in the preceding two cases and also when $\alpha_i=\frac{\pi}{2}$. The $\alpha_i$ are in $[0,\pi)$ as well as the phases with a $\star_i$, all other phases are in $[0,2\pi)$.}\label{tab:PPRS-NF}
\end{figure}

\begin{proof}
First, we will show that the normal forms are necessarily of the form given in \cref{tab:PPRS-NF}.

In a normal form, because of Rule~\eqref{phasemod2pi}, all phase shifters have angle in $[0,2\pi)$; because of Rules \eqref{bsmod2pi} and \eqref{soustractionpi}, all beam splitters have angle in $[0,\pi)$; because of Rules \eqref{zerophaseshifts} and \eqref{zerobs}, there is no phase shifters or beam splitters with angle $0$; because of Rule~\eqref{passagepisur2}, there is no phase shifter at the top left of a $\frac\pi2$-angled beam splitter; and because of Rule~\eqref{passagephasepi}, all phase shifters at the top left of a beam splitter have angle in $[0,\pi)$. Thus, if a normal form is of one of the three forms given in \cref{tab:PPRS-NF}, then the conditions on the angles are satisfied.

Because of Rule~\eqref{fusionphaseshifts}, a normal form cannot contain two consecutive phase shifters. This implies in particular that the normal forms have the claimed shape for $n=1$.

Additionally, a normal form also cannot contain two consecutive beam splitters separated only by phase shifters (i.e., a pattern of the form $\tikzfig{bsphaseshbbsalpha-etoiles}$). Indeed, because of Rule~\eqref{removebottomphase}, in such a pattern in a normal form, there would not be a phase shifter on the bottom wire, so that the pattern would be reducible by Rule~\eqref{fusionEulerbsphasebs}. Thus, in the case where $n=2$, a normal form contains at most one beam splitter. Because of Rule~\eqref{removebottomphase}, such a beam splitter does not have any phase shifter at its bottom left, and because of Rule~\eqref{fusionphaseshifts}, there is at most one phase shifter on each of its other three ports. Because of Rule~\eqref{passagepisur2}, there is no phase shifter on the bottom right if the angle of the beam splitter is $\frac\pi2$. Moreover, if the normal form does not contain a beam splitter, then because of Rule~\eqref{fusionphaseshifts} there is at most one phase shifter on each of the two wires. Thus, in all cases, the normal forms have the claimed shape for $n=2$.

In the case where $n=3$, since there cannot be two consecutive beam splitters in a normal form, the beam spitters are alternatively between the top two wires and the bottom two wires. Because of Rules \eqref{removebottomphase} and \eqref{fusionphaseshifts}, if there is a beam splitter between the top two wires, then one betwen the bottom two wires, and then again one between the top two wires, those three beam splitter necessarily match the left-hand side of Rule~\eqref{glissadeEulerscalaires}. Hence, a normal form contains at most three beam splitters, at most one on the top and two on the bottom. Additionally, if the one on the top is not here, then because of Rule~\eqref{fusionEulerbsphasebs} there is only one on the bottom. Finally, Rules \eqref{removebottomphase} and \eqref{fusionphaseshifts} guarantee that the phase shifters are such that the normal form has the shape given in \cref{PPRS-NF3}.

  Now, we want to show that those are unique forms. 
One can check that given any circuit of the form given in \cref{PPRS-NF2} (resp. \cref{PPRS-NF3}), there is a unique way of adding $0$-angled phase shifters and beam splitters that gives us a circuit of the form of \cref{existuniqtriangle2modes} (resp. \cref{existuniqtriangle3modes}) with the conditions on the angles satisfied. Then the uniqueness for $n=2$ and $n=3$ follows from \cref{soundnessrewriting} and the uniqueness given by \cref{existuniqtriangle2modes} and \cref{existuniqtriangle3modes} respectively.
  For $n=1$, the proof is straightforward given \cref{soundnessrewriting}.
\end{proof}

\begin{figure}[tb]
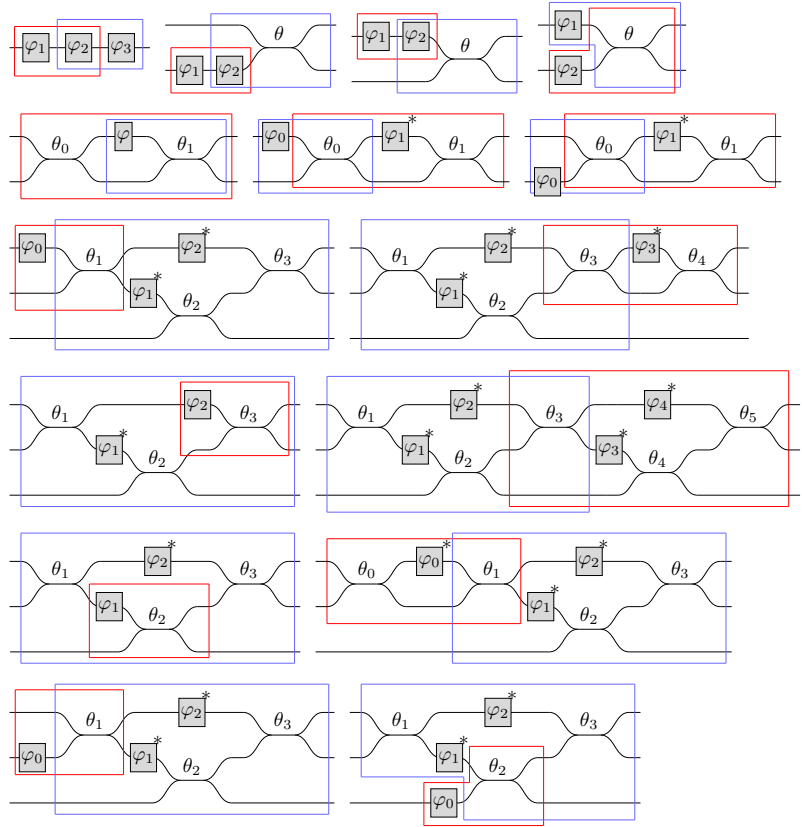

  \centering
  \scalebox{.75}{
    \begin{minipage}{\textwidth}
      \tikzfig{CP-1}\tikzfig{CP-2}\tikzfig{CP-3}\tikzfig{CP-12}\\[0.3cm]\tikzfig{CP-10}\tikzfig{CP-11}\tikzfig{CP-13}\\[0.3cm]\tikzfig{CP-4}\tikzfig{CP-9}\\[0.3cm]\tikzfig{CP-6}\tikzfig{CP-8}\\[0.3cm]\tikzfig{CP-5}\tikzfig{CP-7}\\[0.3cm]\tikzfig{CP-14}\tikzfig{CP-15}
    \end{minipage}}
  \caption{Non-trivial critical peaks. All angles are in $(0,2\pi)$ except for the
    two identities and the rules (\ref{phasemod2pi}) and
    (\ref{bsmod2pi}); additional
    constraints on the angles may occur for some rewrite rules to
    apply.} 
  \label{fig:peaks}
\end{figure}

\begin{lemma}\label{lem:locconfluent}
  \textup{PPRS} is locally confluent.
\end{lemma}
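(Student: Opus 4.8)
The plan is to carry out the standard critical-pair analysis: local confluence is equivalent to the joinability of every \emph{critical peak}, that is, every peak $D_1\leftarrow D\to D_2$ in which the two applied rewrite rules share at least one generator of $D$. First I would dispatch the non-critical peaks: if the two redexes share no generator then, up to deformation, they sit in $D$ either in parallel or in sequence, and the two rewrite steps commute, since any PPRS rule may be fired in an arbitrary one-hole context. Applying to $D_1$ the rule that was used to obtain $D_2$ (and symmetrically) yields a common circuit, so such peaks are joinable in one step each way.

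Next I would enumerate the critical peaks. Every left-hand side of \cref{rulestriangleform} is a \lopp-circuit on at most three wires made of a bounded-depth arrangement of beam splitters and phase shifters, so any overlap of two of them is, up to a common context, a peak between two \lopp-circuits on at most three wires. Running through the ways two such patterns can share generators---two phase shifters on a common wire segment; a phase shifter adjacent to a beam splitter on which one of Rules \eqref{bsmod2pi}, \eqref{removebottomphase}, \eqref{passagepisur2}, \eqref{passagephasepi} or \eqref{soustractionpi} also fires; the three-beam-splitter Euler / Yang-Baxter patterns of Rules \eqref{glissadeEulerscalaires} and \eqref{fusionEulerbsphasebs} overlapping with a phase-sliding rule or with each other; and so on---produces exactly the list of non-trivial critical peaks shown in \cref{fig:peaks}. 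Verifying that this enumeration is exhaustive, and that for each listed peak the side conditions on the angle ranges of the participating rules are simultaneously satisfiable (so that the peak really arises), is the main and most laborious part of the proof.

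Finally I would close each critical peak uniformly. Write such a peak as $\mathcal C[E_1]\leftarrow \mathcal C[E]\to \mathcal C[E_2]$, where $\mathcal C[\cdot]$ is a one-hole context, $E$ is a \lopp-circuit on $n\le 3$ wires containing both redexes, and $E\to E_1$, $E\to E_2$ in PPRS. By \cref{lem:123normalform}, PPRS is terminating on circuits of size $n\in\{1,2,3\}$ and every such circuit reduces to a \emph{unique} normal form; hence $E_1$ and $E_2$ both reduce to the same circuit $N$. As PPRS rewrites in context, $\mathcal C[E_1]\to^\ast\mathcal C[N]$ and $\mathcal C[E_2]\to^\ast\mathcal C[N]$, so $\mathcal C[N]$ is a common reduct. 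Together with the treatment of the non-critical peaks, this shows every peak is joinable, i.e.\ PPRS is locally confluent.
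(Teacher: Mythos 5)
Your overall strategy is the same as the paper's: dispatch disjoint peaks by commutation, argue that the non-trivial critical peaks live on at most three spatial modes, and close each such peak by invoking the unique normal forms of \cref{lem:123normalform} on circuits of size $n\le 3$. The closure step at the end is correct and is exactly how the paper concludes.

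The gap is in the middle. You write that since every left-hand side of \cref{rulestriangleform} lives on at most three wires, ``any overlap of two of them is, up to a common context, a peak between two \lopp-circuits on at most three wires.'' That does not follow: two patterns on up to three wires each that share only one or two wires can jointly span four or five wires, and then \cref{lem:123normalform} no longer applies to the subcircuit containing both redexes. This is precisely the point the paper has to work for. Its proof observes that two overlapping two-mode patterns always fit in three modes, so any critical peak on four or more modes must involve Rule~\eqref{glissadeEulerscalaires}; it then uses the specific structure of that rule's left-hand side (no phase shifter on the bottom mode, the positions of the three beam splitters, the incompatibility of a shared phase shifter being simultaneously at the top left and bottom left of a beam splitter as in Rule~\eqref{removebottomphase}) to show that no overlap spanning four or more modes can actually occur. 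Without this argument --- or some substitute for it --- your claim that all critical peaks are confined to three wires is an assertion, not a proof, and it is the crux of the lemma. Your closing remark that exhaustiveness is ``the main and most laborious part'' correctly identifies where the work lies, but the proposal as written treats that part as immediate when it is not.
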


\begin{proof}
First, note that the trivial critical pairs, in which the two rewrite rules are applied to disjoint patterns, can be closed in a straightforward way. 
Indeed, after doing any of the two transformations involved, the other one can be done independently, and the final result does not depend on which transformation was applied first.\newline

Additionally, the non-trivial critical pairs, shown in \cref{fig:peaks}, all involve at most three (spatial) modes.

Indeed, first, two overlapping patterns necessarily share at least one spatial mode, so that if they both involve at most two modes, then their union involves at most three modes. 
This implies that any non-trivial critical pair involving at least four modes must arise from at least one instance of Rule~\eqref{glissadeEulerscalaires}.

If the other rewrite step of the critical pair is not an instance of Rule~\eqref{glissadeEulerscalaires}, it would involve at most two modes.
For the union with the instance of Rule~\eqref{glissadeEulerscalaires} to involve four modes, it must involve exactly two modes and the two patterns must share only one mode. 
Consequently, their union is composed only of phase shifters. Since in the left-hand side of Rule~\eqref{glissadeEulerscalaires} there is at most one phase shifter on each mode, the union of the two patterns must be a single phase shifter.\footnote{Note that two patterns that overlap only by identities can be considered disjoint.} 
Moreover, for the two patterns to share not more than one mode, it must be on the top mode of one pattern and on the bottom mode of the other pattern. 
Namely, since the left-hand side of Rule~\eqref{glissadeEulerscalaires} does not have a phase shifter on the bottom mode, it must be on the top mode of the associated pattern and on the bottom mode of the other pattern. 
The only rule in which the left-hand side involves two modes and has a phase shifter on the bottom mode is Rule~\eqref{removebottomphase}, but this phase shifter cannot belong to a pattern corresponding to the left-hand side of Rule~\eqref{glissadeEulerscalaires} since it would be both on the top left and on the bottom left of a beam splitter at the same time, which is not possible.

Hence, any non-trivial critical pair involving at least four modes must arise from two instances of Rule~\eqref{glissadeEulerscalaires}. Since the left-hand side of this rule does not have a phase shifter on the bottom mode, the two patterns cannot share only one mode and must share at least two modes. Then since their union involves at least four modes, they share exactly two modes. These two modes are the top two of one pattern and the bottom two of the other pattern. There are two generators that act only on the bottom two modes in the left-hand side of Rule~\eqref{glissadeEulerscalaires} and therefore can be in the intersection of the two patterns: the phase shifter labeled with $\varphi_1$ if present, and the beam splitter labelled with $\theta_2$.  If the phase shifter labeled with $\varphi_1$ is in the intersection of the two patterns then it necessarily correspond to the phase shifter labeled with $\varphi_2$ in the other pattern, but this is not possible since on is on the top right of a beam splitter whereas the other is on the bottom right of a beam splitter. Therefore, the two patterns necessarily overlap by one beam splitter, which is the bottom one in one pattern and one of the two top ones in the other pattern. But in the left-hand side of Rule~\eqref{glissadeEulerscalaires}, the bottom beam splitter is connected by its top wires to the bottom of another beam splitter on each side, whereas each of the two top ones is connected by at least one of its top wires to the top wire of another beam splitter, hence the two patterns cannot overlap this way.

Thus, all non-trivial critical pairs involve at most three spatial modes. It follows from \cref{lem:123normalform} that any critical pair on at most three wires can be closed, which gives us the local confluence.
\end{proof}

\subsection{Irreducible Polarisation-Preserving \lov-Circuits Are PPRS Triangular Normal Forms}
\label{preuvelemnormalformarePPRS}

\newcommand{\bs}[2]{\operatorname{BS_{#1}^{#2}}}
\newcommand{\ps}[2]{\operatorname{PS_{#1}^{#2}}}
\newcommand{\bss}[2]{\operatorname{BS_{#1}^{*#2}}}
\newcommand{\pss}[2]{\operatorname{PS_{#1}^{*#2}}}
\newcommand{\psbss}[2]{\operatorname{PS-BS_{#1}^{*#2}}}

For rigour and clarity purposes, $\bss{}{i}$ (resp.\ $\pss{}{i}$) represents a beam splitter (resp.\ phase shifter) on the modes $(i,i+1)$ (resp.\ on the mode $i$) with a non-zero angle (resp.\ a non-zero phase), or the identity otherwise. 
We denote as $\psbss{}{i}$ a  $\bss{}{i}$ with a $\pss{}{i}$ on its top left which is also the identity when the angle of the beam splitter is in $\{0,\frac{\pi}{2}\}$.
$C^{(i,j)}$ represents a circuit $C$ whose every component is between the modes $i$ and $j$.

We'll prove by induction that for any $n \in \mathbb{N}^*$ any irreducible polarisation-preserving \lov-circuit of size $n$ is a PPRS triangular normal form, given in \cref{def:pprs-nf}.
 
First, we can show that an irreducible circuit satisfies all the following properties:
\begin{itemize}
  \item No phase or angle can be outside of $(0,2\pi)$. This follows from Rules \eqref{phasemod2pi},\eqref{bsmod2pi},\eqref{zerophaseshifts} and \eqref{zerobs}.
  \item There are no consecutive phase shifters. This follows from Rule \eqref{fusionphaseshifts}.
  \item Angles for beam splitters and phases on the top left of beam splitters are in $(0,\pi)$. This follows from Rules \eqref{passagephasepi} and \eqref{soustractionpi}.
  \item There is no phase on the top left of a $\frac{\pi}{2}$ beam splitter. This follows from Rule \eqref{passagepisur2}.
  \item There is no phase on the bottom left of a beam splitter. This follows from Rule \eqref{removebottomphase}.
  \item There aren't two consecutive beam splitters on the same modes. This follows from Rule \eqref{fusionEulerbsphasebs} and the fact that there is no phase on the bottom left of a beam splitter.
\end{itemize}

Second, we can show that an irreducible circuit of size $n\geq 3$ is a PPRS triangular normal form if:  

\begin{enumerate}[label=\textbf{(H)}]
  \item \label{Hypothesis} The circuit can be decomposed as $P^{(1,n-1)} \circ D$ where: \begin{enumerate}[label=\footnotesize{\textbf{(H\arabic*)}}] 
    \item \label{Hypothesis1}$D=\pss{}{0} \circ \psbss{}{0} \circ \psbss{}{1} \circ ... \circ \psbss{}{n-2}$   
    \\ $D$ satisfies $\left( bs_D(0)\leq 1 \right) \wedge \left(\forall k \in \llbracket 1;n-2 \rrbracket :  bs_D(k)\leq bs_D(k-1) \right)$
    and has the shape of an upper-left anti-diagonal of $\psbss{}{}$ with one $\pss{}{0}$ at the end.
    \item \label{Hypothesis2} $P^{(1,n-1)}$ is an irreducible circuit on the $n-1$ other lower modes satisfying \ref{Hypothesis}. That implies that $P^{(1,n-1)}$ is a PPRS triangular normal form.
  \end{enumerate}\end{enumerate}

We'll therefore prove by induction that any irreducible circuit satisfies for $n\geq 3$ satisfies \ref{Hypothesis}, and therefore, any irreducible circuit of size $n\geq 3$ is a PPRS triangular normal form.
The case for $n\in\{1,2\}$ follows directly from \cref{lem:123normalform}.

\begin{proof}

The case for $n=3$ is directly induced from \cref{lem:123normalform}.

Let's consider an irreducible polarisation-preserving \lov-circuit $C$ of size $n+1$.  

We can show that there is necessarily at most one upper beam splitter on the first two modes, i.e.\ $C$ can't contain the pattern $BS_1^{0} \circ \pss{}{0} \circ P^{(1,n)} \circ  BS_2^{0}$. 
If such a pattern exists in an irreducible circuit, with our induction hypothesis $P^{(1,n)}$ is a PPRS triangular normal form. If there is an upper beam splitter $BS_P^{1}$ in $P^{(1,n)}$, then we could use Rule~\eqref{glissadeEulerscalaires} with $BS_1^{0}$ and $BS_2^{0}$. If the upper is the identity, then $BS_1^{0}$ and $BS_1^{0}$ would be consecutive beam splitters.
Therefore that pattern can't exist, and we necessarily have one beam splitter at most on the first mode.

If there is no upper beam splitter, $bs(0)=0$ and the first mode is at most one phase shifter.
By the induction hypothesis, the $n$ other wires form a PPRS triangular normal form $P^{(1,n)}$ of size $n$. As $bs(0)=0$, $C = P^{(1,n)} \circ \pss{}{0}$. Therefore \ref{Hypothesis} is satisfied.

If there is an upper beam splitter, then $C = P_2^{(1,n)} \circ \pss{}{0} \circ \psbss{}{0} \circ P_1^{(1,n)}$ where $bs(0)=1$ and $P_{i\in\{1,2\}}^{(1,n)}$ is a PPRS triangular normal form by the induction hypothesis.
Therefore $P_1^{(1,n)}= P_3^{(2,n)} \circ D_1 $.  $P^{(1,n)} = P_2^{(1,n)} \circ P_3^{(2,n)}$ is an irreducible circuit of size $n$ and therefore satisfies \ref{Hypothesis}. By taking $D^{(0,n)}= \pss{}{0} \circ  \psbss{}{0} \circ D_1 $, we can check that $C= P^{(1,n)} \circ D^{(0,n)}$ and that $C$ satisfies \ref{Hypothesis}.

Therefore, for any irreducible circuit of size $n\geq 3$, \ref{Hypothesis} is satisfied.

\end{proof}

Thus, with \cref{lem:123normalform}, for any $n\in \mathbb{N}^*$, PPRS triangular normal forms of size $n$ are the only irreducible polarisation-preserving \lov-circuits of size $n$.

\subsection{Completeness of the Polarisation-Preserving Fragment}
\label{preuvethmPPRScompleteness}

Let $N:n\rightarrow n$ be a PPRS triangular normal form of size $n$, and $I_n$ the identity circuit with $n$ identity wires.
We'll prove by induction that for any $n\in \mathbb{N}^*$: $\interps{N}=\interps{I_n} \Rightarrow N=I_n$.
For $n=1$, $N$ is \cref{PPRS-NF1}. The phase is necessarily zero. Therefore $N=I_1$.

Let's consider the case for a circuit of size $n+1$.
As we can see in the \cref{triangleNF}, there is only at most one beam splitter interacting with the upper mode. Therefore, we know the output of the first spatial mode:
$\ket{c_0} \mapsto e^{i\beta_{0,0}+\gamma_0}\cos (\alpha_{0,0})\ket{c_0} $.
To have the identity operation, we necessarily have $\alpha_{0,0}$.
By definition of the normal form, $\beta_{0,0}=0$. Thus, $\gamma_0=0$, and the first wire is the identity, whereas the other $n$ wires form a PPRS triangular normal form of size $n$.
The induction hypothesis implies that the $n$ other wires are necessarily the identity, which concludes the proof of the induction step.

Therefore, for any $n\in \mathbb{N}^*$: $\interps{N}=\interps{I_n} \Rightarrow N=I_n$.

\subsection{Soundness of the Rewrite System with Respect to the Equational Theory: Proof of Lemma \ref{soundnessrewriting}}
\label{preuvesoundnessrewriting}

  It suffices to show that for each rule of \cref{rulestriangleform}, we can transform the left-hand side into the right-hand side using the axioms of the $\lov$-calculus.
  
  The soundness of 
Rules~\eqref{phasemod2pi} and \eqref{bsmod2pi}
is a direct consequence of 
  \cref{2piperiodic}. Note that in both cases, transforming the left-hand side into the right-hand side using the equations of \cref{axiomsLOphotpres} only requires a bounded number of rewrite steps (see \cref{moduloborne} in \cref{preuve2piperiodic}).

  The soundness of Rule \eqref{fusionphaseshifts} is a direct consequence of \cref{phaseaddition}.
  
  The soundness of Rule \eqref{zerophaseshifts} is a direct consequence of \cref{phase0}.
  
  The soundness of Rule \eqref{zerobs} is a direct consequence of \cref{bs0}.
  
  The soundness of Rule \eqref{removebottomphase} is a direct consequence of Equations \eqref{phase0}, \eqref{phaseaddition} and \eqref{globalphasepropagationbs}.

  The soundness of Rule \eqref{passagepisur2} is a direct consequence of \cref{bspisur2swap}, \eqref{phaseaddition} and \eqref{phase0}.
  
  To prove the soundness of Rule \eqref{passagephasepi}, if $\varphi\in[\pi,2\pi)$ and $\theta\in(0,\pi)$, then we have:
  \begin{longtable}{RCL}
  \tikzfig{convtp-phasehbs}&\eqeqref{bs0}&\tikzfig{convtp-bs0phasehbs-ind0}\\\\
  &\eqeqref{Eulerbsphasebs}&\tikzfig{convtp-thetamoinspihbspimoinsphiphase0hpib-ind0}\\\\
  &\eqeqref{phase0}&\tikzfig{convtp-thetamoinspihbspimoinsphipib}
  \end{longtable}
  
  To prove the soundness of Rule \eqref{soustractionpi}, if $\theta\in[\pi,2\pi)$ then we have:
  \begin{longtable}{RCL}
  \tikzfig{bs}&\eqdeuxeqref{bs0}{phase0}&\tikzfig{convtp-bsphase0hbs0}\\\\
  &\eqeqref{Eulerbsphasebs}&\tikzfig{convtp-phase0bsphimoinspipis}\\\\
  &\eqeqref{phase0}&\tikzfig{convtp-bsphimoinspipis}
  \end{longtable}
  
  The soundness of Rule \eqref{glissadeEulerscalaires} is a direct consequence of Equations \eqref{Eulerscalaires} and \eqref{phase0}.

  The soundness of Rule \eqref{fusionEulerbsphasebs} is a direct consequence of Equations \eqref{Eulerbsphasebs} and \eqref{phase0}.

\subsection{Proof of Lemma \ref{putinNF}}
\label{putinNFproof}
By \cref{thm:PPRSuniquenormalform} and \cref{soundnessrewriting}, it suffices to prove that any circuit $D:n\to n$ without $\tikzfig{gene-0}$ or $\tikzfig{detector-0}$ can be put in the form
\begin{eqnABC}\label{shapepureNFDprime}\tikzfig{NFblocsDprimeexpl}\end{eqnABC}
where $D'$ is a polarisation-preserving \lov-circuit, by using the equations of \cref{axiomsLOphotpres}.

  Note that any circuit $D:n\to n$ without $\tikzfig{gene-0}$ or $\tikzfig{detector-0}$ can be written as $d_k\circ\cdots\circ d_1$, with the $d_i$ of the form $I_\ell\oplus g\oplus I_{\ell'}$, where $I_\ell\coloneqq\tikzfig{Ilaccoladeg}$ (with $I_0=\tikzfig{diagrammevide-s}$), $g\in\{\tikzfig{bs-xs},\tikzfig{beamsplitter-xs},\tikzfig{convtp-phase-shift-xs},\tikzfig{pol-rot-xs},\tikzfig{swap-xs}\}$ 
  and $\ell+\ell'=n-1\text{ or }n-2$ depending on the type of $g$ (if $k=0$ then we take the product $d_k\circ\cdots\circ d_1$ to be the identity circuit $I_n$).
  
By Equations \eqref{negneg}, \eqref{pbspbs} and \eqref{halterevide}, $I_n$ is equivalent to the circuit of the form \eqref{shapepureNFDprime} with $D'=I_{2n}$, which is indeed polarisation-preserving.  It remains to prove that for any circuit $D$ of the form \eqref{shapepureNFDprime} with $D'$ polarisation-preserving, any $g\in\{\tikzfig{bs-xs},\tikzfig{beamsplitter-xs},\tikzfig{convtp-phase-shift-xs},\tikzfig{pol-rot-xs},\tikzfig{swap-xs}\}$ and any $\ell$, the circuit $D\circ(I_\ell\oplus g\oplus I_{\ell'})$ can be put again in the form \eqref{shapepureNFDprime} with $D'$ polarisation-preserving.

  \def\btikzfig#1{\scalebox{0.8}{\tikzfig{#1}}}

The generator $g$ passes through the left part of $D$ as follows:
  \begin{equation}\label{passagephaseshift}\btikzfig{convtp-phssplitpol}\ =\ \btikzfig{convtp-splitpolphs}\end{equation}
  
  \begin{equation}\label{passagepolrot}\btikzfig{polrotsplitpol}\ =\ \btikzfig{splitpolbstheta}\end{equation}
  
  \begin{equation}\label{passagepbs}\btikzfig{pbssplitpol}\ =\ \btikzfig{splitpol3swaps}\end{equation}
  
  \begin{equation}\label{passagebs}\btikzfig{convtp-bssplitpol}\ =\ \btikzfig{convtp-splitpolswapsbs}\end{equation}
  
  \begin{equation}\label{passageswap}\btikzfig{swapsplitpol}\ =\ \btikzfig{splitpol4swaps}\end{equation}
  
Then, since by Equations \eqref{bspisur2swap}, \eqref{phaseaddition} and \eqref{phase0}, $\btikzfig{swap}=\btikzfig{bspissur2}$, we can 
remove the swaps in order to turn the middle part into a
polarisation-preserving circuit 
which finishes the proof.\bigskip
 
  It remains to prove \cref{passagephaseshift,passagepolrot,passagepbs,passagebs,passageswap} using the axioms of the $\lov$-calculus.
 
  To prove \cref{passagephaseshift}, we have:
  \begin{longtable}{RCL}
  \btikzfig{convtp-phssplitpol}&\eqeqref{absorptionphaseshiftleft}&\btikzfig{convtp-phsdoublesplitpol}\\\\
  &\eqdeuxeqref{gphpropapbs}{commutationnegphaseshift}&\btikzfig{convtp-splitpolphs}
  \end{longtable}
  
  To prove \cref{passagepolrot}, we have:
  \begin{longtable}{RCL}
  \btikzfig{polrotsplitpol}&\eqeqref{absorptionpolrotleft}&\btikzfig{polrotdoublesplitpol}\\\\
  &\eqeqref{polrotsfrombs}&\btikzfig{splitpolbsthetanbpbspbsnb}\\\\
  &\eqdeuxeqref{pbspbs}{negneg}&\btikzfig{splitpolbstheta}
  \end{longtable}
  
  To prove \cref{passagepbs}, we have:
  \begin{longtable}{RCL}
  \btikzfig{pbssplitpol-s}&=&\btikzfig{passagepbs1}\\\\
  &\eqeqref{absorptionpbsleft}&\btikzfig{passagepbs2}\\\\
  &\eqeqref{bigebrepbs}&\btikzfig{passagepbs3}\\\\
  &=&\btikzfig{passagepbs4}\\\\
  &\eqeqref{pbsswap}&\btikzfig{splitpol3swaps-s}
  \end{longtable}
  
  To prove \cref{passagebs}, we have:
  \begin{longtable}{RCL}
  \btikzfig{convtp-bssplitpol-s}&=&\btikzfig{convtp-passagebs1}\\\\
  &\eqeqref{absorptionbsleft}&\btikzfig{convtp-passagebs2}\\\\
  &\eqeqref{echangebspbs}&\btikzfig{convtp-passagebs3}\\\\
  &\eqeqref{commutationnegsbs}&\btikzfig{convtp-passagebs4}\\\\
  &=&\btikzfig{convtp-splitpolswapsbs-s}
  \end{longtable}
  
  Equation \eqref{passageswap} is by mere deformation.

\subsection{Equality of Unitary Transformations on a Subspace}
\label{egaliteaunitairespresproof}

In this section we show that if two unitary maps coincide on some subspaces then they are equal up to unitaries on the orthogonal subspaces:

\begin{lemma}\label{egaliteaunitairespres}
  Let $\Hi$ be a Hilbert space, $U,U'\colon\Hi\to\Hi$ be two unitary
  maps, and let $\Hi=\Hi_0^{\mathrm{in}}\oplus\Hi_1^{\mathrm{in}}$ and
  $\Hi=\Hi_0^{\mathrm{out}}\oplus\Hi_1^{\mathrm{out}}$ be two
  decompositions of $\Hi$ into orthogonal subspaces. Given any
  subspace $\Hi'$ of $\Hi$, we denote by $\pi_{\Hi'}\colon\Hi\to\Hi'$
  the orthogonal projector on $\Hi'$ and by
  $\iota_{\Hi'}\colon\Hi'\to\Hi$ the canonical injection. If
  $\pi_{\Hi_0^{\mathrm{out}}}\circ
  U\circ\iota_{\Hi_0^{\mathrm{in}}}=\pi_{\Hi_0^{\mathrm{out}}}\circ
  U'\circ\iota_{\Hi_0^{\mathrm{in}}}$, then there exists two unitary
  maps
  $Q_{\mathrm{in}}\colon\Hi_1^{\mathrm{in}}\to\Hi_1^{\mathrm{in}}$ and
  $Q_{\mathrm{out}}\colon\Hi_1^{\mathrm{out}}\to\Hi_1^{\mathrm{out}}$
  such that
  $U'=(I\oplus Q_{\mathrm{out}})\circ U\circ (I\oplus
  Q_{\mathrm{in}})$.
\end{lemma}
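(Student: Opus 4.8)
The plan is to reduce the statement to block-matrix bookkeeping. Write $U$ and $U'$ in block form with respect to $\Hi=\Hi_0^{\mathrm{in}}\oplus\Hi_1^{\mathrm{in}}$ on the domain and $\Hi=\Hi_0^{\mathrm{out}}\oplus\Hi_1^{\mathrm{out}}$ on the codomain, say $U=\left(\begin{smallmatrix}A&B\\ C&D\end{smallmatrix}\right)$ and $U'=\left(\begin{smallmatrix}A&B'\\ C'&D'\end{smallmatrix}\right)$, where $A=\pi_{\Hi_0^{\mathrm{out}}}\circ U\circ\iota_{\Hi_0^{\mathrm{in}}}=\pi_{\Hi_0^{\mathrm{out}}}\circ U'\circ\iota_{\Hi_0^{\mathrm{in}}}$ is the common top-left block -- this equality being precisely the hypothesis -- and e.g.\ $C=\pi_{\Hi_1^{\mathrm{out}}}\circ U\circ\iota_{\Hi_0^{\mathrm{in}}}$, $C'=\pi_{\Hi_1^{\mathrm{out}}}\circ U'\circ\iota_{\Hi_0^{\mathrm{in}}}$. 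Since $I\oplus Q_{\mathrm{in}}$ and $I\oplus Q_{\mathrm{out}}$ are block-diagonal of the form $\left(\begin{smallmatrix}I&0\\ 0&Q\end{smallmatrix}\right)$ with respect to the domain resp.\ codomain decompositions, expanding $(I\oplus Q_{\mathrm{out}})\circ U\circ(I\oplus Q_{\mathrm{in}})$ in blocks shows that what must be proved is exactly: there exist unitaries $Q_{\mathrm{in}}$ on $\Hi_1^{\mathrm{in}}$ and $Q_{\mathrm{out}}$ on $\Hi_1^{\mathrm{out}}$ with $C'=Q_{\mathrm{out}}C$, $B'=BQ_{\mathrm{in}}$ and $D'=Q_{\mathrm{out}}DQ_{\mathrm{in}}$. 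I would first manufacture $Q_{\mathrm{out}}$ out of the first block column, then obtain $Q_{\mathrm{in}}$ automatically.

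For $Q_{\mathrm{out}}$: comparing the $(\Hi_0^{\mathrm{in}},\Hi_0^{\mathrm{in}})$-blocks of $U^\dagger U=I$ and $U'^\dagger U'=I$ yields $A^\dagger A+C^\dagger C=I=A^\dagger A+C'^\dagger C'$, hence $C^\dagger C=C'^\dagger C'$. I then invoke the elementary fact that two operators $C,C'\colon\K\to\L$ with $C^\dagger C=C'^\dagger C'$ are related by $C'=QC$ for some unitary $Q$ of $\L$: writing polar decompositions $C=W|C|$ and $C'=W'|C'|=W'|C|$ with partial isometries $W,W'$ of common initial space $(\ker C)^\perp=(\ker C')^\perp$ and final spaces $\overline{\operatorname{range}C}$, $\overline{\operatorname{range}C'}$, the map $W'W^\dagger$ is a unitary from $\overline{\operatorname{range}C}$ onto $\overline{\operatorname{range}C'}$; as these subspaces have equal dimension (both equal $\operatorname{rank}(C^\dagger C)=\operatorname{rank}(C'^\dagger C')$), so do their orthogonal complements in $\L$, so $W'W^\dagger$ extends to a unitary $Q$ of $\L$, and then $QC=W'W^\dagger W|C|=W'|C|=C'$. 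Taking $\L=\Hi_1^{\mathrm{out}}$ gives a unitary $Q_{\mathrm{out}}$ with $C'=Q_{\mathrm{out}}C$.

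For $Q_{\mathrm{in}}$: put $U''\coloneqq(I\oplus Q_{\mathrm{out}})^{-1}\circ U'$, which is unitary, and since $Q_{\mathrm{out}}^{-1}C'=C$ it has the same first block column $\left(\begin{smallmatrix}A\\ C\end{smallmatrix}\right)$ as $U$, i.e.\ $U$ and $U''$ agree on $\Hi_0^{\mathrm{in}}$. Then $V\coloneqq U^\dagger\circ U''$ is a unitary of $\Hi$ fixing $\Hi_0^{\mathrm{in}}$ pointwise; a unitary fixing a subspace pointwise sends its orthogonal complement into itself, so $V=I_{\Hi_0^{\mathrm{in}}}\oplus Q_{\mathrm{in}}$ for a unitary $Q_{\mathrm{in}}$ of $\Hi_1^{\mathrm{in}}$, whence $U''=U\circ(I\oplus Q_{\mathrm{in}})$ and finally $U'=(I\oplus Q_{\mathrm{out}})\circ U''=(I\oplus Q_{\mathrm{out}})\circ U\circ(I\oplus Q_{\mathrm{in}})$. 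The one genuinely non-routine point is the auxiliary fact used for $Q_{\mathrm{out}}$, and within it the dimension count that lets the partial isometry be completed to a full unitary of $\Hi_1^{\mathrm{out}}$; this is transparent in the intended application, where $\Hi=\mathbb C^{2n+n'}$ is finite-dimensional, though the rank identities make it go through for general $\Hi$. Everything else is block-matrix algebra and the defining relations of unitaries.
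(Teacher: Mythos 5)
Your proof is correct and follows essentially the same route as the paper's: both first deduce $C^\dagger C = C'^\dagger C'$ from unitarity and the shared top-left block, extend the resulting partial isometry between the ranges of $C$ and $C'$ to a unitary $Q_{\mathrm{out}}$ of $\Hi_1^{\mathrm{out}}$, and then obtain $Q_{\mathrm{in}}$ from the fact that $(I\oplus Q_{\mathrm{out}})^{-1}\circ U'$ agrees with $U$ on $\Hi_0^{\mathrm{in}}$ and hence differs from $U$ by a unitary fixing $\Hi_0^{\mathrm{in}}$ pointwise (the paper phrases both steps via explicit orthonormal bases rather than polar decompositions, but the maps constructed coincide). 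The caveat you flag about completing the partial isometry to a full unitary of $\Hi_1^{\mathrm{out}}$ is equally present in the paper's argument and is harmless in the finite-dimensional setting where the lemma is applied.
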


\begin{proof}

  We denote $U_0\coloneqq U\circ\iota_{\Hi_0^{\mathrm{in}}}$, $U_{00}\coloneqq\pi_{\Hi_0^{\mathrm{out}}}\circ U\circ\iota_{\Hi_0^{\mathrm{in}}}$ 
  and $U_{01}\coloneqq\pi_{\Hi_1^{\mathrm{out}}}\circ U\circ\iota_{\Hi_0^{\mathrm{in}}}$. We also define analogous notations for $U'$. Note that $U_0$ and $U'_0$ are isometries. For any $v,v'\in\Hi_0^{\mathrm{in}}$, one has $\scalprod{v}{v'}=\scalprod{U_{0}(v)}{U_{0}(v')}=\scalprod{U_{00}(v)}{U_{00}(v')}+\scalprod{U_{01}(v)}{U_{01}(v')}$. Similarly, $\scalprod{v}{v'}=\scalprod{U'_{00}(v)}{U'_{00}(v')}+\scalprod{U'_{01}(v)}{U'_{01}(v')}$. Since $U_{00}=U'_{00}$, this implies that \begin{equation}\label{memeprodscals}\forall v,v'\in\Hi_0^{\mathrm{in}},\quad\scalprod{U_{01}(v)}{U_{01}(v')}=\scalprod{U'_{01}(v)}{U'_{01}(v')}.\end{equation} Let $v_1,...,v_d\in\Hi_0^{\mathrm{in}}$ such that $U_{01}(v_1),...,U_{01}(v_d)$ is an orthonormal basis of the image $U_{01}(\Hi_0^{\mathrm{in}})$ of $U_{01}$. By \eqref{memeprodscals}, $U'_{01}(v_1),...,U'_{01}(v_d)$ is an orthonormal basis of $U'_{01}(\Hi_0^{\mathrm{in}})$. Let $Q_{\mathrm{out}}\colon\Hi_1^{\mathrm{out}}\to\Hi_1^{\mathrm{out}}$ be any unitary map such that $\forall i\in\{1,...,d\}, Q_{\mathrm{out}}(U_{01}(v_i))=U'_{01}(v_1)$. For any $v\in\Hi_0^{\mathrm{in}}$, there exist $\lambda_1,...,\lambda_d\in\CC$ such that $U_{01}(v)=\sum_{i=1}^d\lambda_iU_{01}(v_i)$. Then by \eqref{memeprodscals}, $\|U'_{01}(v)-\sum_{i=1}^d\lambda_iU'_{01}(v_i)\|=\|U_{01}(v)-\sum_{i=1}^d\lambda_iU_{01}(v_i)\|=0$, so that $U'_{01}(v)=\sum_{i=1}^d\lambda_iU'_{01}(v_i)$. Hence, $Q_{\mathrm{out}}(U_{01}(v))=U'_{01}(v)$. Thus, $U'_{01}=Q_{\mathrm{out}}\circ U_{01}$. Since $U_0=U_{00}+U_{01}$ and $U'_0=U'_{00}+U'_{01}$, this implies that $U'_0=(I\oplus Q_{\mathrm{out}})\circ U_0$.
  
  In other words $\forall v\in\Hi_0^{\mathrm{in}}, U'(v)=(I\oplus Q_{\mathrm{out}})\circ U(v)$. Hence, $U'(\Hi_0^{\mathrm{in}})=(I\oplus Q_{\mathrm{out}})\circ U(\Hi_0^{\mathrm{in}})$, so that since $\Hi_0^{\mathrm{in}}$ and $\Hi_1^{\mathrm{in}}$ are the orthogonal complement of each other and $U,U'$ are unitary, we also have $U'(\Hi_1^{\mathrm{in}})=(I\oplus Q_{\mathrm{out}})\circ U(\Hi_1^{\mathrm{in}})$ (which is the orthogonal complement of $U'(\Hi_0^{\mathrm{in}})$). Let $w_1,...,w_k$ be an orthonormal basis of $\Hi_1^{\mathrm{in}}$, and for every $i\in\{1,...,k\}$, let $w'_i\coloneqq {U'}^\dag\circ(I\oplus Q_{\mathrm{out}})\circ U(w_i)$. The fact that $U'(\Hi_1^{\mathrm{in}})=(I\oplus Q_{\mathrm{out}})\circ U(\Hi_1^{\mathrm{in}})$ implies that $w'_1,...,w'_k$ is also an orthonormal basis of $\Hi_1^{\mathrm{in}}$. Let $Q_{\mathrm{in}}\colon\Hi_1^{\mathrm{in}}\to\Hi_1^{\mathrm{in}}$ be the unique unitary map such that for all $i\in\{1,...,k\}$, $Q_{\mathrm{in}}(w'_i)=w_i$. Then $U'=(I\oplus Q_{\mathrm{out}})\circ U\circ (I\oplus Q_{\mathrm{in}})$, as desired.
 \end{proof}

\subsection{Universality of \lov-Circuits}

\label{appendix:universality}

Let $U:\mathbb C^{M_n}\to \mathbb C^{M_m}$ be a sub-unitary map i.e. a map $U$ s.t. $U^\dagger U\sqsubseteq I_n$. We show in the following how to construct a \lov-circuit $C$ s.t. $\interp C = U$. 
First notice that $V: \mathbb C^{2n}\to \mathbb C^{2m} =  \mu_m\circ U\circ \mu_n^{\dagger}$  is also a sub-unitary map, where $\mu_n: C^{M_{n}}\to \mathbb C^{2n}$ is such that $\mu \ket{\V_k}= \ket {2k}$ and $\mu \ket{\H_k}= \ket {2k+1}$.

Since $I_n-V^\dagger V$ is semi-definite positive there exists $A: \mathbb C^{2n}\to \mathbb C^{\ell}$ s.t. $A^\dagger A =  I_n-V^\dagger V$. As a consequence the matrix $W: \mathbb C^{2n}\to \mathbb C^{2m+\ell} =\left(\begin{array}{c}V\\A\end{array}\right)$ is an isometry since $W^\dagger W = V^\dagger V+A^\dagger A = I_{2n}$. $W$ can be turned into a unitary matrix by adding columns to $W$, i.e. $\exists B,D$ s.t. $U':\mathbb C^{2m+\ell}\to \mathbb C^{{2m+\ell}}  = \left(\begin{array}{cc}V&A\\B&D\end{array}\right)$ is unitary. By \cref{triangleuniversel}, there exists a \lopp-circuit $T$ s.t. $\interph{T} = U'$. Let $C$ be the following \lov-circuit: 
\[\tikzfig{NFblocs1majexpl-univ}\]
By construction, $\interp C = U$. \qed
\section{Other Valid Equations}

\begin{figure}[hbp]
\centering
\scalebox{.8333}{\begin{minipage}{1.2\textwidth}
  \begin{multicols}{2}
  \begin{equation}\label{polrot0}\begin{array}{rcl}\tikzfig{pol-rot0}&=&\tikzfig{filcourt}\end{array}\end{equation}
  \vspace{\nspazer}
  \begin{equation}\label{phasepi}\begin{array}{rcl}\tikzfig{pol-rotpi}&=&\tikzfig{phase-shiftpi}\end{array}\end{equation}
  \vspace{\nspazer}
  \begin{equation}\label{bspi}\begin{array}{rcl}\tikzfig{bspi}&=&\tikzfig{pisurpi}\end{array}\end{equation}
  \vspace{\nspazer}
  \begin{equation}\label{pibsmoinsphi}\begin{array}{rcl}\tikzfig{convtp-pibbs}&=&\tikzfig{convtp-bspibmoinsphi}\end{array}\end{equation}
  \vspace{\nspazer}
  \begin{equation}\label{commutationphasepolrot}\begin{array}{rcl}\tikzfig{convtp-phasetheta1polrottheta2}&=&\tikzfig{convtp-polrottheta2phasetheta1}\end{array}\end{equation}
  \vspace{\nspazer}
  \begin{equation}\label{additionpolrot}\begin{array}{rcl}\tikzfig{polrottheta1polrottheta2}&=&\tikzfig{polrottheta1plustheta2}\end{array}\end{equation}
  \vspace{\nspazer}
  \begin{equation}\label{commutationpolrotbs}\begin{array}{rcl}\tikzfig{convtp-polrotsbs}&=&\tikzfig{convtp-bspolrots}\end{array}\end{equation}
  \vspace{\nspazer}
  \begin{equation}\label{commutationbspbs}\begin{array}{rcl}\tikzfig{convtp-bspbs}&=&\tikzfig{convtp-pbsbs}\end{array}\end{equation}
  \vspace{\nspazer}
  \end{multicols}
\begin{equation}\label{decompositionpbsavecHadamard}\begin{array}{rcl}\tikzfig{beamsplitter}&=&\tikzfig{pbsdecompHadamard}\end{array}\end{equation}
  \vspace{\nspazer}
\begin{equation}\label{phasecontrolsansancillas}\begin{array}{rcl}\tikzfig{schemaphasecontrolsansancillashaut}&=&\tikzfig{phasecontrolparpbs2thetasurfil}\end{array}\end{equation}
\end{minipage}}
  \caption{Interesting consequences of the axioms of the $\lov$-calculus.\label{interestingequations}}
  \end{figure}

\end{document}